\def\BibTeX{{\rm B\kern-.05em{\sc i\kern-.025em b}\kern-.08em
    T\kern-.1667em\lower.7ex\hbox{E}\kern-.125emX}}
\DeclareSIUnit[]{\pu}{p.u.}
\DeclareSIUnit[]{\VA}{VA}
\DeclareSymbolFont{bbold}{U}{bbold}{m}{n}
\DeclareSymbolFontAlphabet{\mathbbold}{bbold}
\newcommand{\diag}[1]{\ensuremath{\mathrm{diag}(#1)}}
\DeclarePairedDelimiterX\Set[2]{\lbrace}{\rbrace}%
{ #1 \,\delimsize| \,\mathopen{} #2 }
\newcommand{\real}[0]{\mathbb R}
\newcommand*\circled[1]{\tikz[baseline=(char.base)]{\node[shape=circle,draw,inner sep=0.05pt] (char) {#1};}}
\newtheoremstyle{bfnote}%
{}{}%
{\itshape}{}%
{\bfseries}{.}%
{ }%
{\thmname{#1}\thmnumber{ #2}\thmnote{ (#3)}}
\theoremstyle{bfnote}
\newtheorem{thm}{Theorem}
\newtheorem{rem}{Remark}
\newtheorem{lem}{Lemma}
\newtheorem{ass}{Assumption}
\newtheorem{definition}{Definition}
\newtheorem{prop}{Proposition}
\newtheoremstyle{bfnote}%
{}{}%
{\itshape}{}%
{\bfseries}{.}%
{ }%
{\thmname{#1}\thmnumber{ #2}\thmnote{ (#3)}}
\theoremstyle{bfnote}
\newtheorem{dyn-g}{Generator Dynamics}
\newtheorem{dyn-i}{Inverter Dynamics}
\setlist[enumerate]{leftmargin=*}
\setlist[itemize]{leftmargin=*}
\begin{document}
\title{Frequency Shaping Control for Oscillation Damping in Weakly-Connected Power Network: A Root Locus Method}

\author{Yan Jiang, Wei Chen, Zhaomin Lyu, Xunning Zhang, Dan Wang, and Shinji~Hara,~\IEEEmembership{Fellow,~IEEE}
\thanks{This work was supported in part by CUHKSZ University Development Fund, in part by National Natural Science Foundation of China under Grants 62573006 and 72131001. (\em Corresponding author: Dan Wang.)}
\thanks{Y. Jiang, Z. Lyu, and X. Zhang are with the School of Science and Engineering, The Chinese University of Hong Kong, Shenzhen, 518172, CHN. Emails: {\tt \{yjiang,zhaominlyu,xunningzhang\}@cuhk.edu.cn}.}
\thanks{W. Chen is with the School of Advanced Manufacturing and Robotics \& State Key Laboratory for Turbulence and Complex Systems, Peking University, Beijing, 100871, CHN. Email: {\tt w.chen@pku.edu.cn}.}
\thanks{D. Wang is with the School of Robotics and Automation, Nanjing University, Suzhou, 215163, CHN. Email: {\tt danwang@nju.edu.cn}.}
\thanks{S. Hara is with the Supercomputing Research Center, Institute of Integrated Research, Institute of Science Tokyo, 
Tokyo, 152-8550, JPN. Email: {\tt shinjihara5202@gmail.com}.}
}

\markboth{Journal of \LaTeX\ Class Files,~Vol.~18, No.~9, September~2020}%
{How to Use the IEEEtran \LaTeX \ Templates}

\maketitle

\begin{abstract}
Frequency control following a contingency event is of vital concern in power
system operations. Leveraging inverter-based resources, it is not hard to shape the center of inertia (COI) frequency nicely. However, under weak grid conditions, it becomes insufficient to solely shape the COI frequency since this aggregate signal fails to reveal the inter-area oscillations therein which may lead to equipment damage or extreme failures. 
To damp these oscillations, notable work typically relies on running dynamical simulations or solving optimization problems for controller design, which provides limited insight. There are some recent efforts on uncovering the role of network spectrum
and system parameters in spatiotemporal dynamics
of oscillations through eigen-structure, but oscillation damping control tuning remains nontrivial. In this manuscript, we advocate for foolproof fine-tuning rules for \emph{frequency shaping control} (FS) based on
a systematic analysis of damping ratio and decay rate of inter-area oscillations to simultaneously meet specified metrics for frequency security and oscillatory
stability. To this end, building on a modal decomposition, 
we simplify the oscillation damping problem into a pole-placement task for a set of scalar subsystems, which can be efficiently solved by only investigating the root locus of a scalar subsystem associated with the main mode, while FS inherently guarantees a Nadir-less COI frequency response. Through our proposed root-locus-based oscillatory stability analysis, we derive closed-form expressions for the minimum damping ratio and decay rate among inter-area oscillations in terms of networked system and control parameters under
FS. Specifically, this helps unveil that the minimum damping ratio is determined by the largest eigenvalue of the scaled network Laplacian. Moreover, we propose useful tuning guidelines for FS which need only simple calculations or visualized tuning to not only shape the
COI frequency into a first-order response that converges
to a steady-state value within the allowed range but also
ensure a satisfactory damping ratio and decay rate of inter-area oscillations following disturbances. As for the common virtual inertia control (VI), although similar oscillatory stability analysis becomes intractable, one can still glean some insights via the root locus method. Numerical simulations validate the proposed tuning for FS as well as the superiority of FS over VI in exponential convergence rate.
\end{abstract}

\begin{IEEEkeywords}
Damping ratio, decay rate, frequency control, oscillatory stability, root locus.
\end{IEEEkeywords}

\section{Introduction}\label{sec:intro}

\IEEEPARstart{F}{requency} security following major power imbalances is a
critical concern in power system operations. Especially as conventional synchronous generators are gradually replaced by inverter-interfaced resources for lower carbon footprints, frequency becomes more volatile due to the loss of system inertia~\cite{milano2018}.    However, the fact that a power system usually has a variety of highly-coupled interaction among individual buses makes it nontrivial to analyze or tune the performance of post-contingency frequency responses. One way to reduce the size of the problem is to simply study the center of inertia (COI) frequency, which is the inertia-weighted average of the frequency signals at individual buses~\cite{milano2017rotor,azizi2020local}. The COI frequency serves as a good proxy for system-wide frequency dynamics when all buses exhibit relatively coherent behavior~\cite{Min2021lcss}. By smartly designing inverter-based control, it is not hard to shape the COI frequency dynamics following a sudden power imbalance into a first-order one that naturally has no Nadir, while at the same time keeping the steady-state frequency deviation within pre-specified limits~\cite{jiangtps2021,jiang2021lcss}. This can be achieved either by properly tuning the widely adopted virtual inertia control (VI)~\cite{arani2012implementing,fang2017distributed} or by implementing our recently proposed \emph{frequency shaping control} (FS), both of which largely enhance frequency security in low-inertia power systems with minimal algebraic calculations required for parameter tuning\cite{jiang2021tac,jiangtps2021}.

However, practical power grids often involve relatively weak tie lines connecting geographically separated groups of generation units~\cite{Kundur1991tps}. For instance, weak connectivity may arise as increased amount of renewable resources located in remote areas enter the grid through long transmission lines, such as off-shore wind farms. In such weakly-connected power networks, it is insufficient to solely rely on the COI frequency since this average signal masks the inter-area oscillations that can occur~\cite{Khamisov2024TPS,JIANG2024epsr}, which can cause minor inconvenience or catastrophic failures to system operation~\cite{entsoe2017,Kosterev1999TPS}. In fact, oscillatory stability has been explicitly mandated in many grid codes. For example, Western Power in Australia requires that oscillations be adequately damped to satisfy two criteria: any oscillation must have a damping ratio of at least $0.1$ and a halving time of at most $\SI{5}{\second}$~\cite{WP2021}. Thus, it is important to better understand how those oscillations are affected by network properties and system parameters, which will enable us to adjust inverter-based control to damp oscillations in a principled way. 


Considering the intricate spatiotemporal dynamics of oscillations in large-scale power systems, power engineers typically resort to dynamical simulations when studying the impact of network structure and parameter changes~\cite{rogers2000, Gautam2009TPS, Pagnier2019Plos, zhangtseoscillation}, which, however, is inefficient as a substantial number of scenarios must be simulated before one can gain insight. To avoid extensive simulations, the oscillation damping problem is often formulated as a robust controller design problem~\cite{Majumder2006TPS,Feng2025TPS}, which ultimately reduces to a convex optimization problem constrained by linear matrix inequalities. Unfortunately, due to the difficulty of tracking the sensitivity of the optimal solution to power network parameters, this approach provides limited insight into the impact of the network structure and parameter values on
oscillation characteristics of the system. Alternatively, many interesting patterns related to oscillations under weak grid conditions have been uncovered from the eigen-structure of the network Laplacian~\cite{guo2018cdc,Minl4dc23cluster,JIANG2024epsr} or state matrix~\cite{rouco1998eigenvalue,Ke2011TPS}, but the oscillation damping control tuning remains not straightforward.

In this manuscript, we aim to propose user-friendly tuning guidelines for FS and/or VI to ensure that a system simultaneously satisfies specified metrics for frequency security and oscillatory
stability, where only simple algebraic operations or visualized tuning procedures are needed. With this aim, we adopt a mild yet insightful proportionality assumption~\cite{pm2019preprint,jiang2021tac}, which enables us to decompose the frequency deviations after a sudden power imbalance into the COI frequency and a
transient component of inter-area oscillations~\cite{pm2019preprint,JIANG2024epsr}. This modal decomposition allows us to focus on addressing oscillatory stability while utilizing existing Nadir elimination techniques~\cite{jiang2021tac,jiangtps2021} to shape the COI frequency for frequency security. Our main contributions are as follows: 
\begin{enumerate}
    \item We turn the frequency oscillation damping problem into a pole-placement task for a set of scalar subsystems obtained from modal decomposition, which can be efficiently solved by purely analyzing the root locus of a single scalar subsystem associated with the main mode since poles of these subsystems are proven to lie neatly on this locus.
    \item We derive closed-form expressions of the minimum damping ratio and decay rate---two key oscillatory stability criteria---among inter-area oscillations for the system under FS via our proposed root locus method, which provide a quantitative characterization of how the network spectrum and parameter values influence oscillatory stability under FS. Although analogous quantitative results remain intractable for VI, the root locus under both control helps reveal that the minimum damping ratio is exclusively determined by the largest eigenvalue of the scaled power network Laplacian.
    \item We propose foolproof tuning guidelines for FS based on derived expressions to simultaneously achieve pre-defined system performance metrics for frequency security and oscillatory stability, requiring only simple algebraic operations or visualized tuning procedures. More precisely, by following our guidelines, when the system undergoes a sudden power imbalance, FS can not only shape the COI frequency into a first-order response that converges to a steady-state value within the allowed range but also ensure a satisfactory damping ratio and decay rate of inter-area oscillations. Overall, frequency deviations under FS converge exponentially fast. Typically, their convergence rate is much higher than that under VI when the same inverter inverse droop is employed.
\end{enumerate}

The rest of this manuscript is organized as follows. Section~\ref{sec:model} describes the power system model and motivates the frequency oscillation damping problem. 
Section~\ref{sec:osci-problem} simplifies the oscillation damping problem using proportionality assumption into pole-placement for a set of subsystems, whose poles are shown to reside nicely on the root locus of the scalar system associated with the main mode, enabling an efficient root-locus-based oscillatory analysis. Section~\ref{sec:FS} analyzes the oscillatory stability of the system under FS by exploiting the key root locus, upon which tuning guidelines requiring simple calculations or visualized procedures to ensure satisfactory performance are proposed for FS. Section~\ref{sec:simulation} validates our results through detailed simulations. Section~\ref{sec:conclusion} gathers our conclusions and outlook.

\section{Modeling Approach and Motivating Example}
\label{sec:model}
In this section, we describe the power system model used in
this manuscript for analysis and motivate the frequency oscillation damping problem that we
aim to solve.
\subsection{Power System Model}\label{ssec:model}
We consider\footnote{Throughout this manuscript, vectors are denoted in lower case bold and matrices are denoted in upper case bold, while scalars are unbolded, unless otherwise specified. Also, $\mathbbold{1}_n, \mathbbold{0}_n \in \real^n$ denote the vectors of all ones and all zeros, respectively, and $\boldsymbol{e}_k\in\real^n$ denotes the $k$th standard basis vector.}  a connected power network composed of $n$ buses indexed by $i \in \mathcal{N} := \{1,\dots, n\} $ and transmission lines denoted by unordered pairs $\{i,j\} \in \mathcal{E}\subset \{\{i,j\}:i,j\in\mathcal{N},i \neq j\}$, whose linearized dynamics around an operating point is shown in Fig.~\ref{fig:model}. This is a standard model and interested readers can refer to~\cite{Zhao:2013ts} for details on the linearization procedure. The system is modeled as a feedback interconnection of bus dynamics and network dynamics~\cite{pm2019preprint, jiang2021tac, jiang2021lcss}, where the input and output are the power disturbances $\boldsymbol{p} := \left(p_{i}, i \in \mathcal{N} \right) \in \real^n$ (in $\SI{}{\pu}$) and the bus frequency deviations from the nominal value $\boldsymbol{\omega}:=\left(\omega_i, i \in \mathcal{N} \right) \in \real^n$ (in $\SI{}{\pu}$), respectively. We now discuss the dynamic elements in more detail.

\begin{figure}
\centering
\includegraphics[width=\columnwidth]{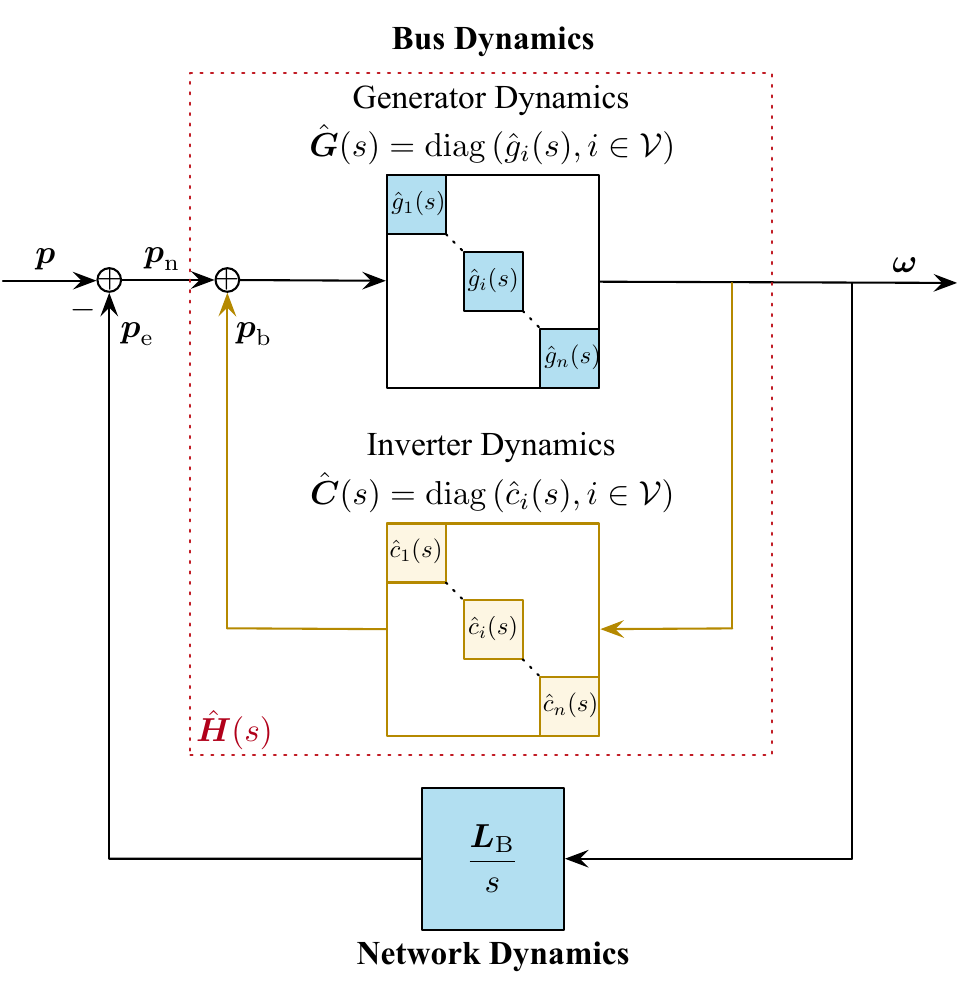}
\caption{Block diagram of power network.}\label{fig:model}
\end{figure}

\subsubsection{Bus Dynamics} The bus dynamics that maps the net power bus imbalance $\boldsymbol{p}_\mathrm{n} := \left( \boldsymbol{p}_{\mathrm{n},i}, i \in \mathcal{N} \right) \in \real^n$ to frequency deviations $\boldsymbol{\omega}$ can be described as a feedback loop that comprises a forward-path $\hat{\boldsymbol{G}}(s)$ and a feedback-path $\hat{\boldsymbol{C}}(s)$, where $\hat{\boldsymbol{G}}(s) := \diag {\hat{g}_i(s), i \in \mathcal{N}}$ and $\hat{\boldsymbol{C}}(s) := \diag {\hat{c}_i(s), i \in \mathcal{N}} $ are the transfer function matrices of generators and inverters, respectively. Thus, the transfer function from $\boldsymbol{p}_\mathrm{n}$ to $\boldsymbol{\omega}$ is 
\begin{equation}\label{eq:bus-dyn}
 \hat{\boldsymbol{H}}(s):=\left(\boldsymbol{I}_n-\hat{\boldsymbol{G}}(s)\hat{\boldsymbol{C}}(s)\right)^{-1}\hat{\boldsymbol{G}}(s)  \,. 
\end{equation}
\paragraph{Generator Dynamics} Typically, the generator dynamics is composed of the standard swing equation with a turbine, which can be described by 
\begin{equation} \label{eq:dy-sw-t}
\hat{g}_i(s) := \frac{ \tau_i s + 1 }{m_i \tau_i s^2 + \left(m_i + d_i \tau_i \right) s + d_i + d_{\mathrm{t},i}}\,,
\end{equation}
where $m_i>0$ denotes the aggregate generator inertia, $d_i>0$ the aggregate generator damping, $\tau_i>0$ the turbine time constant, and $d_{\mathrm{t},i}>0$ the turbine inverse droop.
\paragraph{Inverter Dynamics} We consider the most commonly used type of inverters, i.e., the grid-following inverters, which measure the local grid frequency deviation $\omega_i$ and instantaneously adjust their power injection variation $p_{\mathrm{b},i}$ to the network according to a control law $\hat{c}_i(s)$. For example, a straightforward approach is to use $\hat{c}_i(s)$ to compensate for the lack of physical inertia from synchronous generators.
\subsubsection{Network Dynamics}
The network dynamics characterizes the relationship between the fluctuations in power drained into the transmission
network $\boldsymbol{p}_\mathrm{e} := \left(p_{\mathrm{e},i}, i \in \mathcal{N} \right) \in \real^n$ (in $\SI{}{\pu}$) and the frequency deviations $\boldsymbol{\omega}$, which is given by a linearized model of the power flow equations~\cite{Purchala2005dc-flow}:
\begin{equation}
 \hat {\boldsymbol{p}}_\mathrm{e}(s) = \frac{\boldsymbol{L}_\mathrm{B}}{s} \hat {\boldsymbol{\omega}}(s)\;,\footnote{We use hat to distinguish the Laplace transform from its time domain counterpart.}\label{eq:Network-dyn}
\end{equation}
 where the matrix $\boldsymbol{L}_\mathrm{B}:=\left[ L_{\mathrm{B},{ij}}\right]\in \real^{n \times n}$ is an undirected weighted Laplacian matrix of the network whose $ij$th element is 
\[
\boldsymbol{L}_{\mathrm{B},{ij}}=\Omega_0\partial_{\theta_j}{\sum_{l=1}^n|V_i||V_l|B_{il}\sin(\theta_i-\theta_l)}\Bigr|_{\boldsymbol{\theta}=\boldsymbol{\theta}_0}.
\]
Here, $\boldsymbol{\theta} := \left(\theta_i, i \in \mathcal{N} \right) \in \real^n$ are the voltage angles with $\boldsymbol{\theta}_0$ being the equilibrium angles (in $\SI{}{\radian}$), $|V_i|$ is the (constant) voltage magnitude at bus $i$ (in $\SI{}{\pu}$), $B_{ij}$ is the line $\{i,j\}$ susceptance (in $\SI{}{\pu}$), and $\Omega_0:=2\pi F_0$ is the nominal frequency (in $\SI{}{\radian/s}$) with $F_0$ being $\SI{50}{\hertz}$ or $\SI{60}{\hertz}$
depending on the particular system. 

\subsubsection{Closed-Loop Dynamics} Since too large frequency deviations could trigger undesired protection measures and
even cause cascading failures, we care about the closed-loop response of frequency deviations $\boldsymbol{\omega}$ following power disturbances $\boldsymbol{p}$ in the power network shown in Fig.~\ref{fig:model}. This can be obtained by combining bus dynamics \eqref{eq:bus-dyn} and network dynamics \eqref{eq:Network-dyn} through the relation $\boldsymbol{p}_\mathrm{n}=\boldsymbol{p}-\boldsymbol{p}_\mathrm{e}$ as 
\begin{equation}\label{eq:twp}
\hat {\boldsymbol{\omega}}(s)= \left(\boldsymbol{I}_n +\hat{\boldsymbol{H}}(s)\frac{\boldsymbol{L}_\mathrm{B}}{s}\right)^{-1} \hat{\boldsymbol{H}}(s)\hat{\boldsymbol{p}}(s)\,.
\end{equation}
However, although \eqref{eq:twp} is a closed-form expression, it is difficult to work with since the size of the matrices and vectors involved could be quite high.

A standard simplification is to work with the center of inertia (COI) frequency, defined as the inertia-weighted average
of individual bus frequencies, i.e.,
\begin{equation}\label{eq:COI}
    \bar{\omega}:= \dfrac{\sum_{i=1}^n \check{m}_i\omega_i}{\sum_{i=1}^n \check{m}_i}\,,
\end{equation}
where $\check{m}_i$ denotes the total inertia on bus $i$, including both the generator inertia $m_i$ and any compensation from the inverter on bus $i$.  The COI frequency is a good representative of the system frequency response if the power network is tightly connected~\cite{Min2021lcss}. With this
underlying assumption, our previous work~\cite{jiang2021tac,jiang2021lcss,jiangtps2021} focus on leveraging inverter-based control to shape the COI response nicely for frequency security. In particular, we have provided a systematic tuning recommendation to remove the COI frequency Nadir via the widely adopted VI and our proposed FS, which are both reviewed in Section \ref{sec:FS}.

However, as more renewable resources are integrated, especially at edges of the power network, long transmission lines are often needed to bring this power to load centers, potentially contributing to inter-area oscillations at a low frequency of $0.1$--$\SI{0.8}{\hertz}$~\cite{Kundur1991tps, Khamisov2024TPS}. In fact, the frequency trajectories following sudden power disturbances may exhibit pronounced oscillations against each other even if the COI frequency is well-behaved~\cite{JIANG2024epsr}. Such oscillations can cause equipment damage and, in extreme cases, catastrophic failures. Therefore, it becomes insufficient to only damp out the COI frequency oscillation by shaping the frequency into a first-order dynamics~\cite{jiang2021tac,jiang2021lcss,jiangtps2021}. 
\subsection{Illustrative Example of Frequency Oscillations}\label{ssec:example}
To see this, we adopt the modified Western System Coordinating Council (WSCC) $9$-bus $3$-generator system in Fig.~\ref{fig:WCSS_test_case} as an example to show different types of oscillatory behaviors. The detailed parameters of this system are provided in Section~\ref{sec:simulation}. Here, we compare the frequency responses of the system with strong and weak network connectivity by varying the line impedance parameters on the same network topology, where weak connectivity is emulated by increasing the impedance of the lines $4-9$ and $5-6$ to $20$ times its original value that provides strong connectivity, resulting in a $2$-area system. 

\begin{figure}[t!]
\centering
\includegraphics[width=0.8\columnwidth]{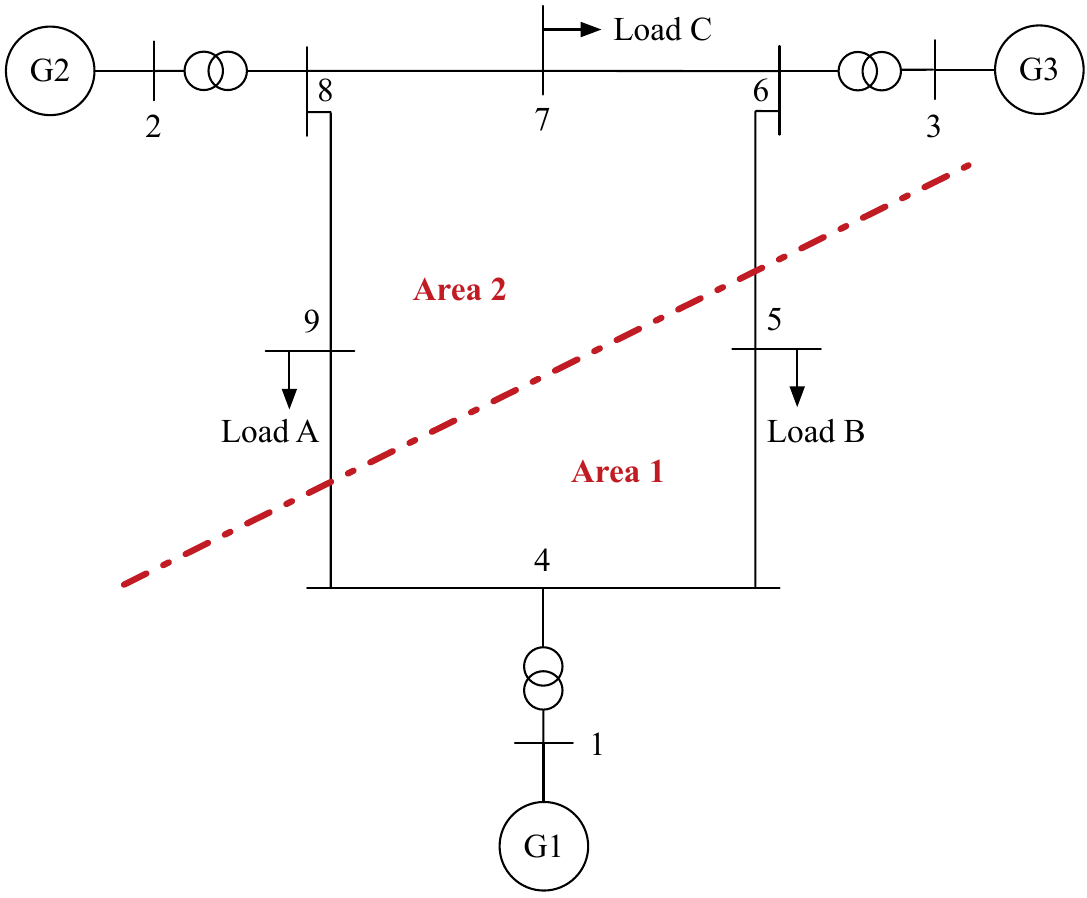}
\caption{Diagram of $9$-bus $3$-generator WSCC test case.}
\label{fig:WCSS_test_case}
\end{figure}

To briefly review the idea of shaping the COI frequency~\cite{jiang2021tac,jiang2021lcss,jiangtps2021}, the frequency deviations of the system without additional control from inverters when a sudden step power disturbance $\boldsymbol{p}= [-0.2, 0, 0]^T \mathds{U}_{ t \geq 0 }$~$\SI{}{\pu}$ occur at $t=\SI{1}{\second}$ are provided in Fig.~\ref{fig:fre-SG}.\footnote{$\mathds{U}_{ t \geq 0 } $ denotes the unit-step function.} Obviously, all buses exhibit a common oscillation triggered by the disturbance, which can be captured well by the COI frequency, independent of network connectivity. To address this system-wide oscillation that leads to a deep frequency Nadir, we then place an inverter near each generator bus, where the active power controller $\hat{c}_i(s)$ on individual buses are all given by VI or FS with Nadir elimination tuning proposed by our previous work~\cite{jiang2021tac,jiang2021lcss,jiangtps2021}.  In Fig.~\ref{fig:3-bus},
we show how these two control strategies perform when the system experiences the same power disturbance as in the previous test for both connectivity scenarios. As expected, both strategies succeed in shaping the COI frequency into a first-order response that naturally has no Nadir, which greatly improves the frequency security.

\begin{figure}[t!]
\centering
\subfigure[All buses are tightly-connected]
{\includegraphics[width=0.49\columnwidth]{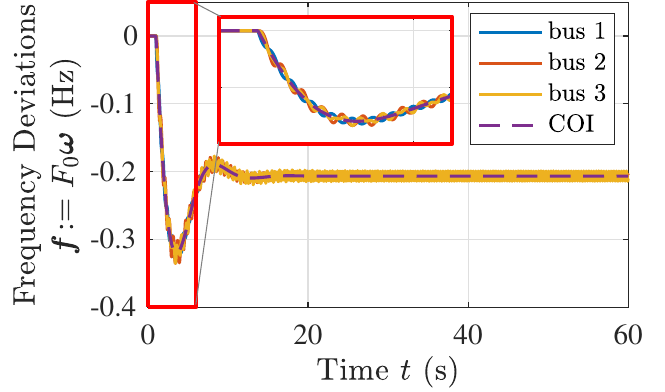}\label{fig:fre-SG-tight}}
\subfigure[Two areas are weakly-connected]
{\includegraphics[width=0.49\columnwidth]{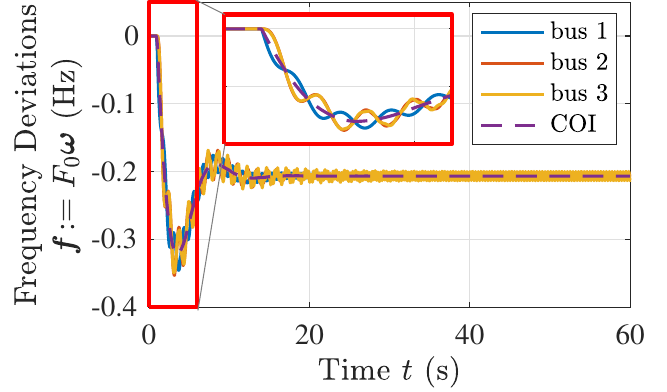}\label{fig:fre-SG-weak}}
\caption{Frequency deviations of the WSCC system without inverters when a $\SI{-0.2}{\pu}$ step power change is introduced to bus $1$.}\label{fig:fre-SG}
\end{figure}

However, unlike the more tightly-connected system in Fig.~\ref{fig:fre-SG-tight} and Fig.~\ref{fig:fre-VIFS-notune-tight}, where the frequency trajectories at all buses closely track each other even though the disturbance is only at bus $1$, the COI frequency is no longer a good enough proxy of the nodal frequencies in Fig.~\ref{fig:fre-SG-weak} and Fig.~\ref{fig:fre-VIFS-notune-weak}. The problem is that, the frequency trajectories in Fig.~\ref{fig:fre-SG-weak} and Fig.~\ref{fig:fre-VIFS-notune-weak} exhibit noticeable oscillations relative to each other, while the well-behaved COI frequency fails to reveal this phenomenon. Therefore, relying just on the COI frequency could lead to an erroneous conclusion about the oscillatory stability~\cite{WP2021} of the system. Of course, the inter-area oscillations in Fig.~\ref{fig:fre-SG-weak} and Fig.~\ref{fig:fre-VIFS-notune-weak} are not unexpected since bus $1$ is ``weakly'' connected with the other buses. Nevertheless, as the power network gets larger, it is more challenging to draw intuitive conclusions, which makes additional analysis techniques become necessary.

\begin{figure}[t!]
\centering
\subfigure[All buses are tightly-connected]
{\includegraphics[width=0.98\columnwidth]{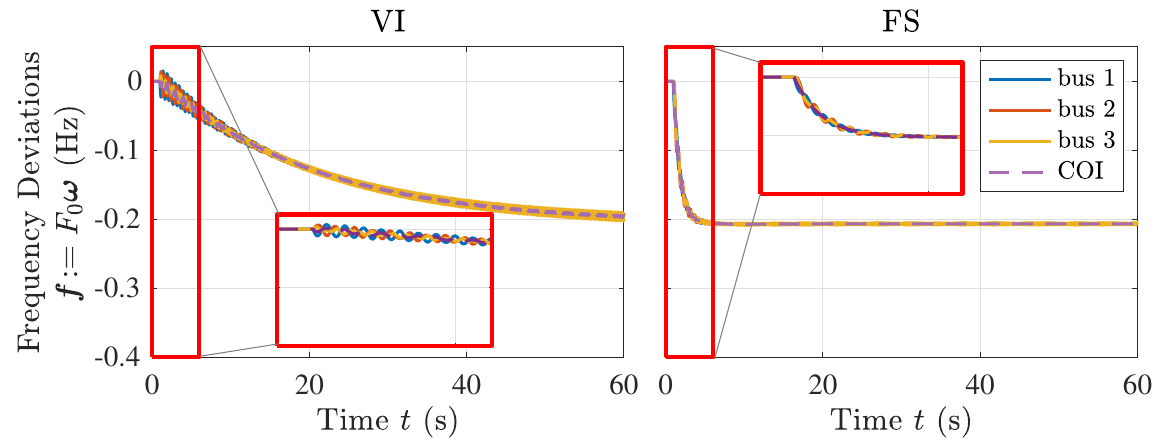}\label{fig:fre-VIFS-notune-tight}}
\subfigure[Two areas are weakly-connected]
{\includegraphics[width=0.98\columnwidth]{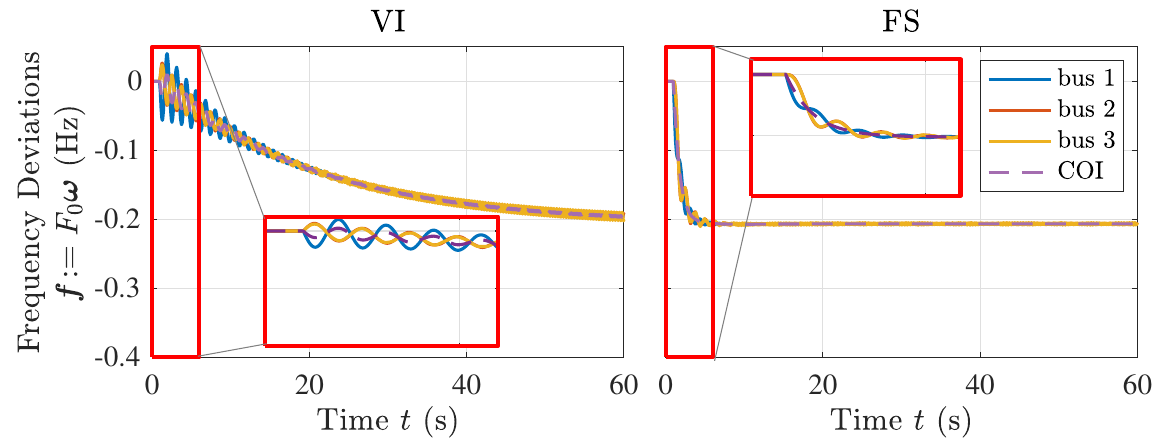}
\label{fig:fre-VIFS-notune-weak}}
\caption{Frequency deviations of the WSCC system under VI and FS control with Nadir elimination tuning when a $\SI{-0.2}{\pu}$ step power change is introduced to bus $1$.}
\label{fig:3-bus}
\vspace{-0.4cm}
\end{figure}   

Worse still, inter-area oscillations persist in the weak connectivity scenario even after inverter-based control is applied. This is unsurprising since VI and FS are tuned solely to shape the COI frequency here. Particularly, they do not provide additional droop, given that the steady-state frequency deviations of the original system without inverter-based control are already within the allowed range of~\SI{\pm200}{\milli\hertz} \cite{UCTLbook,NG2016standard}. Clearly, care must be taken when tuning VI and FS to address those oscillations. This gives rise to a natural question: \emph{is there a systematic way for fine-tuning VI and FS to not only shape the COI frequency well but also damp other oscillations?} To this end, we need to better understand how these oscillations are impacted by the network structure and parameter
values.
\section{Frequency Oscillation Damping Problem Formulation}\label{sec:osci-problem}

In this section, we show that a simplifying assumption allows us to transform the frequency oscillation damping problem into a pole-placement task for a set of scalar subsystems obtained from modal decomposition, which can be solved by purely analyzing the root locus of a scalar subsystem associated with the main mode.  

\subsection{Oscillation Damping Problem under Modal Decomposition}

As mentioned above, although our previous work~\cite{jiang2021tac,jiang2021lcss,jiangtps2021} can shape the COI frequency response to remove its Nadir, the influence of oscillations on top of COI frequency has been ignored, which could lead to an inaccurate conclusion about the oscillatory
stability of the system. Thus, it is important to damp out those oscillations. However, since the oscillations present a variety of highly-coupled interactions between individual buses, it is challenging to investigate the oscillatory behavior in large-scale power systems with heterogeneous devices. 
To make the analysis tractable, we consider proportionality as a reasonable
first-cut approximation to heterogeneity~\cite{pm2019preprint}, under which the frequency dynamics \eqref{eq:twp} are decoupled and thus the oscillation damping problem can be solved efficiently. Hence, we adopt the following assumption in the rest of this manuscript:
\begin{ass}[Proportionality]\label{ass:proportion}
There exists a group of proportional parameters $r_i>0$, $i \in \mathcal{N}$, such that 
\begin{align*} \label{eq:propotionmodel}
\hat{g}_i(s) = \dfrac{\hat{g}_\mathrm{o}(s)}{r_i} \qquad \text{and} \qquad \hat{c}_i(s) = r_i \hat{c}_\mathrm{o}(s)\,, 
\end{align*}
where $\hat{g}_\mathrm{o}(s)$ and $\hat{c}_\mathrm{o}(s)$ are called the representative generator and the representative inverter, respectively.
\end{ass}
For example, for $\hat{g}_i(s)$ in \eqref{eq:dy-sw-t}, Assumption~\ref{ass:proportion} holds if, $\forall i \in \mathcal{N}$, $\exists\, r_i>0$ such that $m_i=r_im$, $d_i=r_id$, $d_{\mathrm{t},i}=r_i d_{\mathrm{t}}$, and $\tau_i=\tau$ for some  representative parameters $m$, $d$, $d_{\mathrm{t}}$, and $\tau$, i.e., 
\begin{equation}\label{eq:go}
\hat{g}_\mathrm{o}(s)=\frac{ \tau s + 1 }{m \tau s^2 + \left(m + d \tau \right) s + d + d_{\mathrm{t}}} \,.   
\end{equation}

Under Assumption~\ref{ass:proportion}, it has been well-established that the dynamics in \eqref{eq:twp} can be decoupled as~\cite{pm2019preprint, jiang2021tac,JIANG2024epsr}
\begin{align}
\hat {\boldsymbol{\omega}}(s) = \boldsymbol{R}^{-\frac{1}{2}} \boldsymbol{V} \diag{\hat{z}_{k}(s), k \in \mathcal{N}} \boldsymbol{V}^T \boldsymbol{R}^{-\frac{1}{2}}\hat{\boldsymbol{p}}(s)\;,\label{eq:Tp}
\end{align}
with
\begin{align}
    \hat{z}_{k}(s) := \frac{\hat{g}_\mathrm{o}(s)}{1+\hat{g}_\mathrm{o}(s)\left(\lambda_k/s-\hat{c}_\mathrm{o}(s)\right)}\;,\quad \forall k \in \mathcal{N}\,, \label{eq:zk-s}
\end{align}
where $\boldsymbol{R} :=\diag {r_i, i \in \mathcal{N}} \in \real^{n \times n}$ is the proportionality matrix and\begin{align}\label{eq:Vmatrix}
\boldsymbol{V} \!:=\! \begin{bmatrix} \boldsymbol{v}_1\!:=\!(\sum_{i=1}^n \!r_i)^{-\frac{1}{2}} \boldsymbol{R}^{\frac{1}{2}} \mathbbold{1}_n \!\!& \!\boldsymbol{v}_2 &\!\cdots \!& \boldsymbol{v}_n \end{bmatrix}\!\!\in\! \real^{n \times n}    
\end{align}
satisfying $\boldsymbol{V}^T\boldsymbol{V}=\boldsymbol{V}\boldsymbol{V}^T=\boldsymbol{I}_n $ is an orthonormal matrix whose columns $\boldsymbol{v}_k:=\left(v_{k,i}, i \in \mathcal{N} \right) \in \real^n$ are unit eigenvectors associated with the scaled Laplacian matrix 
\begin{align}\label{eq:L-def}
\boldsymbol{L} := \boldsymbol{R}^{-\frac{1}{2}} \boldsymbol{L}_\mathrm{B} \boldsymbol{R}^{-\frac{1}{2}}\in \real^{n \times n}    
\end{align} 
such that $\boldsymbol{L} = \boldsymbol{V} \diag{\lambda_k, k \in \mathcal{N}} \boldsymbol{V}^T$
with $\lambda_k$ being the $k$th eigenvalue of $\boldsymbol{L}$ ordered non-decreasingly $(0 = \lambda_1 < \lambda_2 \leq \ldots \leq\lambda_n)$\footnote{Recall that we assume the power network is connected, which implies that $\boldsymbol{L}$ has a single zero eigenvalue.}.

To make the physical interpretation of modal decomposition more clear, we rewrite \eqref{eq:Tp} as
\begin{align}
    \!\!\! \!\hat {\boldsymbol{\omega}}(s) =\dfrac{\mathbbold{1}_n^T\hat{\boldsymbol{p}}(s)}{\sum_{i=1}^n \!r_i}\hat{z}_{1}(s)\mathbbold{1}_n+\!\underbrace{\sum_{k=2}^n \!\hat{z}_{k}(s)\boldsymbol{R}^{-\frac{1}{2}} 
    \boldsymbol{v}_k\boldsymbol{v}_k^T\boldsymbol{R}^{-\frac{1}{2}}\hat{\boldsymbol{p}}(s)}_{=:\tilde {\boldsymbol{\omega}}(s)}\label{eq:omega-t-decomple-coi}
\end{align} 
by explicitly substituting the expression of $\boldsymbol{v}_1$ in \eqref{eq:Vmatrix} to it.
Here, the first mode actually characterizes the common behavior among individual buses as the representative feedback loop $\hat{z}_{1}(s)$, responding to the total disturbance $\mathbbold{1}_n^T\hat{\boldsymbol{p}}(s)$ scaled by $1/(\sum_{i=1}^n r_i)$, while the remaining modes represent the oscillations among them. In fact, one can easily show by a similar argument as in \cite{pm2019preprint} that the common behavior is the $r_i$-weighted average of individual bus frequencies, i.e., $(\sum_{i=1}^n r_i\omega_i)/(\sum_{i=1}^n r_i)$, which is exactly the COI frequency defined in \eqref{eq:COI}. Therefore, the modal decomposition in \eqref{eq:omega-t-decomple-coi} confirms that it becomes insufficient to only design controller based on COI frequency when those oscillatory modes are unignorable.

Therefore, our goal is to design $\hat{c}_\mathrm{o}(s)$ to not only remove Nadir of the COI frequency but also damp oscillations in $\tilde {\boldsymbol{\omega}}(s)$. Since the former has been solved by~\cite{jiang2021tac,jiang2021lcss,jiangtps2021}, the latter will be the core of this manuscript. With this in mind, we first note from \eqref{eq:omega-t-decomple-coi} that the transfer function matrix from $\hat{\boldsymbol{p}}(s)$ to $\tilde {\boldsymbol{\omega}}(s)$, i.e.,
\begin{align*}
    \hat{\boldsymbol{T}}_{\tilde {\boldsymbol{\omega}}\boldsymbol{p}}(s)=\sum_{k=2}^n \hat{z}_{k}(s)\boldsymbol{R}^{-\frac{1}{2}} 
    \boldsymbol{v}_k\boldsymbol{v}_k^T\boldsymbol{R}^{-\frac{1}{2}}\,,
\end{align*}
has all elements as a linear combination of $\hat{z}_{k}(s)$ for $k\in\mathcal{N}\setminus\{1\}$. Thus, if we can constrain poles of each $\hat{z}_{k}(s)$ to lie in a prescribed region such that a certain damping ratio and decay rate can be achieved~\cite{Hara2014tac}, we can ensure a satisfactory transient response of $\hat{\boldsymbol{T}}_{\tilde {\boldsymbol{\omega}}\boldsymbol{p}}$ with oscillations adequately damped as required by the grid code~\cite{WP2021}. Therefore, the oscillation damping problem boils down to a pole-placement problem.
\subsection{Pole-Placement Problem Simplification via Root Locus}\label{ssec:locus-method}
Now, given any pre-designed $\hat{c}_\mathrm{o}(s)$ that is engineered to eliminate the COI frequency Nadir as will be detailed in Sections~\ref{sec:FS}, our main objective is to prune its tuning region so that all poles of $\hat{z}_{k}(s)$ for $k\in\mathcal{N}\setminus\{1\}$ lie within a region $\mathcal{S}_{\alpha,\psi}$ as illustrated in Fig.~\ref{fig:S-region}, which ensures a minimum decay rate $\alpha>0$ and a minimum damping ratio $\cos\psi$ for $\psi\in[0,\pi/2)$. This can be considered as specifying stability degrees through the damping ratio and decay rate rather than merely guaranteeing stability of those subsystems. Here, it is natural to define $(\alpha, \psi)$-stability:

\begin{definition}[$(\alpha, \psi)$-stability] A system is called $(\alpha, \psi)$-stable if all its poles lie in a region $\mathcal{S}_{\alpha,\psi}$ depicted by Fig.~\ref{fig:S-region}. 
\end{definition}

Thus, damping oscillations in $\tilde {\boldsymbol{\omega}}(s)$ is simply a matter of securing the specified $(\alpha, \psi)$-stability of $\hat{z}_{k}(s)$ for $k\in\mathcal{N}\setminus\{1\}$ through pole-placement. This type of pole-placement problem has gained attention from both control and power communities, yet most existing studies attempt to formulate it as a convex optimization problem involving linear matrix inequalities~\cite{Chilali1996tac,Feng2025TPS}, which provides limited insight into the precise impact of network structure and parameter values on oscillation characteristics of the system. To overcome this, we aim to devise a principled procedure for pole-placement by fine-tuning $\hat{c}_\mathrm{o}(s)$, where closed-form expressions of $\cos\psi$ and $\alpha$ in terms of the system and control parameters can be provided.

\begin{figure}[t!]
\centering
\includegraphics[width=0.35\columnwidth]{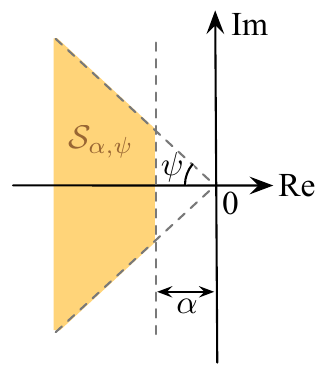}
\caption{Region $\mathcal{S}_{\alpha,\psi}$.}\label{fig:S-region}
\end{figure}

Clearly, a better understanding of how pole positions of $\hat{z}_{k}(s)$ vary as $k$ increases under given $\hat{c}_\mathrm{o}(s)$ is crucial to achieving this objective. However, instead of directly analyzing the pole locations of each individual $\hat{z}_{k}(s)$ in \eqref{eq:zk-s}, we adopt a more efficient approach: we only examine the root locus of the negative feedback interconnection of $\hat{z}_{1}(s)/s$ and a scalar gain $\lambda$ depicted in Fig.~\ref{fig:rootlocussys}, i.e.,
\begin{align}\label{eqfeed-lambda}
 \dfrac{\hat{z}_{1}(s)/s}{1+\lambda\hat{z}_{1}(s)/s}=:\mathcal{F}(\hat{z}_{1}(s)/s,\lambda) \,,
\end{align}
as $\lambda$ varies, which is the core idea that underpins our subsequent analysis in Sections~\ref{sec:FS}. This is captured by the following theorem.

\begin{thm}[Poles of $\hat{z}_{k}(s)$ residing on root locus of $\mathcal{F}(\hat{z}_{1}(s)/s,\lambda)$]\label{thm:colocate-z}
For the system in Fig.~\ref{fig:model} under Assumption~\ref{ass:proportion}, if $\hat{c}_\mathrm{o}(s)$ is designed to make the root locus of $\mathcal{F}(\hat{z}_{1}(s)/s,\lambda)$ lie in the open left half-plane for any $\lambda>0$, then, $\forall k\in\mathcal{N}\setminus\{1\}$, the poles of $\hat{z}_{k}(s)$ are exactly the points that yield the gain $\lambda=\lambda_k$ on the root locus of the system with the open-loop transfer function  
\begin{align}\label{eq:Loopgain}
    \hat{L}(s):= \dfrac{\lambda\hat{z}_{1}(s)}{s}\,,
\end{align}
where $\lambda$ plays the role of variable gain in the normal case. 
\end{thm}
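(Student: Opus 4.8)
The plan is to reduce the theorem to a single algebraic identity relating $\hat{z}_k(s)$ to $\hat{z}_1(s)$, and then to interpret the resulting expression as a closed-loop transfer function. First, I would note that $\hat{z}_1(s)$ is exactly the $k=1$ instance of \eqref{eq:zk-s}, because $\lambda_1=0$, so $\hat{z}_1(s)=\hat{g}_\mathrm{o}(s)/\bigl(1-\hat{g}_\mathrm{o}(s)\hat{c}_\mathrm{o}(s)\bigr)$. Then, for any $k\in\mathcal{N}\setminus\{1\}$, dividing the numerator and denominator of \eqref{eq:zk-s} by $1-\hat{g}_\mathrm{o}(s)\hat{c}_\mathrm{o}(s)$ and substituting this expression gives
\begin{equation*}
\hat{z}_k(s)=\frac{\hat{z}_1(s)}{1+\lambda_k\,\hat{z}_1(s)/s}=s\,\mathcal{F}\!\bigl(\hat{z}_1(s)/s,\lambda_k\bigr)\,,
\end{equation*}
i.e., each oscillatory subsystem is, up to a multiplicative factor $s$, the closed loop $\mathcal{F}(\hat{z}_1(s)/s,\lambda)$ in \eqref{eqfeed-lambda} evaluated at the particular gain $\lambda=\lambda_k$. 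Since $0=\lambda_1<\lambda_2\le\cdots\le\lambda_n$, every relevant gain $\lambda_k$ is strictly positive, which is why the root locus need only be examined for $\lambda>0$.

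Next, I would recall that $\mathcal{F}(\hat{z}_1(s)/s,\lambda)$ is by construction the transfer function of the negative feedback interconnection in Fig.~\ref{fig:rootlocussys} with forward path $\hat{z}_1(s)/s$ and feedback gain $\lambda$, so that for a fixed $\lambda$ its poles are precisely the roots of $1+\lambda\,\hat{z}_1(s)/s=0$, i.e., of $1+\hat{L}(s)=0$ with $\hat{L}(s)$ as in \eqref{eq:Loopgain}. By the definition of the root locus, the union of these root sets over $\lambda\in(0,\infty)$ is the root locus of $\hat{L}(s)$, and the roots corresponding to the value $\lambda=\lambda_k$ are exactly the ``points that yield the gain $\lambda_k$'' on that locus. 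Combining this with the displayed identity proves the claim, once one checks that multiplying $\mathcal{F}(\hat{z}_1(s)/s,\lambda_k)$ by $s$ leaves its pole set unchanged.

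That last check is the only place where care is needed, and it is also where the hypothesis enters. The factor $\hat{z}_1(s)/s$ carries a simple pole at $s=0$, so $\mathcal{F}(\hat{z}_1(s)/s,\lambda_k)$ takes the finite nonzero value $1/\lambda_k$ at the origin; hence the extra factor $s$ in $\hat{z}_k(s)=s\,\mathcal{F}(\hat{z}_1(s)/s,\lambda_k)$ merely inserts a zero at $s=0$ and introduces no pole--zero cancellation with the characteristic polynomial. Concretely, writing $\hat{z}_1(s)=N_1(s)/D_1(s)$ in lowest terms gives $\hat{z}_k(s)=sN_1(s)/\bigl(sD_1(s)+\lambda_k N_1(s)\bigr)$, and coprimality of $N_1,D_1$ shows that the numerator and denominator here can only share a factor at $s=0$, a factor that is likewise removed when $\hat{L}(s)$ is reduced to lowest terms before its root locus is drawn; therefore the pole set of $\hat{z}_k(s)$ coincides exactly with the gain-$\lambda_k$ points of the root locus of $\hat{L}(s)$. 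I expect this $s=0$ bookkeeping---reconciling the feedback description of $\hat{z}_k(s)$ with its reduced rational form, and verifying that the model at hand indeed makes $\hat{z}_1(s)/s$ have a simple pole at the origin---to be the main (modest) obstacle; the standing hypothesis that the root locus of $\mathcal{F}(\hat{z}_1(s)/s,\lambda)$ stays in the open left half-plane for all $\lambda>0$ then additionally certifies that all these poles are stable, so that each $\hat{z}_k(s)$, $k\in\mathcal{N}\setminus\{1\}$, is indeed amenable to the $(\alpha,\psi)$-stability pruning of the subsequent sections.
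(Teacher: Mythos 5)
Your proposal is correct and takes essentially the same route as the paper: the identity $\hat{z}_k(s)=s\,\mathcal{F}\bigl(\hat{z}_1(s)/s,\lambda_k\bigr)$ that you derive algebraically from \eqref{eq:zk-s} is exactly what the paper obtains by the block-diagram manipulation of Fig.~\ref{fig:zk-v2} (moving the pickoff point past the integrator), and both arguments then identify the poles of $\mathcal{F}(\hat{z}_1(s)/s,\lambda_k)$ with the gain-$\lambda_k$ points of the root locus of $\hat{L}(s)$ in \eqref{eq:Loopgain} and invoke the open-left-half-plane hypothesis to rule out cancellation of the added zero at the origin. Your coprime-factorization bookkeeping at $s=0$ is a slightly more explicit version of the paper's same observation, not a different method.
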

\begin{proof}
\begin{figure}[t]
\centering
\includegraphics[width=0.35\columnwidth]{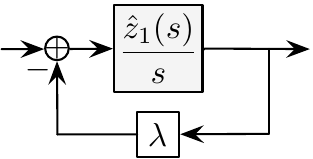}
\caption{Negative feedback
interconnection of $\hat{z}_{1}(s)/s$ and $\lambda$, or $\mathcal{F}(\hat{z}_{1}(s)/s,\lambda)$ for short.}\label{fig:rootlocussys}
\end{figure}

To see why this is true, we first note that the block diagram of $\hat{z}_{k}(s)$ in Fig.~\ref{fig:zk(s)} can be equivalently transformed into the one in Fig.~\ref{fig:zk-v2} by moving the pickoff point past the integrator~\cite[Figure 2.7]{Driels1996linear}. This transformation reveals that $\hat{z}_{k}(s)$ only introduces an additional zero at the origin to the negative feedback interconnection of $\hat{z}_{1}(s)/s$ and $\lambda_k$, or $\mathcal{F}(\hat{z}_{1}(s)/s,\lambda_k)$ for short. Thus, for any given $k\in\mathcal{N}\setminus\{1\}$, as long as we design $\hat{c}_\mathrm{o}(s)$ to make all poles of $\mathcal{F}(\hat{z}_{1}(s)/s,\lambda_k)$ lie strictly in the open left half-plane---thereby ensuring no pole at the origin, then we know $\hat{z}_{k}(s)$ only adds a zero at the origin that cannot be canceled by any pole to $\mathcal{F}(\hat{z}_{1}(s)/s,\lambda_k)$. We will illustrate soon that the requirement on $\hat{c}_\mathrm{o}(s)$ above is guaranteed by our assumption on $\hat{c}_\mathrm{o}(s)$. Therefore, the pole set of $\mathcal{F}(\hat{z}_{1}(s)/s,\lambda_k)$ is identical to that of $\hat{z}_{k}(s)$, which allows us to focus only on the pole-placement of $\mathcal{F}(\hat{z}_{1}(s)/s,\lambda_k)$ from now on.

\begin{figure}[t!]
\centering
\subfigure[Block diagram of $\hat{z}_{k}(s)$ in \eqref{eq:zk-s}]
{\includegraphics[width=0.5\columnwidth]{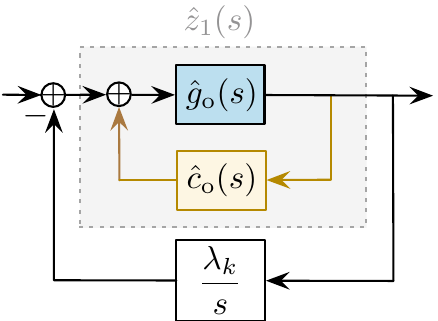}\label{fig:zk(s)}}
\subfigure[Equivalent block diagram of $\hat{z}_{k}(s)$]
{\includegraphics[width=0.65\columnwidth]{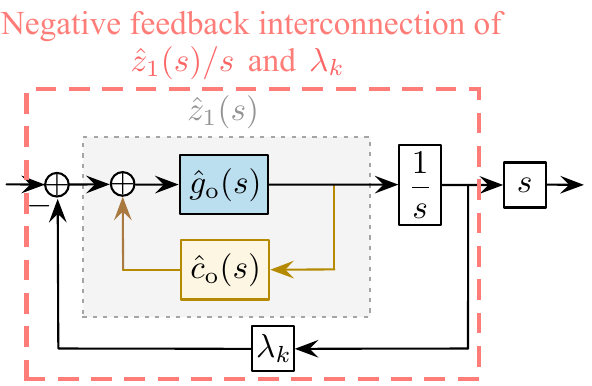}\label{fig:zk-v2}}
\caption{Equivalent block diagrams of $\hat{z}_{k}(s)$.}
\label{fig:combine simulation pdom np}
\end{figure}
Next, observe from \eqref{eqfeed-lambda} that $\mathcal{F}(\hat{z}_{1}(s)/s,\lambda_k)$ can be interpreted as a specific instance of the parametrized family $\mathcal{F}(\hat{z}_{1}(s)/s,\lambda)$ evaluated at $\lambda=\lambda_k$, which means that the poles of $\mathcal{F}(\hat{z}_{1}(s)/s,\lambda_k)$ (or $\hat{z}_{k}(s)$ by the preceding analysis) coincide exactly with the poles of $\mathcal{F}(\hat{z}_{1}(s)/s,\lambda)$ when $\lambda=\lambda_k$. Thus, two results follow.

First, since $0 = \lambda_1 < \lambda_2 \leq \ldots \leq\lambda_n$, the requirement on $\hat{c}_\mathrm{o}(s)$ mentioned above is ensured by our assumption that $\hat{c}_\mathrm{o}(s)$ is designed to make the root locus of $\mathcal{F}(\hat{z}_{1}(s)/s,\lambda)$ lie in the open left half-plane for any $\lambda>0$.

Second, if we understand how the poles of the closed-loop system $\mathcal{F}(\hat{z}_{1}(s)/s,\lambda)$ evolve as $\lambda$ increases from $\lambda_2$ to $\lambda_n$, we can directly determine where the poles of $\hat{z}_{k}(s)$ for $k\in\mathcal{N}\setminus\{1\}$ are located on the complex plane. Clearly, the root locus method offers a natural way to analyze this evolution graphically. Hence, the key step is to construct the root locus for $\mathcal{F}(\hat{z}_{1}(s)/s,\lambda)$ as $\lambda$ varies, which can be accomplished by considering the loop-gain $\hat{L}(s)$ in~\eqref{eq:Loopgain}.
\end{proof}

\subsection{Fine-Tuning Goal for Frequency Control}\label{ssec:tune-goal}
In the remainder of this manuscript, on top of the COI frequency Nadir elimination via FS established by our prior work~\cite{jiang2021tac,jiangtps2021}, we will derive a foolproof oscillation damping tuning rule by leveraging the root locus associated with \eqref{eq:Loopgain}. This allows us to simultaneously remove the COI frequency Nadir and damp out all other oscillatory modes to a specified extent following major disturbances, which proves challenging to accomplish through the commonly used VI. By doing so, the frequency security and oscillatory stability in power grids with high renewable integration can be greatly improved. 

To mathematically characterize this goal, we consider the scenario where the system in Fig.~\ref{fig:model} is subject to step power disturbances $\boldsymbol{p}=\boldsymbol{u}_0 \mathds{U}_{ t \geq 0 }$ with $\boldsymbol{u}_0 := \left(u_{0,i}, i \in \mathcal{N} \right) \in \real^n$ being an arbitrary vector direction that allows for power disturbances of different magnitudes at individual buses, i.e., $\hat{\boldsymbol{p}}(s)=\boldsymbol{u}_0/s$. Then, the frequency response in \eqref{eq:omega-t-decomple-coi} becomes
\begin{align*}
    \hat {\boldsymbol{\omega}}(s) =\dfrac{\mathbbold{1}_n^T\boldsymbol{u}_0}{\sum_{i=1}^n r_i}\dfrac{\hat{z}_{1}(s)}{s}\mathbbold{1}_n+\sum_{k=2}^n \dfrac{\hat{z}_{k}(s)}{s}\boldsymbol{R}^{-\frac{1}{2}} 
    \boldsymbol{v}_k\boldsymbol{v}_k^T\boldsymbol{R}^{-\frac{1}{2}}\boldsymbol{u}_0\,,
\end{align*}
whose time-domain counterpart is 
\begin{align}
    \boldsymbol{\omega}(t) =\bar {\omega}(t)\mathbbold{1}_n+\tilde {\boldsymbol{\omega}}(t)\label{eq:omega-t-decomple}
\end{align}
with
\begin{align}
    \bar {\omega}(t):=&\ \dfrac{\sum_{i=1}^n u_{0,i}}{\sum_{i=1}^n \!r_i}z_{\mathrm{u},1}(t)\,,\label{eq:w-bar-t}\\\tilde {\boldsymbol{\omega}}(t):=&\sum_{k=2}^n \!z_{\mathrm{u},k}(t)\boldsymbol{R}^{-\frac{1}{2}} 
    \boldsymbol{v}_k\boldsymbol{v}_k^T\boldsymbol{R}^{-\frac{1}{2}}\boldsymbol{u}_0\,,\label{eq:w-til-t}
\end{align}
where $z_{\mathrm{u},k}(t):= \mathscr{L}^{-1}\{\hat{z}_{k}(s)/s\}$ denotes the unit-step response of $\hat{z}_k(s)$. Hence, our tuning goal is twofold:
\begin{itemize}
    \item \textbf{Frequency Security:} Given the expected maximum magnitude of the net power imbalance $\Delta P$ (in $\SI{}{\pu}$), the COI frequency deviation stays within the acceptable range $\pm\Delta\omega_{ \mathrm{d}}$ (in $\SI{}{\pu}$), i.e., $|\bar{\omega}|_\infty := \max_{t\geq0} |\bar{\omega}(t)|\leq\Delta\omega_{ \mathrm{d}}$. For example, the maximum allowed quasi-steady-state frequency deviation for the European and Great Britain systems is \SI{\pm200}{\milli\hertz} \cite{UCTLbook,NG2016standard}.
    \item \textbf{Oscillatory Stability:} Any oscillation in $\tilde {\boldsymbol{\omega}}(t)$ must have at least a damping ratio of $\cos\psi_{\mathrm{d}}\in(0,1]$ and a decay rate $\alpha_{\mathrm{d}}>0$, that is, $\forall k\in\mathcal{N}\setminus\{1\}$, $\hat{z}_{k}(s)$ is $(\alpha_{\mathrm{d}}, \psi_{\mathrm{d}})$-stable. For instance, the Western Power requirement regarding oscillations introduced in Section~\ref{sec:intro} can be approximately quantified as the damping ratio $\cos\psi\geq 0.1$ and the halving time $1/\alpha\leq5$, corresponding to $(0.2, 84.3\degree)$-stability, where the halving time is approximated as the reciprocal of the decay rate $\alpha$ since $1/\alpha$ is the time needed for transient amplitude to drop to $e^{-\alpha(1/\alpha)}=e^{-1}\approx 36.8\%<50\%$ of its initial value.
\end{itemize}


\section{Frequency Shaping Control for Oscillation Damping}\label{sec:FS}

In this section, after briefly reviewing FS, we analyze oscillatory stability for the system under FS via the root locus method proposed in Section~\ref{ssec:locus-method}, which yields closed-form expressions of the limit damping ratio and decay rate of all oscillatory modes. This provides a quantitative characterization of how the network spectrum and parameter values influence oscillatory stability under FS, upon which we propose foolproof tuning rules for FS to simultaneously achieve specified frequency security and oscillatory stability defined in Section~\ref{ssec:tune-goal}. Additionally, we show that FS outperforms the traditional VI in terms of frequency convergence rate.\footnote{Note that depending on the specific inverter dynamics involved, we may add subscripts ``vi'' or ``fs'' in the name of a transfer function or variable without making a further declaration in the rest of this manuscript.}

\subsection{$(\alpha, \psi)$-Stability Analysis via Root Locus}

FS~\cite{jiang2021tac,jiang2021lcss,jiangtps2021} is our previously proposed strategy to meet the frequency security objective discussed in Section~\ref{ssec:tune-goal}, whose dynamics is as follows: 
\begin{dyn-i}[Frequency Shaping]\label{dyn-fs}
This control law shapes the COI frequency $\bar{\omega}_{\mathrm{fs}}(t)$ of the system shown in Fig.~\ref{fig:model} following step power disturbances into first-order response that naturally has no Nadir:
\begin{equation} \label{eq:dy-fs}
\hat{c}_{\mathrm{fs}}(s) := \frac{ d_\mathrm{t} }{\tau s+1}-\left(d_\mathrm{b} + d_\mathrm{t}\right) \,,
\end{equation}
where $d_\mathrm{b}>0$ is a tunable parameter that plays the role of inverter inverse droop.
\end{dyn-i}

However, the existing tuning recommendation for FS lacks deliberate consideration of oscillatory stability, which is an issue that is gaining more attention amid the expanding integration of renewable energy into the grid. Therefore, it is desirable to propose systematic tuning guidelines for FS to simultaneously achieve satisfactory frequency security and oscillatory stability. To this end, it is important to understand better how $(\alpha, \psi)$-stability is related to the system and control parameters, which can be analyzed through the root locus.

\begin{lem}[Root locus associated with $\hat{L}_{\mathrm{fs}}(s)$]\label{lem:rlocus-FS}
For the system in Fig.~\ref{fig:model} under Assumption~\ref{ass:proportion}, if inverters adopt FS, i.e., $\hat{c}_\mathrm{o}(s)=\hat{c}_{\mathrm{fs}}(s)$ in \eqref{eq:dy-fs}, then the root locus of the system with the open-loop transfer function 
\begin{align}\label{eq:Loopgain-fs}
    \hat{L}_{\mathrm{fs}}(s):= \dfrac{\lambda\hat{z}_{1,{\mathrm{fs}}}(s)}{s}=\dfrac{\lambda}{m s^2 + (d+d_\mathrm{b} + d_{\mathrm{t}})s}
\end{align}
is as shown in Fig.~\ref{fig:rlocus-fs}. Precisely, the locus starts from open-loop poles 
\begin{align}\label{eq:open-pole-fs}
    s_1 = 0\qquad\text{and}\qquad s_2 =-\dfrac{d+d_\mathrm{b}+d_\mathrm{t}}{m}\,,
\end{align} and leaves the real-axis at the point
\begin{align}\label{eq:sigma-fs}
    \sigma_\mathrm{a}=-\dfrac{d+d_\mathrm{b}+d_\mathrm{t}}{2m}\,.
\end{align}
\end{lem}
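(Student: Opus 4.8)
The plan is to verify the claimed root-locus picture for $\hat{L}_{\mathrm{fs}}(s)$ by three elementary steps: (i) derive the closed-form expression for $\hat{z}_{1,\mathrm{fs}}(s)/s$, (ii) read off the open-loop poles, and (iii) apply the standard breakaway-point computation. For step (i), I would substitute $\hat{c}_\mathrm{o}(s)=\hat{c}_{\mathrm{fs}}(s)$ from \eqref{eq:dy-fs} and $\hat{g}_\mathrm{o}(s)$ from \eqref{eq:go} into the definition of $\hat{z}_1(s)$ in \eqref{eq:zk-s} with $\lambda_1=0$, i.e., $\hat{z}_{1,\mathrm{fs}}(s)=\hat{g}_\mathrm{o}(s)/\bigl(1-\hat{g}_\mathrm{o}(s)\hat{c}_{\mathrm{fs}}(s)\bigr)$. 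The key algebraic observation is that the turbine-related term $d_\mathrm{t}/(\tau s+1)$ in $\hat{c}_{\mathrm{fs}}(s)$ is designed precisely to cancel the corresponding dynamics in $\hat{g}_\mathrm{o}(s)$: multiplying numerator and denominator through by $m\tau s^2+(m+d\tau)s+d+d_\mathrm{t}$ and simplifying, the $(\tau s+1)$ factor cancels and one is left with $\hat{z}_{1,\mathrm{fs}}(s)=1/\bigl(ms+d+d_\mathrm{b}+d_\mathrm{t}\bigr)$, a genuine first-order transfer function. Dividing by $s$ gives the claimed $\hat{L}_{\mathrm{fs}}(s)=\lambda/\bigl(ms^2+(d+d_\mathrm{b}+d_\mathrm{t})s\bigr)$ in \eqref{eq:Loopgain-fs}.

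Step (ii) is then immediate: the open-loop poles are the roots of $ms^2+(d+d_\mathrm{b}+d_\mathrm{t})s = s\bigl(ms+d+d_\mathrm{b}+d_\mathrm{t}\bigr)=0$, giving $s_1=0$ and $s_2=-(d+d_\mathrm{b}+d_\mathrm{t})/m$ as in \eqref{eq:open-pole-fs}; since $m,d,d_\mathrm{b},d_\mathrm{t}>0$, both poles are real with $s_2<0<s_1$ is actually $s_2<s_1=0$, and there are no finite open-loop zeros. For step (iii), I would invoke the standard rule that on a two-pole, no-zero loop gain the root locus consists of the real segment $[s_2,s_1]$ for small gain and a vertical asymptote for larger gain, with the breakaway point located where $d\hat{L}_{\mathrm{fs}}^{-1}/ds=0$ along the real axis, equivalently where $\frac{d}{ds}\bigl(ms^2+(d+d_\mathrm{b}+d_\mathrm{t})s\bigr)=0$. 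Solving $2ms+(d+d_\mathrm{b}+d_\mathrm{t})=0$ yields $\sigma_\mathrm{a}=-(d+d_\mathrm{b}+d_\mathrm{t})/(2m)$, which is exactly the midpoint of $s_1$ and $s_2$ and matches \eqref{eq:sigma-fs}; for gains above the critical value the two branches leave the real axis vertically at $\mathrm{Re}(s)=\sigma_\mathrm{a}$ and travel to infinity, which is the picture in Fig.~\ref{fig:rlocus-fs}. This also confirms the hypothesis invoked in Theorem~\ref{thm:colocate-z}: since the entire locus for $\lambda>0$ lies in the open left half-plane (the real segment is in $(s_2,0)$ and the complex branches have real part $\sigma_\mathrm{a}<0$), the precondition of Theorem~\ref{thm:colocate-z} is satisfied under FS.

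There is no serious obstacle here; the only point requiring a little care is the pole-zero cancellation in step (i), where one must check that the factor $\tau s+1$ cancels cleanly and does not leave a hidden mode — i.e., that the cancellation is exact rather than approximate and that the resulting first-order form is not an artifact of losing a pole. I would make this explicit by carrying out the polynomial simplification symbolically and noting that the cancellation is structural, reflecting the design intent of FS to render the COI loop first-order (which is precisely the Nadir-elimination property cited from \cite{jiang2021tac,jiangtps2021}). The breakaway computation and the identification of the vertical asymptote are then textbook applications of root-locus rules and warrant only a brief mention.
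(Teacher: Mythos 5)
Your proposal is correct and follows essentially the same route as the paper's own proof: substitute $\hat{c}_{\mathrm{fs}}$ and $\hat{g}_\mathrm{o}$ into \eqref{eq:zk-s} with $\lambda_1=0$ to obtain the first-order $\hat{z}_{1,\mathrm{fs}}(s)=1/(ms+d+d_\mathrm{b}+d_\mathrm{t})$ (the $(\tau s+1)$ cancellation you flag is exact, since the closed-loop denominator factors as $(\tau s+1)(ms+d+d_\mathrm{b}+d_\mathrm{t})$), then read off the two open-loop poles, and locate the breakaway point by the same gain-derivative condition, which coincides with the asymptote center at $-(d+d_\mathrm{b}+d_\mathrm{t})/(2m)$ with vertical branches. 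No gaps; your added remark that the whole locus lies in the open left half-plane, verifying the hypothesis of Theorem~\ref{thm:colocate-z}, matches the paper's subsequent use of the lemma.
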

\begin{figure}[t!]
\centering
\includegraphics[width=0.5\columnwidth]{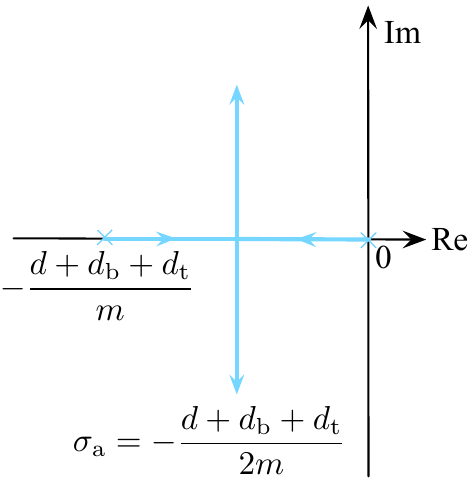}
\caption{Root locus of the system with loop-gain $\hat{L}_{\mathrm{fs}}(s)$ in \eqref{eq:Loopgain-fs}.}\label{fig:rlocus-fs}
\end{figure}
\begin{proof}
See the Appendix \ref{app:lem-rlocus-FS}.    
\end{proof}

The root locus in Fig.~\ref{fig:rlocus-fs} provided by Lemma~\ref{lem:rlocus-FS} makes $(\alpha, \psi)$-stability analysis of oscillatory modes under FS tractable, which combined with Theorem~\ref{thm:colocate-z} enables us to derive closed-form expressions of the minimum damping ratio and decay rate among all oscillatory subsystems summarized in the next theorem. Of course, all analysis relies on the premise required by Theorem~\ref{thm:colocate-z} that $\hat{c}_{\mathrm{fs}}(s)$ makes the root locus of $\mathcal{F}(\hat{z}_{1,{\mathrm{fs}}}(s)/s,\lambda)$ lie in the open left half-plane for any $\lambda>0$, which is clearly true in Fig.~\ref{fig:rlocus-fs}. 
\begin{thm}[Minimum damping ratio and decay rate under FS]\label{thm:osc-limit-fs}
For the system in Fig.~\ref{fig:model} under Assumption~\ref{ass:proportion}, if inverters adopt FS, i.e., $\hat{c}_\mathrm{o}(s)=\hat{c}_{\mathrm{fs}}(s)$ in \eqref{eq:dy-fs}, 
then the minimum damping ratio $\cos\overline{\psi}$ and the minimum decay rate $\underline{\alpha}$ among all $\hat{z}_{k,{\mathrm{fs}}}(s)$ for $k\in\mathcal{N}\setminus\{1\}$ can be explicitly determined as follows: 
\begin{itemize}
\item (minimum damping ratio)
\begin{align}\label{eq:min-damp-fs}
 \!\!\!\!\cos\overline{\psi}\!=\!\begin{cases}\dfrac{d\!+\!d_\mathrm{b}\!+\!d_\mathrm{t}}{2\sqrt{\lambda_n m}}&\!\!\textrm{if $0\leq d_\mathrm{b}< 2\sqrt{\lambda_n m}-d-d_\mathrm{t}$,}\\1&\!\!\textrm{otherwise;} 
 \end{cases}\!\!
\end{align}
\item (minimum decay rate)
\begin{align}
 &\underline{\alpha}\label{eq:min-decay-fs}\\&\!\!=\!\begin{cases}
\!\!\dfrac{d\!+\!d_\mathrm{b}\!+\!d_\mathrm{t}}{2m}\!=\!|\sigma_\mathrm{a}|\quad\!\!\textrm{if $0\leq d_\mathrm{b}\leq 2\sqrt{\lambda_2m}-d-d_\mathrm{t}$,}\nonumber\\
	\!\!\dfrac{d\!+\!d_\mathrm{b}\!+\!d_\mathrm{t}\!-\!\sqrt{(d\!+\!d_\mathrm{b}\!+\!d_\mathrm{t})^2\!-\!4\lambda_2m}}{2 m}\!<\!|\sigma_\mathrm{a}|\ \ \textrm{otherwise.}\!\!\!\!\!\nonumber
	\end{cases}   
\end{align}
\end{itemize}
\end{thm}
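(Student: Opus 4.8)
The plan is to combine Theorem~\ref{thm:colocate-z} with the explicit geometry of the root locus established in Lemma~\ref{lem:rlocus-FS}. By Theorem~\ref{thm:colocate-z}, for each $k\in\mathcal{N}\setminus\{1\}$ the poles of $\hat{z}_{k,\mathrm{fs}}(s)$ are exactly the closed-loop poles of $\mathcal{F}(\hat{z}_{1,\mathrm{fs}}(s)/s,\lambda)$ at gain $\lambda=\lambda_k$, i.e.\ the roots of the characteristic equation $m s^2+(d+d_\mathrm{b}+d_\mathrm{t})s+\lambda_k=0$. So the whole problem reduces to tracking the two roots of the quadratic $m s^2+(d+d_\mathrm{b}+d_\mathrm{t})s+\lambda=0$ as $\lambda$ sweeps the interval $[\lambda_2,\lambda_n]$, and reading off the worst damping ratio and worst decay rate over that range. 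Write $\beta:=(d+d_\mathrm{b}+d_\mathrm{t})/(2m)$ so that $|\sigma_\mathrm{a}|=\beta$ and the roots are $s=-\beta\pm\sqrt{\beta^2-\lambda/m}$; the "breakaway" value is $\lambda^\star:=\beta^2 m=(d+d_\mathrm{b}+d_\mathrm{t})^2/(4m)$, at which the two real roots coalesce at $\sigma_\mathrm{a}$ and beyond which they become complex with real part exactly $-\beta$.

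For the minimum damping ratio, I would argue monotonicity along the locus. For $\lambda\le\lambda^\star$ the poles are real and hence have damping ratio $1$; for $\lambda>\lambda^\star$ the poles are $-\beta\pm j\sqrt{\lambda/m-\beta^2}$, whose damping ratio is $\cos\psi=\beta/\sqrt{\lambda/m}=(d+d_\mathrm{b}+d_\mathrm{t})/(2\sqrt{\lambda m})$, a strictly decreasing function of $\lambda$. Therefore the minimum over $k\in\mathcal{N}\setminus\{1\}$ is attained at the largest eigenvalue $\lambda_n$ — provided $\lambda_n$ lies in the complex branch, i.e.\ $\lambda_n>\lambda^\star$, which is equivalent to $d_\mathrm{b}<2\sqrt{\lambda_n m}-d-d_\mathrm{t}$; if instead $\lambda_n\le\lambda^\star$, every $\lambda_k$ gives real poles and $\cos\overline\psi=1$. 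This yields exactly the two-case formula \eqref{eq:min-damp-fs}. (One small point to dispatch: we need $\lambda_n\ge\lambda_2$, which is immediate, and the edge case $\lambda_n=\lambda^\star$ where both branches agree.)

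For the minimum decay rate, the decay rate of a root $s$ is $\alpha=-\mathrm{Re}(s)$; the relevant root is the one closer to the imaginary axis, i.e.\ $-\beta+\sqrt{\beta^2-\lambda/m}$ when $\lambda<\lambda^\star$ and $-\beta$ when $\lambda\ge\lambda^\star$. On the real branch, $\alpha(\lambda)=\beta-\sqrt{\beta^2-\lambda/m}$ is strictly increasing in $\lambda$ up to $\beta$; on the complex branch it is constant at $\beta=|\sigma_\mathrm{a}|$. Hence $\alpha$ as a function of $\lambda$ is nondecreasing, so its minimum over $\lambda\in\{\lambda_2,\dots,\lambda_n\}$ is attained at $\lambda=\lambda_2$. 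Evaluating: if $\lambda_2\ge\lambda^\star$ (equivalently $d_\mathrm{b}\le 2\sqrt{\lambda_2 m}-d-d_\mathrm{t}$) then $\underline\alpha=\beta=|\sigma_\mathrm{a}|$; otherwise $\lambda_2<\lambda^\star$ and $\underline\alpha=\beta-\sqrt{\beta^2-\lambda_2/m}$, which expands to the second line of \eqref{eq:min-decay-fs} after clearing the $2m$ denominator, and is strictly less than $|\sigma_\mathrm{a}|$ since $\sqrt{\beta^2-\lambda_2/m}>0$.

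The main obstacle — really the only nontrivial point — is justifying the two monotonicity claims rigorously and, in particular, being careful about which of the two roots governs each metric and about the boundary cases where $\lambda_2$ or $\lambda_n$ sits exactly at the breakaway point $\lambda^\star$; this is where a naive argument could miss a case or pick the wrong root. Everything else is elementary algebra on a scalar quadratic, plus citing Theorem~\ref{thm:colocate-z} and Lemma~\ref{lem:rlocus-FS} for the reduction and for the closed-form of $\sigma_\mathrm{a}$ and the open-loop poles. A clean way to organize the write-up is: (i) reduce to the quadratic via Theorem~\ref{thm:colocate-z}; (ii) parametrize the locus by $\lambda$ and identify $\lambda^\star$; (iii) prove $\cos\psi(\lambda)$ is nonincreasing and conclude the damping-ratio formula at $\lambda_n$; (iv) prove $\alpha(\lambda)$ is nondecreasing and conclude the decay-rate formula at $\lambda_2$; (v) translate the threshold conditions on $\lambda_n,\lambda_2$ into the stated inequalities on $d_\mathrm{b}$.
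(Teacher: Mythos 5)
Your proposal is correct and follows essentially the same route as the paper: reduce via Theorem~\ref{thm:colocate-z} to the closed-loop poles of $\hat{L}_{\mathrm{fs}}(s)$ at gains $\lambda_2,\dots,\lambda_n$, identify the breakaway gain $m\sigma_\mathrm{a}^2=(d+d_\mathrm{b}+d_\mathrm{t})^2/(4m)$, and read off the worst damping ratio at $\lambda_n$ and the worst decay rate at $\lambda_2$ (the paper phrases the same facts through the root-locus magnitude condition and its geometry, while you solve the quadratic explicitly and argue monotonicity of $\cos\psi(\lambda)$ and $\alpha(\lambda)$, which is an equivalent and equally valid presentation). The only small point to make explicit in a final write-up is that the premise of Theorem~\ref{thm:colocate-z} (locus in the open left half-plane for all $\lambda>0$) holds here, which is immediate from Lemma~\ref{lem:rlocus-FS}.
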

\begin{proof}
See the Appendix~\ref{app:osc-limit-fs-pf}.
\end{proof}

By quantifying the limit $(\underline{\alpha}, \overline{\psi})$-stability of all subsystems associated with oscillatory behavior, Theorem~\ref{thm:osc-limit-fs} offers valuable insight into how the network spectrum and parameter values of a power system influence oscillatory stability under FS. First and foremost, regarding the impact of the network spectrum, the damping ratio $\cos\overline{\psi}$ and decay rate $\underline{\alpha}$ are determined exclusively by the largest eigenvalue $\lambda_n$ and the Fiedler eigenvalue $\lambda_2$ of the scaled Laplacian matrix $\boldsymbol{L}$, respectively. Moreover, both $\cos\overline{\psi}$ and $\underline{\alpha}$ are piecewise functions of the inverter inverse droop $d_\mathrm{b}$, albeit with different partition points. To make these functional relations more accessible, we now visualize \eqref{eq:min-damp-fs} and \eqref{eq:min-decay-fs} as Fig.~\ref{fig:fs-metric-db} using parameters from the modified WSCC system. Some observations are in order. First, the damping ratio $\cos\overline{\psi}$ increases linearly with $d_\mathrm{b}$ at a rate of  $1/(2\sqrt{\lambda_n m})$ until $d_\mathrm{b}=2\sqrt{\lambda_n m}-d-d_\mathrm{t}$, which plays the role of a saturation point after which $\cos\overline{\psi}$ stays at $1$ even if $d_\mathrm{b}$ increases. This sensitivity related to $\lambda_n$ suggests that, for the system to be robust against frequency oscillations, the transmission network should be designed with small $\lambda_n$ which leads to a larger damping ratio. 
Second, the decay rate $\underline{\alpha}$ increases linearly with $d_\mathrm{b}$ at a rate of $1/(2m)$ until reaching its limit $\sqrt{\lambda_2 /m}$ when $d_\mathrm{b}=2\sqrt{\lambda_2 m}-d-d_\mathrm{t}$, beyond which it decreases monotonically.

\begin{figure}[t!]
\centering
\subfigure[Minimum damping ratio as a function of $d_\mathrm{b}$ ]
{\includegraphics[width=\columnwidth]{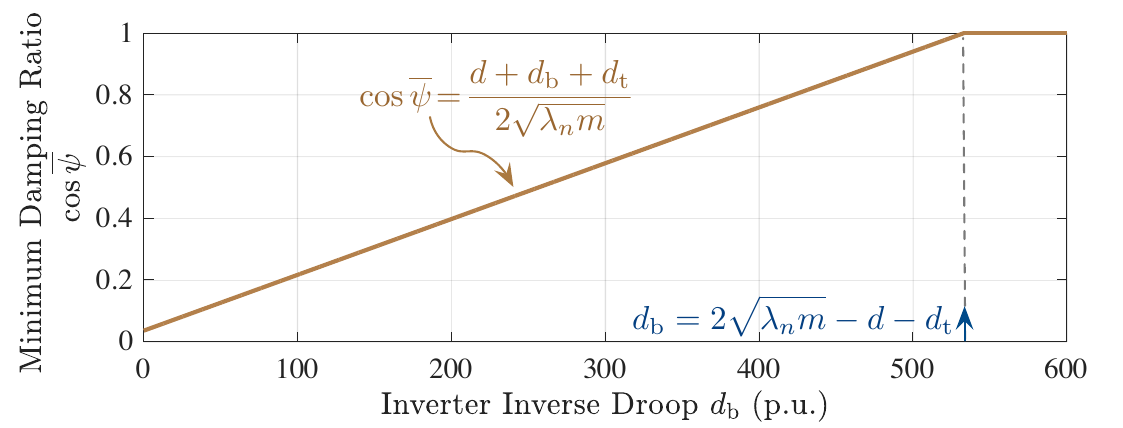}\label{fig:damping-db-fs}}
\subfigure[Minimum decay rate as a function of $d_\mathrm{b}$ ]
{\includegraphics[width=\columnwidth]{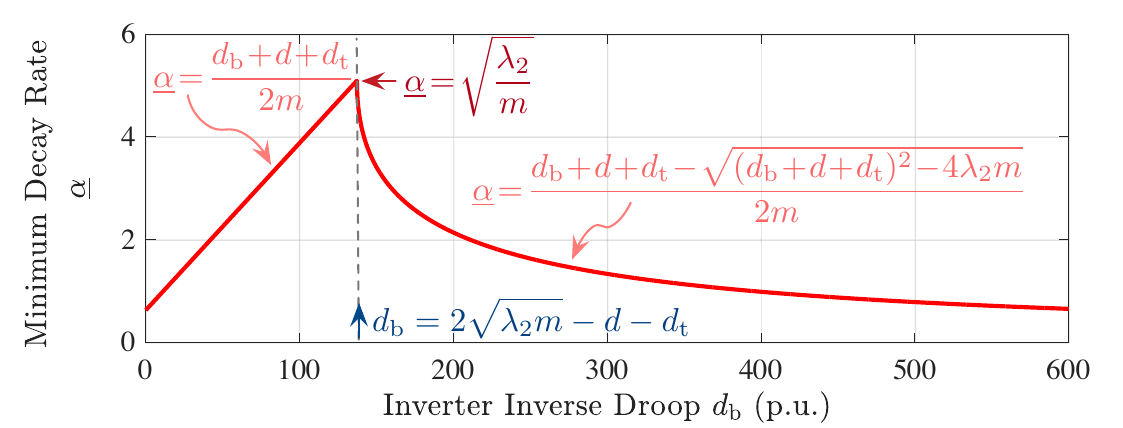}\label{fig:decay-db-fs}}
\hfil
\caption{Effect of the inverter inverse droop $d_\mathrm{b}$ under FS on the oscillation damping ratio and decay rate in the modified WSCC system.}
\label{fig:fs-metric-db}
\end{figure}

Thus, collectively, the two panels in Fig.~\ref{fig:fs-metric-db} indicate that the effect of $d_\mathrm{b}$ on oscillatory stability varies considerably across three regimes:
(i) If $d_\mathrm{b}\in[0,2\sqrt{\lambda_2 m}-d-d_\mathrm{t})$, increasing $d_\mathrm{b}$ leads to a simultaneous increase in $\cos\overline{\psi}$ and $\underline{\alpha}$; (ii) If $d_\mathrm{b}\in[2\sqrt{\lambda_2 m}-d-d_\mathrm{t},2\sqrt{\lambda_n m}-d-d_\mathrm{t})$, increasing $d_\mathrm{b}$ further raises $\cos\overline{\psi}$ but reduces $\underline{\alpha}$, indicating a trade-off between $\cos\overline{\psi}$ and $\underline{\alpha}$; (iii) If $d_\mathrm{b}\in[2\sqrt{\lambda_n m}-d-d_\mathrm{t},\infty)$, increasing $d_\mathrm{b}$ offers no benefit since $\cos\overline{\psi}$ stays at $1$ and $\underline{\alpha}$ decreases further. 

Finally, the limit $(\underline{\alpha}, \overline{\psi})$-stability of each $\hat{z}_{k,{\mathrm{fs}}}(s)$, $\forall k\in\mathcal{N}\setminus\{1\}$, established by Theorem~\ref{thm:osc-limit-fs}, combined with the intrinsic Nadir elimination property of FS, guarantees that the frequency deviations of the whole system converge exponentially fast to steady-state value with a rate of at least $\underline{\alpha}$. This
is captured by the following proposition.

\begin{prop}[Frequency convergence rate under FS]\label{pro:sys-rate}
Under Assumption~\ref{ass:proportion}, if the power system in Fig.~\ref{fig:model} with FS, i.e., $\hat{c}_\mathrm{o}(s)=\hat{c}_{\mathrm{fs}}(s)$ in \eqref{eq:dy-fs},  undergoes step power disturbances $\boldsymbol{p}=\boldsymbol{u}_0 \mathds{U}_{ t \geq 0 }$, then the frequency deviations $\boldsymbol{\omega}_{\mathrm{fs}}(t)=\bar{\omega}_{\mathrm{fs}}(t)\mathbbold{1}_n+\tilde {\boldsymbol{\omega}}_{\mathrm{fs}}(t)$ of the system synchronize to the steady-state value 
\begin{align}\label{eq:coi-fs-inf}
\bar{\omega}_{\mathrm{fs}}(\infty)=\dfrac{\sum_{i=1}^n u_{0,i}}{(d+d_\mathrm{b} + d_{\mathrm{t}})\sum_{i=1}^nr_i}    
\end{align} 
exponentially fast with a rate $\underline{\alpha}$ determined by \eqref{eq:min-decay-fs}, i.e.,
\begin{align}\label{eq:exp-wfs}
    \|\boldsymbol{\omega}_{\mathrm{fs}}(t)-\bar{\omega}_{\mathrm{fs}}(\infty)\mathbbold{1}_n\|<\delta e^{-\underline{\alpha}t}
\end{align}
for some constant $\delta>0$.
Particularly, 
\begin{itemize}
\item the COI frequency deviation $\bar{\omega}_{\mathrm{fs}}(t)$ evolves as 
\begin{align} \label{eq:coi-fs-t}
 \bar{\omega}_{\mathrm{fs}}(t)=\bar{\omega}_{\mathrm{fs}}(\infty)\left(1-e^{-\frac{d+d_\mathrm{b} + d_{\mathrm{t}}}{m}t}\right) \,,  
\end{align}
which converges to $\bar{\omega}_{\mathrm{fs}}(\infty)$ exponentially fast with a rate $(d+d_\mathrm{b}+d_\mathrm{t})/m>\underline{\alpha}$;
\item all frequency oscillations $\tilde {\boldsymbol{\omega}}_{\mathrm{fs}}(t)$ decay exponentially fast to zero with a rate of at least $\underline{\alpha}$. 
\end{itemize}
\end{prop}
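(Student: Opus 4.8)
The plan is to work entirely within the modal decomposition \eqref{eq:omega-t-decomple}--\eqref{eq:w-til-t}, which already writes $\boldsymbol{\omega}_{\mathrm{fs}}(t)=\bar{\omega}_{\mathrm{fs}}(t)\mathbbold{1}_n+\tilde{\boldsymbol{\omega}}_{\mathrm{fs}}(t)$. The proposition then splits into three pieces: (i) obtain $\bar{\omega}_{\mathrm{fs}}(t)$ in closed form together with its limit and its convergence rate; (ii) bound $\tilde{\boldsymbol{\omega}}_{\mathrm{fs}}(t)$ by a single exponential $e^{-\underline{\alpha}t}$; (iii) recombine by the triangle inequality.

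For (i), I would first read off from Lemma~\ref{lem:rlocus-FS}, specifically \eqref{eq:Loopgain-fs}, that $\hat{z}_{1,{\mathrm{fs}}}(s)=1/(ms+d+d_\mathrm{b}+d_\mathrm{t})$ is a first-order lag. With $\hat{\boldsymbol{p}}(s)=\boldsymbol{u}_0/s$, the step response $z_{\mathrm{u},1}(t)=\mathscr{L}^{-1}\{\hat{z}_{1,{\mathrm{fs}}}(s)/s\}$ is the textbook $\tfrac{1}{d+d_\mathrm{b}+d_\mathrm{t}}(1-e^{-(d+d_\mathrm{b}+d_\mathrm{t})t/m})$; substituting into \eqref{eq:w-bar-t} yields \eqref{eq:coi-fs-t}, and the final-value theorem gives \eqref{eq:coi-fs-inf}. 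The COI rate is therefore $(d+d_\mathrm{b}+d_\mathrm{t})/m$, and comparing with \eqref{eq:min-decay-fs}--\eqref{eq:sigma-fs} gives $\underline{\alpha}\le|\sigma_\mathrm{a}|=(d+d_\mathrm{b}+d_\mathrm{t})/(2m)<(d+d_\mathrm{b}+d_\mathrm{t})/m$, which is the claimed strict inequality.

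For (ii), the key observation is that $\hat{z}_{k,{\mathrm{fs}}}(s)$ carries a zero at the origin --- exactly the extra origin zero identified in the proof of Theorem~\ref{thm:colocate-z} via the block-diagram move in Fig.~\ref{fig:zk-v2} --- so $\hat{z}_{k,{\mathrm{fs}}}(s)/s$ has \emph{no} pole at the origin, and its only poles are those of $\hat{z}_{k,{\mathrm{fs}}}(s)$. By Theorem~\ref{thm:colocate-z} these are the $\lambda=\lambda_k$ points on the root locus of $\hat{L}_{\mathrm{fs}}(s)$, and by Theorem~\ref{thm:osc-limit-fs} they all lie in the open left-half plane with real part at most $-\underline{\alpha}$ for every $k\in\mathcal{N}\setminus\{1\}$. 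Hence each $z_{\mathrm{u},k}(t)$ is a finite sum of terms of the form $q(t)e^{\mu t}$ with $q$ polynomial and $\mathrm{Re}\,\mu\le-\underline{\alpha}$, each dominated by $c_k e^{-\underline{\alpha}t}$; a nonconstant $q$ appears only in the nongeneric repeated-pole configuration, where the double pole sits at $\sigma_\mathrm{a}$ and the prefactor is absorbed harmlessly because then $|\sigma_\mathrm{a}|>\underline{\alpha}$, except at the boundary value $d_\mathrm{b}=2\sqrt{\lambda_2 m}-d-d_\mathrm{t}$, which can be excluded or handled with an arbitrarily small slack in the rate. Since \eqref{eq:w-til-t} expresses $\tilde{\boldsymbol{\omega}}_{\mathrm{fs}}(t)$ as a \emph{fixed} linear combination of the $z_{\mathrm{u},k}(t)$ with constant matrices $\boldsymbol{R}^{-\frac{1}{2}}\boldsymbol{v}_k\boldsymbol{v}_k^T\boldsymbol{R}^{-\frac{1}{2}}\boldsymbol{u}_0$, summing the $n-1$ bounds gives $\|\tilde{\boldsymbol{\omega}}_{\mathrm{fs}}(t)\|\le\delta_1 e^{-\underline{\alpha}t}$ for some $\delta_1>0$, the last bullet.

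For (iii), the triangle inequality gives $\|\boldsymbol{\omega}_{\mathrm{fs}}(t)-\bar{\omega}_{\mathrm{fs}}(\infty)\mathbbold{1}_n\|\le\sqrt{n}\,|\bar{\omega}_{\mathrm{fs}}(t)-\bar{\omega}_{\mathrm{fs}}(\infty)|+\|\tilde{\boldsymbol{\omega}}_{\mathrm{fs}}(t)\|\le\sqrt{n}\,|\bar{\omega}_{\mathrm{fs}}(\infty)|e^{-(d+d_\mathrm{b}+d_\mathrm{t})t/m}+\delta_1 e^{-\underline{\alpha}t}$, and since $(d+d_\mathrm{b}+d_\mathrm{t})/m>\underline{\alpha}$ this is at most $\delta e^{-\underline{\alpha}t}$ for a suitable $\delta>0$, proving \eqref{eq:exp-wfs}; synchronization of every $\omega_{\mathrm{fs},i}(t)$ to $\bar{\omega}_{\mathrm{fs}}(\infty)$ then follows immediately. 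I expect the main obstacle to be item (ii): everything there hinges on the cancellation of the origin pole in $\hat{z}_{k,{\mathrm{fs}}}(s)/s$ (so the oscillatory modes genuinely decay rather than leaving a steady offset) together with the uniform real-part bound $-\underline{\alpha}$ supplied by Theorem~\ref{thm:osc-limit-fs}, with the only real fuss being the bookkeeping of the polynomial prefactor in the degenerate double-pole case.
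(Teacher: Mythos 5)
Your proposal is correct and follows essentially the same route as the paper's own proof: the modal decomposition into $\bar{\omega}_{\mathrm{fs}}(t)\mathbbold{1}_n+\tilde{\boldsymbol{\omega}}_{\mathrm{fs}}(t)$, the explicit first-order step response of $\hat{z}_{1,{\mathrm{fs}}}(s)$ for the COI part, exponential bounds on the oscillatory modes from the pole locations supplied by Theorems~\ref{thm:colocate-z} and~\ref{thm:osc-limit-fs}, and a final triangle inequality. The only differences are cosmetic: you obtain $(d+d_\mathrm{b}+d_\mathrm{t})/m>\underline{\alpha}$ algebraically via $\underline{\alpha}\le|\sigma_\mathrm{a}|=(d+d_\mathrm{b}+d_\mathrm{t})/(2m)$ rather than by the paper's ``leftmost pole on the locus'' remark, and your explicit treatment of the critically damped boundary case (polynomial prefactor when $\lambda_2=m\sigma_\mathrm{a}^2$) is, if anything, more careful than the paper, which simply asserts the bound $|z_{\mathrm{u},k,{\mathrm{fs}}}(t)|\leq\beta_k e^{-\underline{\alpha}t}$ by citing an external lemma.
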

\begin{proof}
See the Appendix~\ref{app:sys-rate-pf}.
\end{proof}

Proposition~\ref{pro:sys-rate} shows that the minimum decay rate $\underline{\alpha}$ among all subsystems bottlenecks the whole system convergence rate in a straightforward way: $\underline{\alpha}$ is exactly the system-wide convergence rate. Unfortunately, there is no single system-wide damping ratio in multi-input multi-output systems. That is why we settle for the second best by focusing on the mode with the minimum damping ratio $\cos\overline{\psi}$. This damping ratio $\cos\overline{\psi}$ characterized by Theorem~\ref{thm:osc-limit-fs} can be interpreted as a system robustness metric against oscillations. 
\begin{rem}[Discrepancy between pole-specific and system-level damping ratios in second-order systems] Even if each $\hat{z}_{k,{\mathrm{fs}}}(s)$ is a single-input single-output second-order system, there is subtle difference between the damping ratio of a pole and the damping ratio of the system in each $\hat{z}_{k,{\mathrm{fs}}}(s)$. The former quantifies the damping ratio of each pole individually via a geometric measure $\cos\psi$, while the latter is defined using characteristic polynomial coefficients. Their values match when a system has complex-conjugate poles or repeated real poles, corresponding to under damped case (damping ratio in $(0,1)$) and critically damped case (damping ratio $1$), respectively. However, for a system with distinct real poles, as long as a pole lies on the negative real-axis, its pole-specific damping ratio equals $\cos0\degree=1$, while the system-level damping ratio exceeds $1$, classifying the system as over damped. These two damping ratio concepts are not contradictory. Owing to its broader generalizability, the pole-specific damping ratio is adopted herein, as in practical applications such as MATLAB root locus plots.
\end{rem}
  
\subsection{Tuning Guidance for Desired Frequency Performance}\label{ssec:al-tune}
Recall from Section~\ref{ssec:tune-goal} that our goal is to tune FS for satisfactory frequency security and oscillatory stability, which is the core of this subsection. 

Suppose that the desired values of the damping ratio and decay rate are $\cos\psi_{\mathrm{d}}\in(0,1]$ and $\alpha_{\mathrm{d}}>0$, respectively. The information provided by Fig.~\ref{fig:fs-metric-db} enables us to formulate an easy-to-use oscillation damping tuning guide for FS to obtain satisfactory performance. Provided that $\cos\overline{\psi}$ and $\underline{\alpha}$ are functions of $d_\mathrm{b}$, in what follows we denote them by $\cos\overline{\psi}(d_\mathrm{b})$ and $\underline{\alpha}(d_\mathrm{b})$. Recall that the typical damping ratio $\cos\psi_{\mathrm{d}}$ required by grid codes is around $0.1$~\cite{WP2021}, which can be easily achieved by $d_\mathrm{b}\in[0,2\sqrt{\lambda_2 m}-d-d_\mathrm{t})$ according to Fig.~\ref{fig:fs-metric-db}. Thus, we henceforth restrict our discussion to the regime $d_\mathrm{b}\in[0,2\sqrt{\lambda_2 m}-d-d_\mathrm{t})$, in which both $\cos\overline{\psi}$ and $\underline{\alpha}$ increase linearly with $d_\mathrm{b}$. Admittedly, if the desired decay rate is so demanding that it exceeds the largest decay rate that FS can achieve, i.e., $\max_{d_\mathrm{b}}\underline{\alpha}(d_\mathrm{b})=\sqrt{\lambda_2/ m}\leq\alpha_{\mathrm{d}}$, then one has to resort to an alternative control method, which is out of the scope of this manuscript. Otherwise, the actual tuning of FS can be determined based on the existing system performance. Basically, if the existing system suffices to provide a satisfactory damping ratio and/or decay rate, then no additional compensation via $d_\mathrm{b}$ is needed for that particular oscillatory metric. That is, if
\begin{align*}
\cos\overline{\psi}(0)=\dfrac{d+d_\mathrm{t}}{2\sqrt{\lambda_n m}}\geq \cos\psi_{\mathrm{d}}\quad\text{or}\quad\underline{\alpha}(0)=\dfrac{d+d_\mathrm{t}}{2m}\geq\alpha_{\mathrm{d}} \,, 
\end{align*}
i.e.,
\begin{align}\label{eq:nece-comp-fs}
2\sqrt{\lambda_n m}\cos\psi_{\mathrm{d}}-d-d_\mathrm{t}\!\leq\! 0\ \ \text{or}\ \ 2m\alpha_{\mathrm{d}}-d-d_\mathrm{t}\!\leq\! 0\,,       \end{align}
then there is no need to introduce additional inverse droop $d_\mathrm{b}$ to improve $\cos\overline{\psi}$ or $\underline{\alpha}$, respectively. Once a deficiency in the damping ratio or decay rate is identified with the aid of \eqref{eq:nece-comp-fs}, the minimum required $d_\mathrm{b}$ to compensate for $\cos\overline{\psi}$ or $\underline{\alpha}$ can be solved from
\begin{align*}
    \cos\overline{\psi}(d_\mathrm{b})\!=\!\dfrac{d\!+\!d_\mathrm{b}\!+\!d_\mathrm{t}}{2\sqrt{\lambda_n m}}\!=\! \cos\psi_{\mathrm{d}}\ \text{or}\ \underline{\alpha}(d_\mathrm{b})\!=\!\dfrac{d\!+\!d_\mathrm{b}\!+\!d_\mathrm{t}}{2m}\!=\!\alpha_{\mathrm{d}}\,,
\end{align*}
respectively, which yields
\begin{align}\label{eq:comp-fs}
d_\mathrm{b}\!=\!2\sqrt{\lambda_n m}\cos\psi_{\mathrm{d}}-d-d_\mathrm{t}\ \text{or}\ d_\mathrm{b}\!=\!2m\alpha_{\mathrm{d}}-d-d_\mathrm{t}\,,       \end{align}
respectively. Combining above discussions related to \eqref{eq:nece-comp-fs} and \eqref{eq:comp-fs}, we can formulate the minimum required $d_\mathrm{b}$ to guarantee the desired damping ratio $\cos\psi_{\mathrm{d}}$ and decay rate $\alpha_{\mathrm{d}}$ compactly as:
\begin{align}\label{eq:db-osci}
    \!\!\!d_{\mathrm{b},\mathrm{osc}}\!:=\!\max \!\left(0,2\sqrt{\lambda_n m}\cos\psi_{\mathrm{d}}\!-\!d\!-\!d_\mathrm{t},2m\alpha_{\mathrm{d}}\!-\!d\!-\!d_\mathrm{t}\!\right)\!\!\,.\!\!
\end{align}

Notably, the tuning of $d_\mathrm{b}$ in \eqref{eq:db-osci} solely addresses the oscillatory stability. In general, it remains crucial to ensure frequency security as described by Section~\ref{ssec:tune-goal}. Similar to the tuning recommendation for FS in a single-machine power system~\cite{jiangtps2021}, this can be verified through simple algebraic calculations since FS enjoys the nice Nadir elimination property~\cite[Theorem 9]{jiang2021tac}. More precisely, if the power system in Fig.~\ref{fig:model} undergoes step power disturbances $\boldsymbol{p}=\boldsymbol{u}_0 \mathds{U}_{ t \geq 0 }$, then the minimum required $d_\mathrm{b}$ to ensure that $\bar{\omega}_{\mathrm{fs}}(t)$ stays within $\pm\Delta\omega_{ \mathrm{d}}$ when the magnitude of the net power imbalance $|\sum_{i=1}^n u_{0,i}|=\Delta P$ is given by 
\begin{align}\label{eq:db-ss}
    d_{\mathrm{b},\mathrm{COI}}:=\max \!\left(0,\frac{\Delta P}{(\sum_{i=1}^n r_i)\Delta\omega_{\mathrm{d}}}-d-d_\mathrm{t}\right)\,.
\end{align}
Here, the second term is compensation needed on the inverse droop to ensure frequency security, which can be easily determined using \eqref{eq:coi-fs-inf} and only takes effect when its value is positive.


To summarize, one can simply tune $d_{\mathrm{b}}$ for FS in \eqref{eq:dy-fs} as 
\begin{align}\label{eq:db-osci-coi}
    d_{\mathrm{b}}=\max \!\left(d_{\mathrm{b},\mathrm{COI}}, d_{\mathrm{b},\mathrm{osc}}\right)\,,
\end{align}
where $d_{\mathrm{b},\mathrm{COI}}$ determined from \eqref{eq:db-ss} ensures that the COI frequency stays within the acceptable range and $d_{\mathrm{b},\mathrm{osc}}$ determined from \eqref{eq:db-osci} ensures specified oscillatory stability. Note that, as mentioned at the beginning of this subsection, the underlying assumption for the tuning recommendation in \eqref{eq:db-osci-coi} is $d_{\mathrm{b},\mathrm{COI}}, d_{\mathrm{b},\mathrm{osc}}\in[0,2\sqrt{\lambda_2 m}-d-d_\mathrm{t})$. Nevertheless, this assumption can be moderately relaxed. For example, \eqref{eq:db-osci-coi} still works if $ d_{\mathrm{b},\mathrm{osc}}\in[0,2\sqrt{\lambda_2 m}-d-d_\mathrm{t})$ and $d_{\mathrm{b},\mathrm{COI}}\in[0,(m\alpha_{\mathrm{d}}^2+\lambda_2)/\alpha_{\mathrm{d}}-d-d_\mathrm{t})$. To see this, supposing that $d_{\mathrm{b},\mathrm{COI}}$ determined from \eqref{eq:db-ss} exceeds $(2\sqrt{\lambda_2 m}-d-d_\mathrm{t})$, we observe from Fig.~\ref{fig:fs-metric-db} that $d_{\mathrm{b}}$ computed from \eqref{eq:db-osci-coi} would lead to a damping ratio $\cos\overline{\psi}(d_{\mathrm{b},\mathrm{COI}})>\cos\psi_{\mathrm{d}}$ and a decay rate given by
\begin{align*} \dfrac{d+d_{\mathrm{b},\mathrm{COI}}+d_\mathrm{t}-\sqrt{(d+d_{\mathrm{b},\mathrm{COI}}+d_\mathrm{t})^2-4\lambda_2m}}{2 m}
\end{align*}
which is still satisfactory as long as its value is not less than $\alpha_{\mathrm{d}}$. Direct calculations show
that  
\begin{align*}
\dfrac{d+d_{\mathrm{b},\mathrm{COI}}+d_\mathrm{t}-\sqrt{(d+d_{\mathrm{b},\mathrm{COI}}+d_\mathrm{t})^2-4\lambda_2m}}{2 m}\geq \alpha_{\mathrm{d}}   
\end{align*}
is equivalent to 
\begin{align*}
   2\sqrt{\lambda_2 m}-d-d_\mathrm{t}\leq d_{\mathrm{b},\mathrm{COI}}\leq \dfrac{m\alpha_{\mathrm{d}}^2+\lambda_2}{\alpha_{\mathrm{d}}} -d-d_\mathrm{t}\,,
\end{align*}
which justifies the relaxation of $d_{\mathrm{b},\mathrm{COI}}$ to $d_{\mathrm{b},\mathrm{COI}}\in[0,(m\alpha_{\mathrm{d}}^2+\lambda_2)/\alpha_{\mathrm{d}}-d-d_\mathrm{t})$.

\subsection{Generalized Tuning via Visualization}
\label{ssec:visual}
The tuning \eqref{eq:db-osci-coi} recommended by the previous subsection focuses mainly on the linear region of Fig.~\ref{fig:fs-metric-db}, which should suffice for most practical needs. Nevertheless, for the sake of completeness, we would like to discuss more about how to take full advantage of Fig.~\ref{fig:fs-metric-db} by transforming it into Fig.~\ref{fig:achievable-fs}. 

\begin{figure}[t!]
\centering
\includegraphics[width=\columnwidth]{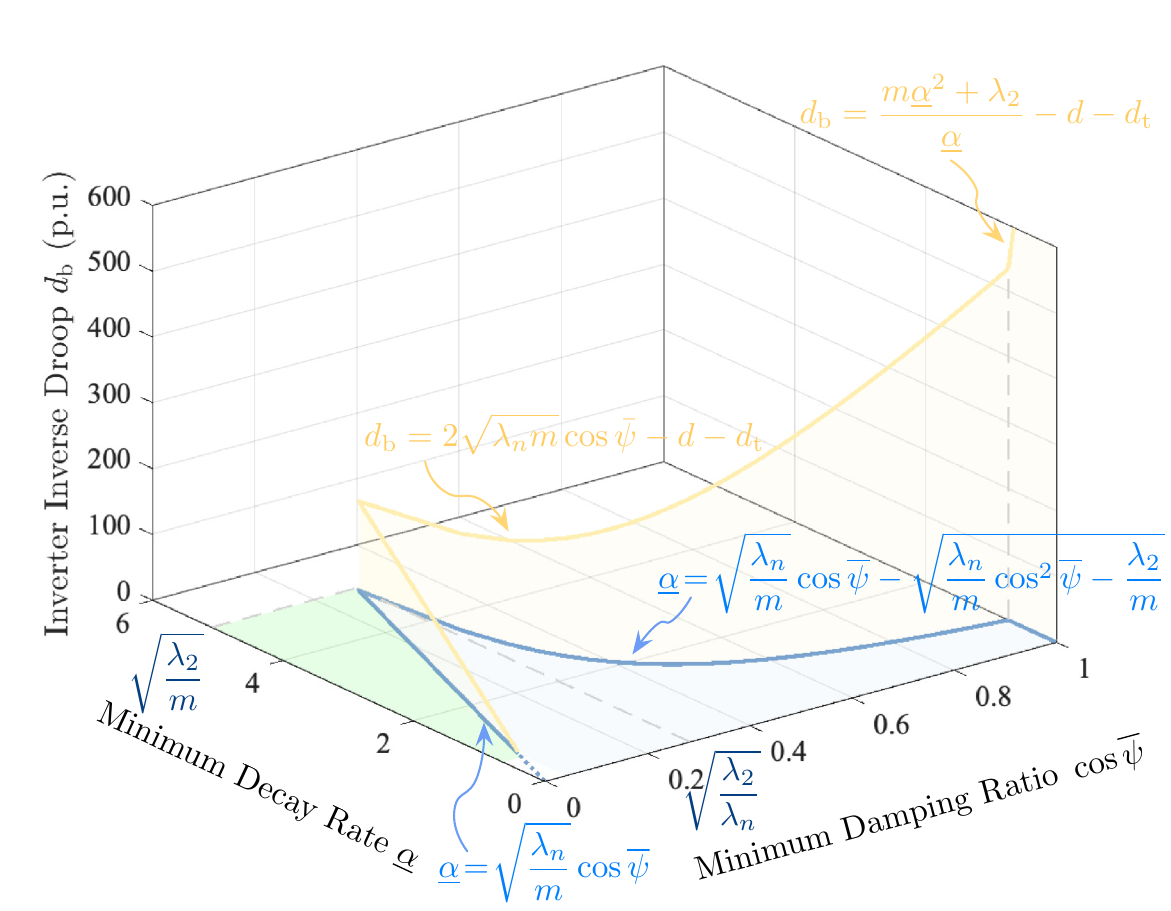}
\caption{Achievable combinations of the damping ratio $\cos\overline{\psi}$ and decay rate $\underline{\alpha}$ (solid blue line) as well as corresponding inverter inverse droop $d_{\mathrm{b}}$ (solid lemon line) under FS in the modified WSCC system.}\label{fig:achievable-fs}
\end{figure}

To obtain Fig.~\ref{fig:achievable-fs} from Fig.~\ref{fig:fs-metric-db}, the key is to derive the relation between $\underline{\alpha}$ and $\cos\overline{\psi}$ by eliminating the intermediate variable $d_{\mathrm{b}}$. This can be done by considering the cases $\cos\overline{\psi}\in [\cos\overline{\psi}(0),1)$ and $\cos\overline{\psi}=1$, separately.

If $\cos\overline{\psi}\in [\cos\overline{\psi}(0),1)$, then $d_{\mathrm{b}}$ is linear to $\cos\overline{\psi}$ via
\begin{equation}
d_\mathrm{b}=2\sqrt{\lambda_n m}\cos\overline{\psi}-d-d_\mathrm{t}\label{eq:db-damp-linear}
\end{equation}
according to Fig.~\ref{fig:damping-db-fs}. Substituting \eqref{eq:db-damp-linear} to the expression of $\underline{\alpha}$ in \eqref{eq:min-decay-fs} yields 
\begin{numcases}{\underline{\alpha}\!=\!}
   \!\!\sqrt{\dfrac{\lambda_n}{m}}\cos\overline{\psi}\quad\!\!\textrm{if $\dfrac{d+d_\mathrm{t}}{2\sqrt{\lambda_n m}}\leq \cos\overline{\psi}\leq \sqrt{\dfrac{\lambda_2}{\lambda_n}}$,} \rlap{\hspace{38pt}\setcounter{equation}{\value{equation}-1}\refstepcounter{equation}\label{eq:min-decay-damp-fs}(\theequation)}\nonumber 
   \\
   \!\!\sqrt{\dfrac{\lambda_n}{m}}\cos\overline{\psi}\!-\!\sqrt{ \dfrac{\lambda_n}{m}\cos^2\overline{\psi}\!-\!\dfrac{\lambda_2}{m}}\ \ \textrm{if $\sqrt{\dfrac{\lambda_2}{\lambda_n}}\!< \cos\overline{\psi}<1$.}\!\!\!\!\nonumber 
\end{numcases}

\addtocounter{equation}{1}

If $\cos\overline{\psi}=1$, theoretically, $\underline{\alpha}$ can be anything in 
\begin{equation}\label{eq:decay-damp1-fs}
\left.\left(0,\sqrt{\dfrac{\lambda_n}{m}}-\sqrt{\dfrac{\lambda_n-\lambda_2}{m}}\right.\right]    
\end{equation}
as long as we pick a finite $d_\mathrm{b}\geq2\sqrt{\lambda_n m}-d-d_\mathrm{t}$ properly. Specifically, the $d_\mathrm{b}$ leading to a particular $\underline{\alpha}$ can be determined from the nonlinear part of Fig.~\ref{fig:decay-db-fs} that falls into the second situation in \eqref{eq:min-decay-fs} as
\begin{align}\label{eq:db-damp-nonlinear}
    d_{\mathrm{b}}= \dfrac{m\underline{\alpha}^2+\lambda_2}{\underline{\alpha}} -d-d_\mathrm{t}\,.
\end{align}

Now, combining \eqref{eq:min-decay-damp-fs} and \eqref{eq:decay-damp1-fs} yields all achievable combinations of the damping ratio $\cos\overline{\psi}$ and decay rate $\underline{\alpha}$ under FS, plotted as the solid blue line in Fig.~\ref{fig:achievable-fs}. To achieve any possible combination along this solid blue line, the required inverter inverse droop $d_{\mathrm{b}}$ is visualized as the height of the point on the solid lemon line that is directly above that combination, which results from \eqref{eq:db-damp-linear} and \eqref{eq:db-damp-nonlinear}.  

Fig.~\ref{fig:achievable-fs} offers a visual tool to rapidly determine: (i) whether $\cos\overline{\psi}\geq\cos\psi_{\mathrm{d}}$ and $\underline{\alpha}\geq\alpha_{\mathrm{d}}$ can be guaranteed for any given $\cos\psi_{\mathrm{d}}$ and $\alpha_{\mathrm{d}}$; (ii) how much $d_{\mathrm{b}}$ is needed if the answer to the first question is yes. Notably, although achievable combinations of $\cos\overline{\psi}$ and $\underline{\alpha}$ are limited to the solid blue line, these combinations suffice to ensure $\cos\overline{\psi}\geq\cos\psi_{\mathrm{d}}$ and $\underline{\alpha}\geq\alpha_{\mathrm{d}}$ for a wide range of $\cos\psi_{\mathrm{d}}$ and $\alpha_{\mathrm{d}}$. Simply speaking, as long as the point $(\cos\psi_{\mathrm{d}},\alpha_{\mathrm{d}})$ representing the desired oscillatory stability lies within the blue or green shaded regions on the $(\cos\overline{\psi},\underline{\alpha})$-plane of Fig.~\ref{fig:achievable-fs}, we can guarantee $\cos\overline{\psi}\geq\cos\psi_{\mathrm{d}}$ and $\underline{\alpha}\geq\alpha_{\mathrm{d}}$ by projecting $(\cos\psi_{\mathrm{d}},\alpha_{\mathrm{d}})$ onto an appropriate point $(\cos\overline{\psi},\underline{\alpha})$ on the solid blue line such that $(\cos\overline{\psi},\underline{\alpha})\succeq(\cos\psi_{\mathrm{d}},\alpha_{\mathrm{d}})$, where $\succeq$ denotes component-wise ordering. For example, an easy way is to simply project any $(\cos\psi_{\mathrm{d}},\alpha_{\mathrm{d}})$ in the green shaded region along the positive $\cos\overline{\psi}$-axis and any $(\cos\psi_{\mathrm{d}},\alpha_{\mathrm{d}})$ in the blue shaded region along the positive $\underline{\alpha}$-axis to the solid blue line, which gives a satisfactory $(\cos\overline{\psi},\underline{\alpha})$ that can be achieved by setting $d_{\mathrm{b}}$ as the value suggested by the solid lemon line.

The value of $d_{\mathrm{b}}$ read from Fig.~\ref{fig:achievable-fs} through this procedure plays the role of $d_{\mathrm{b},\mathrm{osc}}$ in \eqref{eq:db-osci-coi}. Again, after determining $d_{\mathrm{b},\mathrm{osc}}$ for oscillatory stability, we still need to calculate $d_{\mathrm{b},\mathrm{COI}}$ using \eqref{eq:db-ss} for frequency security. Then $d_{\mathrm{b}}$ that satisfies both requirements is given by \eqref{eq:db-osci-coi}. Of course, it is important to check if the calculated $d_{\mathrm{b},\mathrm{COI}}$ would adversely affect oscillatory stability, primarily decay rate, ensured by $d_{\mathrm{b},\mathrm{osc}}$. It turns out that this visualized tuning process boasts broader applicability than the algebraic approach detailed in the previous subsection. Similar analysis as in the end of the previous subsection shows that \eqref{eq:db-osci-coi} is a valid design if $d_{\mathrm{b},\mathrm{osc}}$ and $d_{\mathrm{b},\mathrm{COI}}$ obtained here fall into one of three cases: (i) $d_{\mathrm{b},\mathrm{COI}}\leq d_{\mathrm{b},\mathrm{osc}}$; (ii) $d_{\mathrm{b},\mathrm{osc}}<d_{\mathrm{b},\mathrm{COI}}\leq 2\sqrt{\lambda_2 m}-d-d_\mathrm{t}$; (iii) $d_{\mathrm{b},\mathrm{osc}}\leq 2\sqrt{\lambda_2 m}-d-d_\mathrm{t}<d_{\mathrm{b},\mathrm{COI}}\leq(m\underline{\alpha}^2+\lambda_2)/\underline{\alpha}-d-d_\mathrm{t})$.

\subsection{Advantages over Virtual Inertia Control}

The most common inverter-based control law VI, which can also be tuned to remove the COI frequency Nadir to greatly improve frequency security based on our prior work: 

\begin{dyn-i}[Virtual Inertia]\label{dyn-vi}
This control law can provide additional  inertial response and droop capability:
\begin{equation} \label{eq:dy-vi}
\hat{c}_{\mathrm{vi}}(s) := -\left(m_{\mathrm{v}} s + d_{\mathrm{b}}\right)\,,
\end{equation}
where $m_{\mathrm{v}}\geq0$ is the virtual inertia constant (If $m_{\mathrm{v}}=0$, then \eqref{eq:dy-vi} reduces to pure droop control) and $d_{\mathrm{b}}>0$ is the inverter inverse droop. By~\cite[Theorem 1]{jiangtps2021}, when the representative inverter $\hat{c}_\mathrm{o}(s)=\hat{c}_{\mathrm{vi}}(s)$, the COI frequency $\bar{\omega}_{\mathrm{vi}}(t)$ of the system shown in Fig.~\ref{fig:model} following step power changes has no Nadir if parameters $(d_\mathrm{b}, m_{\mathrm{v}})$ satisfy
\begin{align}\label{eq:mvmin}
   m_{\mathrm{v}} \geq m_\mathrm{v,min} :=\tau\left(\!\sqrt{d_\mathrm{t}}+\sqrt{d+d_\mathrm{t}+d_{\mathrm{b}}}\right)^2\!-m\,.
\end{align}
\end{dyn-i}

Again, we would like to perform oscillatory stability analysis to VI with Nadir elimination tuning via the root locus method.
\begin{lem}[Root locus associated with $\hat{L}_{\mathrm{vi}}(s)$]\label{lem:rlocus-VI}
For the system in Fig.~\ref{fig:model} under Assumption~\ref{ass:proportion}, if inverters adopt VI with tuning that eliminates
the Nadir of COI frequency, i.e., $\hat{c}_\mathrm{o}(s)=\hat{c}_{\mathrm{vi}}(s)$ in \eqref{eq:dy-vi} with $(d_{\mathrm{b}}, m_{\mathrm{v}})$ satisfying \eqref{eq:mvmin}, then the root locus of the system with the open-loop transfer function 
\begin{align}\label{eq:Loopgain-vi}
    \!\!\!\!\hat{L}_{\mathrm{vi}}(s):= \dfrac{\lambda\hat{z}_{1,{\mathrm{vi}}}(s)}{s}
    =
    \frac{\lambda}{(m+m_{\mathrm{v}})}\dfrac{s + \tau^{-1}}{s\left( s^2 + 2\xi\omega_\mathrm{n} s + \omega_\mathrm{n}^2 \right)}
\end{align}
with 
\begin{subequations}\label{eq:ex-xi-wn}
\begin{align}
\omega_\mathrm{n} :=& \ \sqrt{\cfrac{d+d_{\mathrm{b}} +d_{\mathrm{t}}}{(m+m_{\mathrm{v}})\tau}}\label{eq:wn}\\\xi :=&\  \dfrac{\tau^{-1}+\left(d+d_{\mathrm{b}}\right)/(m+m_{\mathrm{v}})}{2\sqrt{\left(d+d_{\mathrm{b}} +d_{\mathrm{t}}\right)/\left[(m+m_{\mathrm{v}})\tau\right]}} 
\end{align}
\end{subequations}
is as shown in Fig.~\ref{fig:rlocus-vi}.
\end{lem}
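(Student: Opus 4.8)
The plan is to treat \eqref{eq:Loopgain-vi} as a short algebra exercise and then read the shape of the locus off the standard root-locus rules. First I would specialize \eqref{eq:zk-s} to $k=1$, where $\lambda_1=0$, and substitute $\hat{g}_\mathrm{o}(s)$ from \eqref{eq:go} together with $\hat{c}_\mathrm{o}(s)=\hat{c}_{\mathrm{vi}}(s)=-(m_{\mathrm{v}}s+d_\mathrm{b})$ from \eqref{eq:dy-vi}; clearing denominators gives
\begin{align*}
\hat{z}_{1,\mathrm{vi}}(s)=\frac{\tau s+1}{\tau(m+m_{\mathrm{v}})s^2+\bigl(m+m_{\mathrm{v}}+\tau(d+d_\mathrm{b})\bigr)s+(d+d_\mathrm{b}+d_\mathrm{t})}\,.
\end{align*}
Dividing numerator and denominator by $\tau(m+m_{\mathrm{v}})$ and writing $\tau s+1=\tau(s+\tau^{-1})$ then yields \eqref{eq:Loopgain-vi}, with $\omega_\mathrm{n}$ and $\xi$ forced to be exactly those in \eqref{eq:ex-xi-wn}; in particular $2\xi\omega_\mathrm{n}=\tau^{-1}+(d+d_\mathrm{b})/(m+m_{\mathrm{v}})$ and $\omega_\mathrm{n}^2=(d+d_\mathrm{b}+d_\mathrm{t})/[\tau(m+m_{\mathrm{v}})]$. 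This part is routine polynomial manipulation.

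The substantive step is to nail down the open-loop pole/zero layout under the Nadir-elimination tuning \eqref{eq:mvmin}. I would show that \eqref{eq:mvmin} forces $\xi\ge1$: after clearing the square root in the definition of $\xi$ in \eqref{eq:ex-xi-wn} and substituting $x:=\sqrt{(m+m_{\mathrm{v}})/\tau}$, the inequality $\xi\ge1$ becomes $x^2-2\sqrt{d+d_\mathrm{b}+d_\mathrm{t}}\,x+(d+d_\mathrm{b})\ge0$, a quadratic in $x$ with roots $\sqrt{d+d_\mathrm{b}+d_\mathrm{t}}\pm\sqrt{d_\mathrm{t}}$, while \eqref{eq:mvmin} is precisely $x\ge\sqrt{d+d_\mathrm{b}+d_\mathrm{t}}+\sqrt{d_\mathrm{t}}$, so the inequality holds. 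Hence the two nonzero open-loop poles $p_\pm=-\xi\omega_\mathrm{n}\pm\omega_\mathrm{n}\sqrt{\xi^2-1}$ are real, negative, and (generically) distinct. To place the zero $-\tau^{-1}$ relative to them, I would evaluate $s^2+2\xi\omega_\mathrm{n}s+\omega_\mathrm{n}^2$ at $s=-\tau^{-1}$, obtaining $d_\mathrm{t}/[\tau(m+m_{\mathrm{v}})]>0$, so $-\tau^{-1}\notin[p_-,p_+]$; and since \eqref{eq:mvmin} also yields $m+m_{\mathrm{v}}>\tau(d+d_\mathrm{b})$, one gets $-\tau^{-1}<-\xi\omega_\mathrm{n}=\tfrac12(p_-+p_+)$. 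These two facts together pin the ordering $-\tau^{-1}<p_-\le p_+<0$.

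With the layout fixed — open-loop poles at $0$, $p_+$, $p_-$ and a single finite zero at $-\tau^{-1}$ — the rest is standard root-locus construction. There are three branches, one terminating at $-\tau^{-1}$ and the other two escaping to infinity; the parity rule puts the real-axis portion of the locus on $[p_+,0]$ and on $[-\tau^{-1},p_-]$; the two asymptotes are vertical ($\pm90\degree$) with centroid $\sigma_\mathrm{a}=\tfrac12\bigl(0+p_++p_-+\tau^{-1}\bigr)=-(d+d_\mathrm{b})/[2(m+m_{\mathrm{v}})]<0$; and a single breakaway point sits in $(p_+,0)$, where the branches departing $0$ and $p_+$ leave the real axis (characterised by $s^{-1}+(s-p_+)^{-1}+(s-p_-)^{-1}=(s+\tau^{-1})^{-1}$). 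Assembling these facts reproduces the qualitative picture of Fig.~\ref{fig:rlocus-vi}. Finally, to confirm that the whole locus lies in the open left half-plane — which is also the premise needed to invoke Theorem~\ref{thm:colocate-z} for $\hat{c}_\mathrm{o}(s)=\hat{c}_{\mathrm{vi}}(s)$ — I would apply Routh--Hurwitz to the closed-loop characteristic polynomial $(m+m_{\mathrm{v}})s^3+2\xi\omega_\mathrm{n}(m+m_{\mathrm{v}})s^2+\bigl(\omega_\mathrm{n}^2(m+m_{\mathrm{v}})+\lambda\bigr)s+\lambda\tau^{-1}$: all coefficients are positive for $\lambda>0$, and the remaining product inequality reduces to $2\xi\omega_\mathrm{n}^3(m+m_{\mathrm{v}})>\lambda\bigl(\tau^{-1}-2\xi\omega_\mathrm{n}\bigr)$, whose right side is negative since $\tau^{-1}-2\xi\omega_\mathrm{n}=-(d+d_\mathrm{b})/(m+m_{\mathrm{v}})<0$, so it holds for every $\lambda>0$.

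The main obstacle is the middle step: extracting $\xi\ge1$ and the ordering $-\tau^{-1}<p_-$ from the Nadir-elimination inequality \eqref{eq:mvmin}; once the pole/zero layout is settled, the qualitative shape in Fig.~\ref{fig:rlocus-vi} follows from textbook rules and the open-left-half-plane claim is immediate from the Routh test above. The only extra care needed is the boundary case $m_{\mathrm{v}}=m_\mathrm{v,min}$ (equality in \eqref{eq:mvmin}), which gives $\xi=1$ and a double open-loop pole $p_-=p_+$; the construction and the figure degenerate continuously there and merit only a one-line remark.
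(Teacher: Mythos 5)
Your proposal is correct, and its skeleton (compute $\hat{z}_{1,\mathrm{vi}}(s)$, identify the open-loop pole/zero layout, then apply the standard root-locus rules) matches the paper's, but the two key sub-arguments are handled by genuinely different means. Where the paper imports $\xi\geq 1$ and $\xi\omega_\mathrm{n}\leq\tau^{-1}$ from the proofs of earlier theorems in \cite{jiang2021tac}, you re-derive $\xi\geq1$ directly from \eqref{eq:mvmin} via the quadratic in $x=\sqrt{(m+m_{\mathrm{v}})/\tau}$ with roots $\sqrt{d+d_\mathrm{b}+d_\mathrm{t}}\pm\sqrt{d_\mathrm{t}}$, making the proof self-contained. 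For the zero placement, the paper splits into the cases $\xi=1$ and $\xi>1$ and, in the latter, computes $\omega_\mathrm{n}\sqrt{\xi^2-1}=\sqrt{(\tau^{-1}-\xi\omega_\mathrm{n})^2-d_\mathrm{t}/[(m+m_{\mathrm{v}})\tau]}<\tau^{-1}-\xi\omega_\mathrm{n}$ as in \eqref{eq:bound-secondpart}; your route—evaluating $s^2+2\xi\omega_\mathrm{n}s+\omega_\mathrm{n}^2$ at $s=-\tau^{-1}$ to get $d_\mathrm{t}/[(m+m_{\mathrm{v}})\tau]>0$, plus the midpoint comparison $-\tau^{-1}<-\xi\omega_\mathrm{n}$ from $m+m_{\mathrm{v}}>\tau(d+d_\mathrm{b})$ (which indeed follows from \eqref{eq:mvmin})—pins the ordering $-\tau^{-1}<p_-\leq p_+<0$ in one stroke and treats both cases uniformly, which is arguably cleaner. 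Your closing Routh--Hurwitz test on $(m+m_{\mathrm{v}})s^3+2\xi\omega_\mathrm{n}(m+m_{\mathrm{v}})s^2+\bigl(\omega_\mathrm{n}^2(m+m_{\mathrm{v}})+\lambda\bigr)s+\lambda\tau^{-1}$ is a worthwhile addition: it certifies rigorously that the entire locus stays in the open left half-plane for every $\lambda>0$ (the premise needed for Theorem~\ref{thm:colocate-z}), something the paper only infers from the sketch. The one nuance you skip is the paper's argument, via the constancy of the sum of the real parts of the closed-loop poles when $N_\mathrm{p}-N_\mathrm{z}=2$, that the break point lies to the left of $\sigma_\mathrm{a}$ and the complex branches approach the vertical asymptotes from the left; this only refines the qualitative shape drawn in Fig.~\ref{fig:rlocus-vi} and is not a gap in the substance of the lemma, especially since your Routh test already rules out any excursion into the right half-plane.
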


\begin{figure}[t!]
\centering
\subfigure[$\xi=1$]
{\includegraphics[width=0.49\columnwidth]{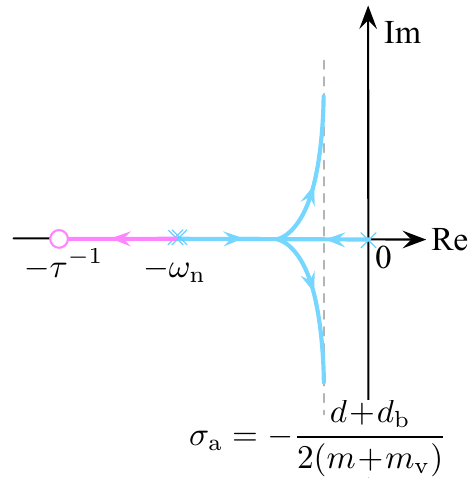}\label{fig:rlocus-vi-cri}}
\subfigure[$\xi>1$]
{\includegraphics[width=0.49\columnwidth]{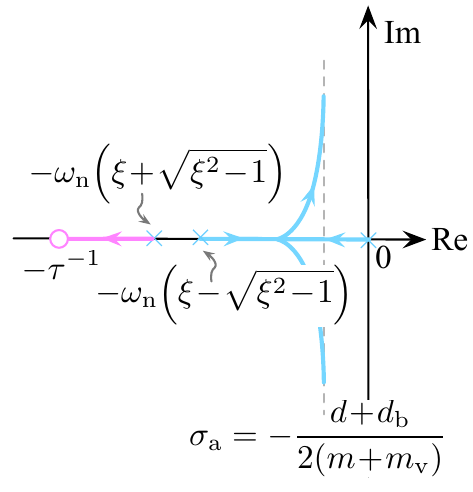}\label{fig:rlocus-vi-over}}
\caption{Root locus of the system with loop-gain $\hat{L}_{\mathrm{vi}}(s)$ in \eqref{eq:Loopgain-vi}.}\label{fig:rlocus-vi}
\end{figure}
\begin{proof}
See the Appendix~\ref{app:lem-rlocus-VI}. 
\end{proof}

Fig.~\ref{fig:rlocus-vi} provided by Lemma~\ref{lem:rlocus-VI} confirms that $\hat{c}_{\mathrm{vi}}(s)$ with Nadir elimination tuning \eqref{eq:mvmin} ensures that the root locus of $\mathcal{F}(\hat{z}_{1,{\mathrm{vi}}}(s)/s,\lambda)$ lies in the open left half-plane for any $\lambda>0$. Thus, $\forall k\in\mathcal{N}\setminus\{1\}$, poles of $\hat{z}_{k,{\mathrm{vi}}}(s)$ reside on the root locus in Fig.~\ref{fig:rlocus-vi} by Theorem~\ref{thm:colocate-z}, which, in principle, would enable the same $(\alpha, \psi)$-stability analysis technique used under FS. However, the increased complexity of root locus makes it intractable to derive closed-form expressions for $\cos\overline{\psi}$ and $\underline{\alpha}$ under VI. Nonetheless, qualitative insight into oscillatory stability can still be gleaned from the root locus shape. For example, the damping ratio $\cos\overline{\psi}$ remains exclusively determined by the largest eigenvalue $\lambda_n$ of the scaled Laplacian matrix $\boldsymbol{L}$, whereas the decay rate $\underline{\alpha}$ now depends on whichever of $\lambda_2$ or $\lambda_n$ lies closer to the imaginary-axis. Furthermore, although we cannot provide a precise characterization of frequency convergence rate as for FS, deriving a loose bound through Fig.~\ref{fig:rlocus-vi} for this rate under VI is straightforward. 

\begin{prop}[Frequency convergence rate under VI]\label{pro:sys-rate-vi}
Under Assumption~\ref{ass:proportion}, if the power system in Fig.~\ref{fig:model} with VI under Nadir elimination tuning, i.e., $\hat{c}_\mathrm{o}(s)=\hat{c}_{\mathrm{vi}}(s)$ in \eqref{eq:dy-vi} with $(d_{\mathrm{b}}, m_{\mathrm{v}})$ satisfying \eqref{eq:mvmin},  undergoes step power disturbances $\boldsymbol{p}=\boldsymbol{u}_0 \mathds{U}_{ t \geq 0 }$, then the frequency deviations $\boldsymbol{\omega}_{\mathrm{vi}}(t)=\bar{\omega}_{\mathrm{vi}}(t)\mathbbold{1}_n+\tilde {\boldsymbol{\omega}}_{\mathrm{vi}}(t)$ of the system synchronize to the steady-state value 
\begin{align}\label{eq:coi-vi-inf}
\bar{\omega}_{\mathrm{vi}}(\infty)=\dfrac{\sum_{i=1}^n u_{0,i}}{(d+d_\mathrm{b} + d_{\mathrm{t}})\sum_{i=1}^nr_i}    
\end{align} 
exponentially fast with a rate not exceeding $\omega_\mathrm{n}$ in \eqref{eq:wn}, that is,
\begin{align}\label{eq:exp-wvi}
    \|\boldsymbol{\omega}_{\mathrm{vi}}(t)-\bar{\omega}_{\mathrm{vi}}(\infty)\mathbbold{1}_n\|<\sigma e^{-\rho t}
\end{align}
for some constant $\sigma>0$ and $\rho\in(0,\omega_\mathrm{n}]$.
\end{prop}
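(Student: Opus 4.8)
The plan is to follow the template of Proposition~\ref{pro:sys-rate}, replacing the clean second-order pole formulas available for FS by the coarser information that the root locus of Lemma~\ref{lem:rlocus-VI} still provides. Specializing the modal decomposition \eqref{eq:omega-t-decomple}--\eqref{eq:w-til-t} to $\hat{c}_\mathrm{o}(s)=\hat{c}_{\mathrm{vi}}(s)$ gives $\boldsymbol{\omega}_{\mathrm{vi}}(t)=\bar{\omega}_{\mathrm{vi}}(t)\mathbbold{1}_n+\tilde{\boldsymbol{\omega}}_{\mathrm{vi}}(t)$ with $\bar{\omega}_{\mathrm{vi}}(t)=\frac{\sum_{i=1}^{n}u_{0,i}}{\sum_{i=1}^{n}r_i}z_{\mathrm{u},1,\mathrm{vi}}(t)$ and $\tilde{\boldsymbol{\omega}}_{\mathrm{vi}}(t)=\sum_{k=2}^{n}z_{\mathrm{u},k,\mathrm{vi}}(t)\boldsymbol{R}^{-\frac{1}{2}}\boldsymbol{v}_k\boldsymbol{v}_k^T\boldsymbol{R}^{-\frac{1}{2}}\boldsymbol{u}_0$, so the claim reduces to (i) computing $\lim_{t\to\infty}z_{\mathrm{u},k,\mathrm{vi}}(t)$ for each $k$ and (ii) bounding how fast $z_{\mathrm{u},k,\mathrm{vi}}(t)$ reaches that limit.

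For the common mode $k=1$, I would read off from \eqref{eq:Loopgain-vi} that $\hat{z}_{1,\mathrm{vi}}(s)=\frac{1}{m+m_{\mathrm{v}}}\cdot\frac{s+\tau^{-1}}{s^{2}+2\xi\omega_\mathrm{n}s+\omega_\mathrm{n}^{2}}$. The Nadir-elimination tuning \eqref{eq:mvmin} is exactly the condition $\xi\ge 1$ (which is what puts the root locus in the form of Fig.~\ref{fig:rlocus-vi}), so the two poles of $\hat{z}_{1,\mathrm{vi}}(s)$ are real and negative, equal to $-\omega_\mathrm{n}(\xi\mp\sqrt{\xi^{2}-1})$; Vieta's formula gives that their product is $\omega_\mathrm{n}^{2}$, hence the slower one, $-\omega_\mathrm{n}(\xi-\sqrt{\xi^{2}-1})$, has magnitude at most $\omega_\mathrm{n}$. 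The final value theorem then gives $\bar{\omega}_{\mathrm{vi}}(\infty)=\frac{\sum_{i=1}^{n}u_{0,i}}{\sum_{i=1}^{n}r_i}\hat{z}_{1,\mathrm{vi}}(0)$, which, using $\omega_\mathrm{n}^{2}=(d+d_\mathrm{b}+d_{\mathrm{t}})/[(m+m_{\mathrm{v}})\tau]$ from \eqref{eq:wn}, collapses to \eqref{eq:coi-vi-inf}; and $\bar{\omega}_{\mathrm{vi}}(t)-\bar{\omega}_{\mathrm{vi}}(\infty)$ is a combination of $e^{-\omega_\mathrm{n}(\xi\mp\sqrt{\xi^{2}-1})t}$ (with an extra factor $t$ when $\xi=1$), so it decays no faster than $e^{-\omega_\mathrm{n}t}$.

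For the oscillatory modes $k\in\mathcal{N}\setminus\{1\}$, the text already observes that Lemma~\ref{lem:rlocus-VI}/Fig.~\ref{fig:rlocus-vi} makes the premise of Theorem~\ref{thm:colocate-z} hold, so every pole of $\hat{z}_{k,\mathrm{vi}}(s)$ lies on the displayed root locus at gain $\lambda=\lambda_k>0$ and is therefore strictly in the open left half-plane; moreover, rewriting $\hat{z}_{k}(s)$ in \eqref{eq:zk-s} as $s\hat{g}_\mathrm{o}(s)/\big(s+\hat{g}_\mathrm{o}(s)(\lambda_k-s\hat{c}_{\mathrm{vi}}(s))\big)$ shows that $\lambda_k>0$ forces a zero of $\hat{z}_{k,\mathrm{vi}}(s)$ at the origin, so $\hat{z}_{k,\mathrm{vi}}(0)=0$, $z_{\mathrm{u},k,\mathrm{vi}}(t)$ is a purely decaying transient, and $\tilde{\boldsymbol{\omega}}_{\mathrm{vi}}(\infty)=\mathbbold{0}_n$. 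Combining the two pieces, $\boldsymbol{\omega}_{\mathrm{vi}}(t)-\bar{\omega}_{\mathrm{vi}}(\infty)\mathbbold{1}_n$ is a finite sum of modes $t^{j}e^{-\mu t}$ whose exponents $\mu>0$ are the magnitudes of the strictly negative pole real parts involved; since the common mode alone contributes such a $\mu$ no larger than $\omega_\mathrm{n}$, the smallest exponent $\rho^{\star}>0$ satisfies $\rho^{\star}\le\omega_\mathrm{n}$, and a standard Laplace-domain estimate yields \eqref{eq:exp-wvi} with a large enough $\sigma>0$ and $\rho$ equal to $\rho^{\star}$ (or slightly below it, to absorb the polynomial factor of a possible repeated dominant pole), so $\rho\in(0,\omega_\mathrm{n}]$ as claimed.

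The main obstacle---and the reason this is only a loose bound, unlike the exact rate $\underline{\alpha}$ in Proposition~\ref{pro:sys-rate}---is that the loop gain \eqref{eq:Loopgain-vi} is third order with a finite zero at $-\tau^{-1}$, so the closed-loop poles at $\lambda=\lambda_k$ (the oscillatory poles) admit no tractable closed form and $\rho^{\star}$ cannot be pinned down. What rescues the proof is that the common component does not travel along the locus at all, since $\lambda_1=0$ makes $\bar{\omega}_{\mathrm{vi}}$ ride the open-loop quadratic factor of \eqref{eq:Loopgain-vi} verbatim, and Vieta's product relation on that factor forces one of its two real roots to sit within $\omega_\mathrm{n}$ of the imaginary axis, which caps the system-wide convergence rate regardless of where the oscillatory poles land.
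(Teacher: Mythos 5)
Your proof is correct and follows essentially the same route as the paper's: the same modal split, a final-value computation of $\hat z_{1,\mathrm{vi}}(0)$ yielding \eqref{eq:coi-vi-inf}, open left-half-plane placement of the oscillatory poles via Theorem~\ref{thm:colocate-z} and Lemma~\ref{lem:rlocus-VI}, and a triangle-inequality combination of the two components. Two local differences are worth noting. First, for the main mode the paper invokes the explicit step response of a second-order system with a zero from prior work and the fact that at least one pole of $\hat z_{1,\mathrm{vi}}(s)$ has real part in $[-\omega_\mathrm{n},0)$; your Vieta argument (the roots of $s^2+2\xi\omega_\mathrm{n}s+\omega_\mathrm{n}^2$ multiply to $\omega_\mathrm{n}^2$, so the slow one has magnitude at most $\omega_\mathrm{n}$) reaches the same conclusion in a more self-contained way. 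Second, the paper additionally pins the oscillatory poles inside the vertical strip $-\omega_\mathrm{n}<\text{Re}(s)<0$ (using $\omega_\mathrm{n}(\xi-\sqrt{\xi^2-1})<\omega_\mathrm{n}$ and the locus geometry), so every mode individually decays slower than $\omega_\mathrm{n}$, whereas you use only open-LHP stability for those modes and let the main mode supply the $\omega_\mathrm{n}$ cap. That suffices for the formal existence claim \eqref{eq:exp-wvi} with $\rho\in(0,\omega_\mathrm{n}]$, since a smaller exponent still gives a valid bound; but as a statement about the actual achievable rate, your cap tacitly requires $\sum_{i=1}^n u_{0,i}\neq0$ and a nonzero residue at the slow pole of $\hat z_{1,\mathrm{vi}}(s)$ (the latter holds because the zero $-\tau^{-1}$ lies strictly to the left of both poles, as shown in the proof of Lemma~\ref{lem:rlocus-VI}), while the paper's strip argument avoids this caveat. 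Finally, \eqref{eq:mvmin} implies $\xi\ge1$ but is not exactly equivalent to it; only the implication is used, so this slip is harmless.
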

\begin{proof}
See the Appendix~\ref{app:sys-rate-vi-pf}.
\end{proof}

Proposition~\ref{pro:sys-rate-vi} shows that, under VI, frequency deviations converge to the steady-state value at a rate not exceeding $\omega_\mathrm{n}$, which, in practice, is much more sluggish than under FS. To see this, recall that, in many practical scenarios, the frequency convergence rate under FS is linear in inverter inverse droop $d_\mathrm{b}$, i.e., $\underline{\alpha}=(d+d_\mathrm{b}+d_\mathrm{t})/(2m)$, which turns out to be greater than $\omega_\mathrm{n}$ with typical parameter values being considered. We now illustrate this in detail. Simple algebraic calculations show that $(d+d_\mathrm{b}+d_\mathrm{t})/(2m)>\omega_\mathrm{n}$ if and only if 
\begin{align}\label{eq:rate-compare}
\dfrac{\sqrt{d+d_\mathrm{b}+d_\mathrm{t}}\sqrt{(m+m_{\mathrm{v}})\tau}}{m}>2\,.    
\end{align}
By \eqref{eq:mvmin}, $m+m_{\mathrm{v}}\geq\tau\left(\!\sqrt{d_\mathrm{t}}+\sqrt{d+d_\mathrm{t}+d_{\mathrm{b}}}\right)^2$. Thus, the left-hand side of \eqref{eq:rate-compare} satisfies
\begin{align}\label{eq:rate-compare-nadir}
&\dfrac{\sqrt{d+d_\mathrm{b}+d_\mathrm{t}}\sqrt{(m+m_{\mathrm{v}})\tau}}{m}\\&\geq\!\dfrac{\sqrt{d+d_\mathrm{b}+d_\mathrm{t}}\tau\left(\!\sqrt{d_\mathrm{t}}+\sqrt{d+d_\mathrm{t}+d_{\mathrm{b}}}\right)}{m}
\nonumber\\&=\!\dfrac{\tau\left(\sqrt{d+d_\mathrm{b}+d_\mathrm{t}}\sqrt{d_\mathrm{t}}+d+d_\mathrm{t}+d_{\mathrm{b}}\right)}{m}\!>\!\dfrac{\tau\left(d+2d_\mathrm{t}+d_{\mathrm{b}}\right)}{m}\nonumber\,.    
\end{align}
Typically, generator inertia constant $m$ is within $\SI{20}{\second}$~\cite[Table 3.2]{kundur_power_1994}, turbine time constants $\tau$ range from sub-second
to several seconds~\cite[Chapter 9]{kundur_power_1994}, and turbine inverse droop $d_\mathrm{t}$ is in the interval of $10$--$\SI{20}{\pu}$~\cite{Vorobev2019tps}. Now, if we consider the extreme case where $m=\SI{20}{\second}$ and $d_\mathrm{t}=\SI{10}{\pu}$, we know 
\begin{align} \label{eq:bound-tau-d}\dfrac{\tau\left(d+2d_\mathrm{t}+d_{\mathrm{b}}\right)}{m}\geq\tau\left(1+\dfrac{d+d_{\mathrm{b}}}{20}\right)\,,
\end{align}
where the right-hand side is greater than $2$ for any $\tau\geq \SI{2}{\second}$ or $d_{\mathrm{b}}\geq \SI{20}{\pu}$. In realistic systems where $m$ is often lower or $d_\mathrm{t}$ is often larger, the left-hand side of \eqref{eq:bound-tau-d} can readily exceed $2$ even when the turbine dynamics is faster or inverse droop is smaller. For this reason, combined with \eqref{eq:rate-compare-nadir}, we conclude that \eqref{eq:rate-compare} holds practically, which indicates that the frequency converges more rapidly under FS than under VI.

Therefore, FS is a better choice for frequency control compared to VI mainly for two reasons. First, FS allows for foolproof tuning to simultaneously shape COI frequency nicely and damp oscillations to a specified extent, which is hard to realize by VI due to the absence of closed-form expressions for $(\alpha, \psi)$-stability analysis. Second, frequency generally converges much faster under FS than under VI when COI frequency is shaped into Nadir-less response with the same steady-state value under both controllers.   

\section{Numerical Illustrations}\label{sec:simulation}
In this section, we provide numerical illustrations for the tuning process and system performance under FS. 

The simulations are performed on the WSCC $9$-bus $3$-generator system in Fig.~\ref{fig:WCSS_test_case} taken from Power System Toolbox (PST)~\cite{chow1992toolbox}. In order to mimic the weakly connected network, we modify the existing WSCC test case by increasing the impedance of the lines $4-9$ and $5-6$ to $20$ times its value, which results in a distinct $2$-area system. The dynamic model is then built upon the
Kron reduced system where only the 3 generator buses
are retained. Even though our previous analysis is based on the proportionality assumption (Assumption \ref{ass:proportion}), the simulations are still conducted with the heterogeneous generator inertia and damping coefficients directly obtained from the dataset, i.e., $m_1 = \SI{27.28}{\second}$, $m_2 = \SI{12.80}{\second}$, $m_3 = \SI{6.02}{\second}$, $d_1 = \SI{9.6}{\pu}$, $d_2 = \SI{2.5}{\pu}$, $d_3 = \SI{1}{\pu}$. Given that the turbine coefficients are not provided by the dataset, we set the turbine inverse droops to be equal, i.e., $d_{\mathrm{t},1}=d_{\mathrm{t},2}=d_{\mathrm{t},3}=\SI{15}{\pu}$, but choose the turbine time constants to be somewhat heterogeneous to make the test case more
realistic, i.e., $\tau_1 = \SI{2.80}{\second}$, $\tau_2 = \SI{2.10}{\second}$, $\tau_3 = \SI{1.66}{\second}$. Clearly, the proportionality assumption required in theoretical analysis is violated here. 

As discussed in Section~\ref{ssec:tune-goal}, our design goal is to ensure that, when this system experiences the maximum expected net power imbalance $\Delta P=\SI{0.2}{\pu}$, the COI frequency deviation stays within the acceptable range $\SI{\pm200}{\milli\hertz}$ and all oscillatory modes have $(0.2, 84.3\degree)$-stability, i.e. $\cos\psi_{\mathrm{d}}=0.1$ and $\alpha_{\mathrm{d}}=0.2$. All per unit values are on the system base with power base $S_\mathrm{B}=\SI{100}{\mega\VA}$ and nominal frequency $F_0=\SI{60}{\hertz}$. Clearly, this goal cannot be achieved by the original system according to Fig.~\ref{fig:fre-SG-weak}. Therefore, we add an inverter to each bus $i$, whose control law is either VI or FS introduced in Section~\ref{sec:FS}, both with Nadir elimination capability. The inverter tuning relies on the knowledge
of a representative generator in some form although the proportionality
assumption is violated. Here, we define the representative generator inertia constant
as the mean of individual generator inertia constants, i.e., $m := (\sum_{i=1}^3 m_i)/3=\SI{15.37}{\second}$. Accordingly, the proportionality parameters are given by $r_i := m_i/m$. Then, we define the representative damping coefficient as $d := (\sum_{i=1}^3 d_i)/(\sum_{i=1}^3 r_i)= \SI{4.37}{\pu}$ and the representative turbine inverse droop and time constant as $d_{\mathrm{t}} := (\sum_{i=1}^3 d_{\mathrm{t},i})/(\sum_{i=1}^3 r_i)=\SI{15}{\pu}$ and $\tau := (\sum_{i=1}^{3} \tau_i)/3=\SI{2.19}{\second}$, respectively. 

Regarding the tuning of FS, since the COI frequency of the system with $d_\mathrm{b}=\SI{0}{\pu}$ shown in Fig.~\ref{fig:fre-VIFS-notune-weak} is already satisfactory, we set $d_{\mathrm{b},\mathrm{COI}}=\SI{0}{\pu}$ and focus on determining $d_{\mathrm{b},\mathrm{osc}}$ to ensure $\cos\psi_{\mathrm{d}}=0.1$ and $\alpha_{\mathrm{d}}=0.2$. This can be easily done using the algebraic method proposed in Section~\ref{ssec:al-tune} or visualized method proposed in Section~\ref{ssec:visual}. Here, we adopt the algebraic one, where $d_{\mathrm{b},\mathrm{osc}}$ can be determined from \eqref{eq:db-osci} as $d_{\mathrm{b},\mathrm{osc}}=\max\left(0,35.89,-13.22\right)=\SI{35.89}{\pu}$. Thus, $d_{\mathrm{b}}=\max \left(0, 35.89\right)=\SI{35.89}{\pu}$ by \eqref{eq:db-osci-coi}. The performance of the system in this case when a $\SI{-0.2}{\pu}$ step power change is introduced to bus $1$ at $t=\SI{1}{\second}$ is shown in Fig.~\ref{fig:damp3589}, where frequency deviations including oscillations settle down extremely fast. To show the advantage of FS, the performance of the system under VI following the same contingency is provided in Fig.~\ref{fig:VI3589} as well, where we still set $d_{\mathrm{b}}=\SI{35.89}{\pu}$ to ensure that steady-state frequency
deviations under VI and FS are the same for a fair comparison and set $m_{\mathrm{v}} = m_\mathrm{v,min}=\SI{264.16}{\second}$ to eliminate Nadir with the smallest control effort. Clearly, FS outperforms VI since the frequency deviations converge exponentially fast with
a much higher rate and lower inverter power output under FS than
VI.

\begin{figure}[t!]
\centering
\subfigure[System under VI with $d_{\mathrm{b}}=\SI{35.89}{\pu}$ and $ m_{\mathrm{v}} = m_\mathrm{v,min}=\SI{264.16}{\second}$]
{\includegraphics[width=\columnwidth]{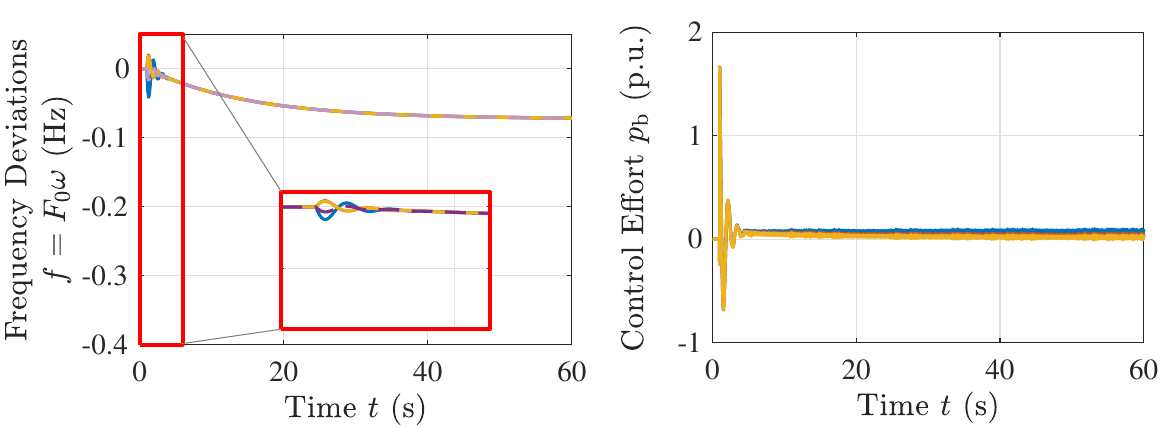}\label{fig:VI3589}}
\hfil
\subfigure[System under FS with $d_{\mathrm{b}}=\SI{35.89}{\pu}$]
{\includegraphics[width=\columnwidth]{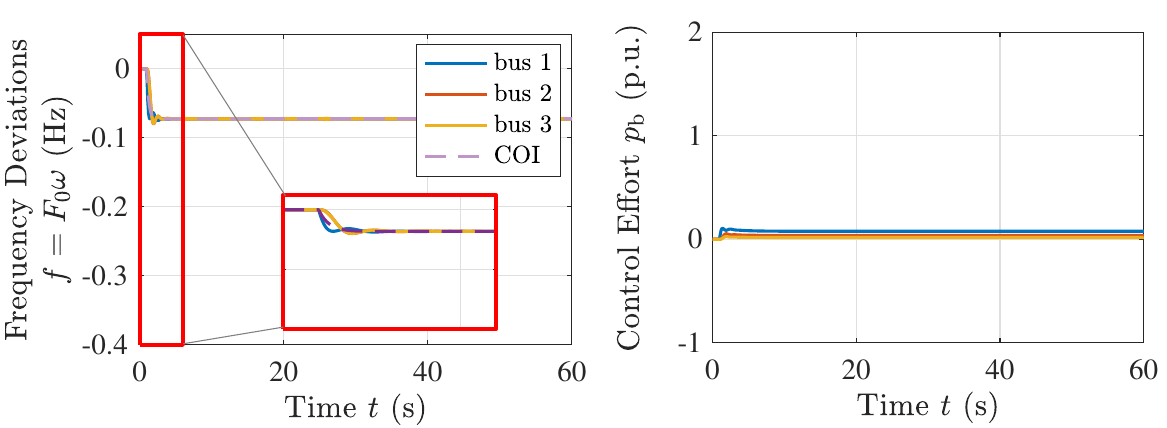}\label{fig:damp3589}}
\caption{Performance of the modified WSCC system when a $\SI{-0.2}{\pu}$ step power change is introduced to bus $1$ at $t=\SI{1}{\second}$.}
\label{fig:compareS}
\end{figure}



\section{Conclusions and Outlook}\label{sec:conclusion}

To better understand and address inter-area oscillations in weak grids, we have performed a systematic analysis of damping ratio and decay rate of inter-area
oscillations under FS via an efficient root locus approach. This offers many insights, one of which is that the minimum damping ratio is
determined by the largest eigenvalue of the scaled network Laplacian. Those analytic results enable us to propose foolproof
fine-tuning guidelines for FS to simultaneously meet specified metrics for frequency
security and oscillatory stability. Precisely, with our proposed tuning, FS can not only shape the COI frequency into a first-order response that converges to a steady-state value within the allowed range but also ensure a satisfactory damping ratio and decay rate of inter-area oscillations. Moreover, FS outperforms VI significantly, requiring less inverter power output to ensure a faster frequency convergence rate. Our future research will concentrate on some important extensions,
including analysis of impact of grid-forming inverters, lossy transmission lines, and heterogeneous parameters.



\appendix[]
\subsection{Proof of Lemma~\ref{lem:rlocus-FS}} \label{app:lem-rlocus-FS}
Applying \eqref{eq:go} and $\hat{c}_\mathrm{o}(s)=\hat{c}_{\mathrm{fs}}(s)$ in \eqref{eq:dy-fs} to \eqref{eq:zk-s} for $k=1$ shows that
\begin{align}\label{eq:z1fs}
    \hat{z}_{1,{\mathrm{fs}}}(s)=\dfrac{1}{m s + d+d_\mathrm{b} + d_{\mathrm{t}}}\,,
\end{align}
which directly gives $\hat{L}_{\mathrm{fs}}(s)$ in \eqref{eq:Loopgain-fs}. Now, the root locus of the system with the loop-gain $\hat{L}_{\mathrm{fs}}(s)$ can be plotted using the standard rules~\cite[Module 10]{Driels1996linear}.

First, from $m s^2 + (d+d_\mathrm{b} + d_{\mathrm{t}})s=ms[s+(d+d_\mathrm{b}+d_\mathrm{t})/m]=0$, we know that there are two open-loop poles given by \eqref{eq:open-pole-fs}, which are denoted by a cross in Fig.~\ref{fig:rlocus-fs}. Clearly, there is no open-loop zero. Basically, the locus starts at open-loop poles for $\lambda=0$ and both branches of the locus go to infinity when $\lambda=\infty$ since $N_\mathrm{p}-N_\mathrm{z}=2-0=2$, where $N_\mathrm{p}$ and $N_\mathrm{z}$ are the number of open-loop poles and zeros, respectively.

The real-axis segment forming part of the locus is to the left of an odd number of poles or zeros, which is only between the two poles $s_1$ and $s_2$ in this case.

To determine how the poles go to infinity as $\lambda$ increases, we next compute the center and orientation of the asymptotes. Precisely, the asymptotes intersect the real-axis at
\begin{align*}
    \sigma_\mathrm{a} 
    =&\ \frac{\text{Re}(s_1)+\text{Re}(s_2)}{N_\mathrm{p}-N_\mathrm{z}}\\
    =&\  \frac{0+[-(d+d_\mathrm{b}+d_\mathrm{t})/m]}{2}
    = -\dfrac{d+d_\mathrm{b}+d_\mathrm{t}}{2m}
\end{align*}
and incline to the positive real-axis at
\begin{align*}
\dfrac{180\degree(1+2l)}{N_\mathrm{p}-N_\mathrm{z}}\quad\text{with }l=0,1,\ldots, (N_\mathrm{p}-N_\mathrm{z}-1) \,, 
\end{align*}
which yields $90\degree$ and $270\degree$ here for $N_\mathrm{p}-N_\mathrm{z}=2$. 

At this stage, it seems likely that, as $\lambda$ increases, the closed-loop poles move from the open-loop poles toward $\sigma_\mathrm{a}$ along the real-axis and then branch off to infinity along asymptotes that are parallel to the imaginary-axis. This can be checked by determining the break-away point from the real-axis. To find such a break point, we need to look for the local maximizer of the gain $\lambda$ as a function of $s$. With this aim, we write the characteristic equation
\begin{align}\label{eq:char-FS}
 0=1+\hat{L}_{\mathrm{fs}}(s)=1+\dfrac{\lambda}{m s^2 + (d+d_\mathrm{b} + d_{\mathrm{t}})s}   
\end{align}
obtained from \eqref{eq:Loopgain-fs} in the form where $\lambda$ is expressed directly as a function of $s$:
\begin{align}\label{eq:lambda-s}
    \lambda=-[m s^2 + (d+d_\mathrm{b} + d_{\mathrm{t}})s]\,.
\end{align}
Then, \eqref{eq:lambda-s} can be differentiated and equated to zero as
\begin{align*}
    \dfrac{\mathrm{d}\lambda}{\mathrm{d}s}=-2ms-(d+d_\mathrm{b} + d_{\mathrm{t}})=0\,,
\end{align*}
from which we can solve for $s$ producing the maximum value of $\lambda$ as 
\begin{align*}
    s=-\dfrac{d+d_\mathrm{b}+d_\mathrm{t}}{2m}=\sigma_\mathrm{a}\,.
\end{align*}
Hence, the only break point on the real-axis segment is exactly the center of the asymptotes, which confirms that the final locus is as shown in Fig.~\ref{fig:rlocus-fs}.

\subsection{Proof of Theorem~\ref{thm:osc-limit-fs}}\label{app:osc-limit-fs-pf}
The whole proof leverages the fact that the gain $\lambda$ at a selected point $s$ on the locus in Fig.~\ref{fig:rlocus-fs} can be evaluated by measuring the length of each line joining the point $s$ to all open-loop poles $s_1$ and $s_2$ in \eqref{eq:open-pole-fs}~\cite[Module 10]{Driels1996linear}. To see this, observe from the characteristic equation \eqref{eq:char-FS} in the proof of Lemma~\ref{lem:rlocus-FS} that a particular point $s$ on the locus must satisfy the following magnitude equation
\begin{align*}
 \dfrac{\lambda}{m|s||s+(d+d_\mathrm{b}+d_\mathrm{t})/m|} =1  \,,
\end{align*}
which can be written as
\begin{align}\label{eq:K-FS}
 \lambda=m|s|\left|  s+\dfrac{d+d_\mathrm{b}+d_\mathrm{t}}{m}\right|  \,.
\end{align}
Thus, the value of gain $\lambda$ at any point $s$ on the locus in Fig.~\ref{fig:rlocus-fs} is just the product of distances between $s$ with the open-loop poles $s_1$ and $s_2$ scaled by $m$.

To streamline the proof, we start with the more interesting scenario where the maximum eigenvalue $\lambda_n$ of the scaled Laplacian is large enough such that
\begin{align}\label{eq:osi-eigmax}
    \lambda_n>m\sigma_\mathrm{a}^2=\dfrac{\left(d+d_\mathrm{b}+d_\mathrm{t}\right)^2}{4m}\,,
\end{align}
or equivalently 
\begin{align}\label{eq:cri-d-damping}
 d_\mathrm{b}< 2\sqrt{\lambda_n m}-d-d_\mathrm{t}\,.  
\end{align}
The other scenario where $\lambda_n\leq m\sigma_\mathrm{a}^2$, i.e., $d_\mathrm{b}\geq 2\sqrt{\lambda_n m}-d-d_\mathrm{t}$, is trivial and thus will be discussed at the end of the whole proof.

With the aid of \eqref{eq:K-FS}, we first illustrate the physical meaning of the assumption on $\lambda_n$ in \eqref{eq:osi-eigmax}. Note that the right-hand side of \eqref{eq:osi-eigmax} is just the value of gain $\lambda$ at the center of asymptotes, i.e., $s=\sigma_\mathrm{a}$ in \eqref{eq:sigma-fs}, since 
\begin{align*}
m\sigma_\mathrm{a}^2&=m|\sigma_\mathrm{a}||\sigma_\mathrm{a}-2\sigma_\mathrm{a}|=m|\sigma_\mathrm{a}|\left|\sigma_\mathrm{a}-2\left(-\dfrac{d+d_\mathrm{b}+d_\mathrm{t}}{2m}\right)\right| \\&=m|\sigma_\mathrm{a}|\left|\sigma_\mathrm{a}+\dfrac{d+d_\mathrm{b}+d_\mathrm{t}}{m}\right|   
\end{align*}
is exactly \eqref{eq:K-FS} evaluated at $s=\sigma_\mathrm{a}$. Therefore, \eqref{eq:osi-eigmax} simply means that the gain $\lambda_n$ occurs only after the closed-loop poles move beyond $s=\sigma_\mathrm{a}$ and thus away from the real-axis as illustrated in Fig.~\ref{fig:rlocus-fs-maxeig}. Actually, this assumption is minor since otherwise all gains from $\lambda_2$ to $\lambda_n$ would occur along the real-axis segment as shown in Fig.~\ref{fig:rlocus-fs-maxeig-violate}, which further would imply that all poles of $\hat{z}_{k,{\mathrm{fs}}}(s)$, for $k\in\mathcal{N}\setminus\{1\}$, would lie on the real-axis and thus would not induce any oscillatory behavior at all. In short, \eqref{eq:osi-eigmax} restricts our attention to power networks that display oscillatory behaviors. 

\begin{figure}[t!]
\centering
\subfigure[$\lambda_n$ not on real-axis if $\lambda_n>m\sigma_\mathrm{a}^2$]
{\includegraphics[width=0.49\columnwidth]{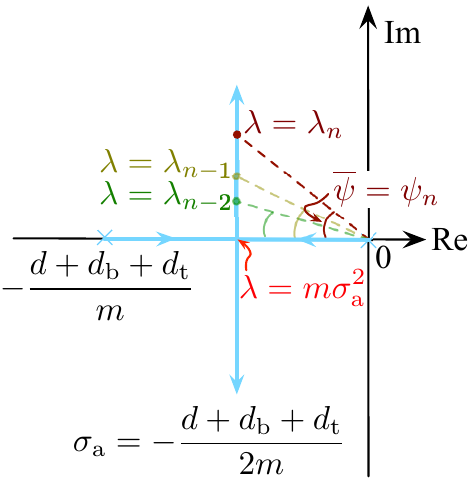}\label{fig:rlocus-fs-maxeig}}
\subfigure[$\lambda_n$ on real-axis if $\lambda_n\leq m\sigma_\mathrm{a}^2$]
{\includegraphics[width=0.49\columnwidth]{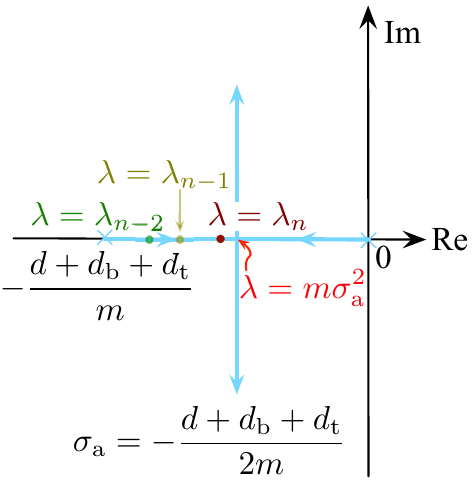}\label{fig:rlocus-fs-maxeig-violate}}
\caption{Interpretation of the minor assumption \eqref{eq:osi-eigmax} or \eqref{eq:cri-d-damping}.\protect\footnotemark}
\end{figure}
\footnotetext{This particular root locus enjoys symmetry with respect to both the real-axis and asymptote, implying that there are always a pair of closed-loop poles correspond to any gain $\lambda$. However, only one of two poles is typically marked for clean presentation unless necessary.}

Now, we are ready to quantify the minimum damping ratio $\cos\overline{\psi}$ and the minimum convergence rate $\underline{\alpha}$.

We begin with $\cos\overline{\psi}$. As discussed above, the assumption \eqref{eq:osi-eigmax} indicates that the gain $\lambda_n$ occurs on the asymptotes rather than the real-axis. In this case, it is easy to observe from Fig.~\ref{fig:rlocus-fs-maxeig} that, for any $k\in\mathcal{N}\setminus\{1\}$, the angle $\psi_k$ between the line joining the closed-loop pole that yields the gain $\lambda_k$ to the origin and the negative real-axis reaches maximum at $k=n$, which can be denoted as $\overline{\psi}=\psi_n$. Thus, the minimum damping ratio is produced by $\hat{z}_{n,{\mathrm{fs}}}(s)$ as $\cos\overline{\psi}=\cos\psi_n$, whose value can be found using the algebraic relation in \eqref{eq:K-FS} and the geometric relation in Fig.~\ref{fig:rlocus-fs-mindamp}. More precisely, according to \eqref{eq:K-FS}, the closed-loop poles associated with the gain $\lambda_n$ should satisfy
\begin{align}\label{eq:K-FS-lambdan}
 \!\!\!\lambda_n=m|s|\left|  s+\dfrac{d+d_\mathrm{b}+d_\mathrm{t}}{m}\right|=m|s|^2=m\left(\dfrac{|\sigma_\mathrm{a}|}{\cos\overline{\psi}}\right)^2  \,.
\end{align}
Here, some geometric relations are used.
The second equality utilizes the geometric property that any point on the asymptotes is equidistance to the two open-loop poles $s_1$ and $s_2$, i.e.,  $|s|=|s+(d+d_\mathrm{b}+d_\mathrm{t})/m|$, which results from the fact that the orange and brown triangles in Fig.~\ref{fig:rlocus-fs-mindamp} are congruent by Side-Angle-Side rule. The last equality is due to the geometric definition in the brown triangle in Fig.~\ref{fig:rlocus-fs-mindamp}: $\cos\overline{\psi}$ is the ratio between the length of the side adjacent to $\overline{\psi}$ and the length of the hypotenuse, i.e., $\cos\overline{\psi}=|\sigma_\mathrm{a}|/|s|$. From \eqref{eq:K-FS-lambdan}, we can solve for the minimum damping ratio as
\begin{align*}
 \cos\overline{\psi}=|\sigma_\mathrm{a}|\sqrt{\dfrac{m}{\lambda_n}} = \left|  -\dfrac{d+d_\mathrm{b}+d_\mathrm{t}}{2m}\right| \sqrt{\dfrac{m}{\lambda_n}} =\dfrac{d+d_\mathrm{b}+d_\mathrm{t}}{2\sqrt{\lambda_n m}}\,,
\end{align*}
which concludes the proof of the first scenario in \eqref{eq:min-damp-fs}. 
\begin{figure}[t!]
\centering
\includegraphics[width=0.5\columnwidth]{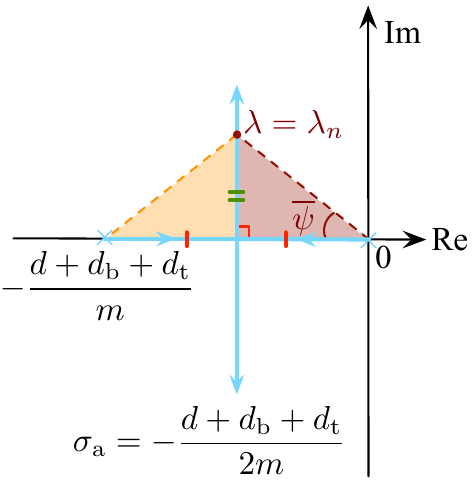}
\caption{Minimum damping ratio $\cos\overline{\psi}$ determined by $\lambda_n$.}\label{fig:rlocus-fs-mindamp}
\end{figure}

Next, we study $\underline{\alpha}$ by considering the two possibilities of where the gain equalling the second smallest eigenvalue $\lambda_2$ of the scaled Laplacian occurs on the locus, either on the asymptotes or not. Following from a similar argument on $\lambda_n$, the location where the gain $\lambda_2$ occurs depends on if 
\begin{align}\label{eq:osi-eig2}
    \lambda_2\geq m\sigma_\mathrm{a}^2=\dfrac{\left(d+d_\mathrm{b}+d_\mathrm{t}\right)^2}{4m}
\end{align}
or not. Particularly, if \eqref{eq:osi-eig2} holds, or equivalently $d_\mathrm{b}\leq 2\sqrt{\lambda_2m}-d-d_\mathrm{t}$, then the gain $\lambda_2$ occurs on the asymptotes; otherwise, the gain $\lambda_2$ occurs on the real-axis. We now discuss these two cases separately:
\begin{enumerate}
    \item The gain $\lambda_2$ occurs on the asymptotes ($d_\mathrm{b}\leq 2\sqrt{\lambda_2m}-d-d_\mathrm{t}$): As depicted in Fig.~\ref{fig:eig2-vertical}, this actually implies that all gains from $\lambda_2$ to $\lambda_n$ occur on the asymptotes, which further ensures that all poles of $\hat{z}_{k,{\mathrm{fs}}}(s)$, for $k\in\mathcal{N}\setminus\{1\}$, lie on the asymptotes with distances from the imaginary-axis being $|\sigma_\mathrm{a}|$. Thus, all $\hat{z}_{k,{\mathrm{fs}}}(s)$, for $k\in\mathcal{N}\setminus\{1\}$, trivially have the same decay rate
    \begin{align}     \label{eq:min-rate-v}   \underline{\alpha}=|\sigma_\mathrm{a}|=\dfrac{d+d_\mathrm{b}+d_\mathrm{t}}{2m}\,.
    \end{align}
    \item The gain $\lambda_2$ occurs on the real-axis ($d_\mathrm{b}> 2\sqrt{\lambda_2m}-d-d_\mathrm{t}$): As depicted in Fig.~\ref{fig:eig2-hori}, by symmetry, there are two real closed-loop poles that yield the gain $\lambda_2$, which are also the poles of $\hat{z}_{2,{\mathrm{fs}}}(s)$. Obviously, of those two poles, the one closer to the imaginary-axis gives the slower decay rate, denoted as $\min{\alpha_2}$, which is the minimum decay rate among all $\hat{z}_{k,{\mathrm{fs}}}(s)$ for $k\in\mathcal{N}\setminus\{1\}$ as well, i.e., $\underline{\alpha}=\min{\alpha_2}$, since the poles of those $\hat{z}_{k}(s)$ must lie in region $\text{Re}(s)\leq-\min{\alpha_2}$ based on the locus. Thus, the key is to compute the decay rate $\alpha_2$ of the poles at the gain $\lambda_2$, which can be easily solved by \eqref{eq:K-FS} if one notices that the poles are simply $s=-\alpha_2$ in this case. That is, we have    \begin{align}\label{eq:K-FS-lambda-2}
\lambda_2=&\ m|-\alpha_2|\left|  -\alpha_2+\dfrac{d+d_\mathrm{b}+d_\mathrm{t}}{m}\right|\nonumber\\=&\ m\alpha_2\left(\dfrac{d+d_\mathrm{b}+d_\mathrm{t}}{m}-\alpha_2\right)\nonumber\\=&\ (d+d_\mathrm{b}+d_\mathrm{t}) \alpha_2-m \alpha_2^2\,,
\end{align}
 where the second equality eliminates the absolute values by using the fact that, for the poles $s=-\alpha_2$ to lie on the real-axis segment of the locus, it must hold that $0=|s_1|<\alpha_2<|s_2|=(d+d_\mathrm{b}+d_\mathrm{t})/m$. Now, $\alpha_2$ can be solved from the univariate quadratic equation obtained from \eqref{eq:K-FS-lambda-2}, i.e.,
\begin{align}\label{eq:alpha-2-quad}
m \alpha_2^2-(d+d_\mathrm{b}+d_\mathrm{t}) \alpha_2+\lambda_2=0\,,
\end{align}
whose discriminant $(d+d_\mathrm{b}+d_\mathrm{t})^2-4m\lambda_2>0 $ since \eqref{eq:osi-eig2} does not hold in this case. Hence, \eqref{eq:alpha-2-quad} has two distinct real roots 
\begin{align*}    \alpha_2=\dfrac{d+d_\mathrm{b}+d_\mathrm{t}\pm\sqrt{(d+d_\mathrm{b}+d_\mathrm{t})^2-4m\lambda_2}}{2m}\,,
\end{align*}
of which the smaller one gives the minimum decay rate, i.e.,
\begin{align}
    \underline{\alpha}=\min{\alpha_2}=&\dfrac{d+d_\mathrm{b}+d_\mathrm{t}-\sqrt{(d+d_\mathrm{b}+d_\mathrm{t})^2-4m\lambda_2}}{2m}\nonumber\\   <&\dfrac{d+d_\mathrm{b}+d_\mathrm{t}}{2m}=|\sigma_\mathrm{a}|\,.\label{eq:min-rate-h}
\end{align}
\end{enumerate}

\begin{figure}[t!]
\centering
\subfigure[$\lambda_2$ occurs on the asymptotes]
{\includegraphics[width=0.49\columnwidth]{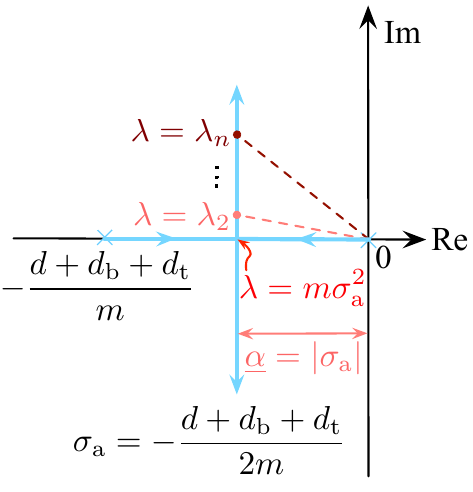}\label{fig:eig2-vertical}}
\subfigure[$\lambda_2$ occurs not on the asymptotes]
{\includegraphics[width=0.49\columnwidth]{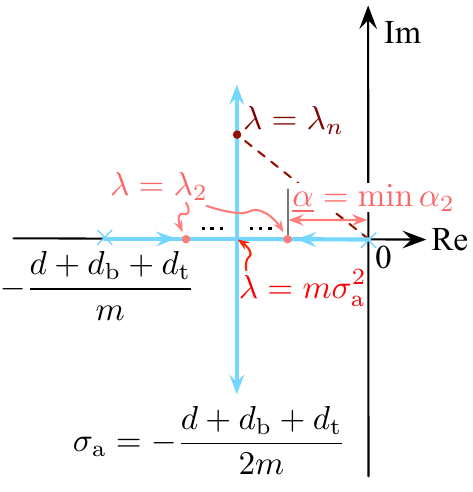}\label{fig:eig2-hori}}
\caption{Minimum decay rate $\underline{\alpha}$ determined by where $\lambda_2$ occurs.}
\label{fig:eig2}
\end{figure}
Therefore, we can combine \eqref{eq:min-rate-v} and \eqref{eq:min-rate-h} to yield \eqref{eq:min-decay-fs}. It is easy to confirm that both cases in \eqref{eq:min-decay-fs} may occur under assumption \eqref{eq:osi-eigmax} since the critical value in \eqref{eq:cri-d-damping} is naturally not less than the critical value in \eqref{eq:min-decay-fs}, i.e., $2\sqrt{\lambda_n m}-d-d_\mathrm{t}\geq 2\sqrt{\lambda_2m}-d-d_\mathrm{t}$.

Finally, we turn to the trivial scenario where $\lambda_n\leq m\sigma_\mathrm{a}^2$, i.e., $d_\mathrm{b}\geq 2\sqrt{\lambda_n m}-d-d_\mathrm{t}$. This, as mentioned before, characterizes the case where all gains from $\lambda_2$ to $\lambda_n$ occur along the real-axis segment as shown in Fig.~\ref{fig:rlocus-fs-maxeig-violate} so that power networks have no oscillatory behaviors. Observe from Fig.~\ref{fig:rlocus-fs-maxeig-violate} that $\overline{\psi}=\psi_k=0$ for any $k\in\mathcal{N}\setminus\{1\}$, which implies that $\cos\overline{\psi}=\cos0=1$. Notably, in this case, since the gain $\lambda_2$ occurs on the real-axis, the analysis yielding \eqref{eq:min-rate-h} still holds, which means that $\underline{\alpha}$ is given by the second case in \eqref{eq:min-decay-fs}. Again, it is easy to confirm that only the second case in \eqref{eq:min-decay-fs} may occur here since $2\sqrt{\lambda_n m}-d-d_\mathrm{t}\geq 2\sqrt{\lambda_2m}-d-d_\mathrm{t}$.

\subsection{Proof of Proposition~\ref{pro:sys-rate}}\label{app:sys-rate-pf}
Due to \eqref{eq:omega-t-decomple}, we begin by discussing frequency oscillations $\tilde {\boldsymbol{\omega}}_{\mathrm{fs}}(t)$ and COI frequency deviation $\bar{\omega}_{\mathrm{fs}}(t)$ separately.

It is well-known that the decay rate governs the exponential decay envelope of the transient response of a linear system~\cite[Module 5]{Driels1996linear}. Particularly, for an explicit expression of the step response of a linear two-pole system, analogous to $\hat{z}_{k,{\mathrm{fs}}}(s)$ for $k\in\mathcal{N}\setminus\{1\}$, we refer to the proof of~\cite[Lemma 3]{JIANG2024epsr}, where the role of decay rate is evident. Thus, by Theorem~\ref{thm:osc-limit-fs}, $\forall k\in\mathcal{N}\setminus\{1\}$, each $z_{\mathrm{u},k,{\mathrm{fs}}}(t)$ decays exponentially to $0$ with a rate of at least $\underline{\alpha}$ given by \eqref{eq:min-decay-fs}, i.e., 
\begin{align}\label{eq:rate-zuk}
    |z_{\mathrm{u},k,{\mathrm{fs}}}(t)|\leq \beta_k e^{-\underline{\alpha}t}\,,\qquad \forall t \geq 0,k\in\mathcal{N}\setminus\{1\}\,,
\end{align}
for some constant $\beta_k>0$.
Now, note from \eqref{eq:w-til-t} that 
\begin{align}\label{tilde-linear}
    \tilde {\boldsymbol{\omega}}_{\mathrm{fs}}(t)=\sum_{k=2}^n z_{\mathrm{u},k,{\mathrm{fs}}}(t)\boldsymbol{\mu}_k
\end{align}
with $\boldsymbol{\mu}_k:=\boldsymbol{R}^{-\frac{1}{2}} 
    \boldsymbol{v}_k\boldsymbol{v}_k^T\boldsymbol{R}^{-\frac{1}{2}}\boldsymbol{u}_0$ to streamline the notation. Thus, it follows from \eqref{tilde-linear} that
\begin{align}\label{tilde-linear-bound}
    \|\tilde {\boldsymbol{\omega}}_{\mathrm{fs}}(t)\|\leq&\sum_{k=2}^n |z_{\mathrm{u},k,{\mathrm{fs}}}(t)|\|\boldsymbol{\mu}_k\|\leq\sum_{k=2}^n \beta_k e^{-\underline{\alpha}t}\|\boldsymbol{\mu}_k\|\nonumber\\=&\left(\sum_{k=2}^n \beta_k \|\boldsymbol{\mu}_k\|\right)e^{-\underline{\alpha}t}=\delta_1 e^{-\underline{\alpha}t}
\end{align}
with $\delta_1:=\sum_{k=2}^n \beta_k \|\boldsymbol{\mu}_k\|$, where the first inequality is due to the triangle inequality and the second inequality uses \eqref{eq:rate-zuk}. Therefore, $\tilde {\boldsymbol{\omega}}_{\mathrm{fs}}(t)$ decays exponentially to zero with a rate of at least $\underline{\alpha}$ as well. This concludes the proof of the statement on $\tilde {\boldsymbol{\omega}}_{\mathrm{fs}}(t)$.

As for the main mode, note that $\hat{z}_{1,{\mathrm{fs}}}(s)$ in \eqref{eq:z1fs} has only one pole at $-(d+d_\mathrm{b}+d_\mathrm{t})/m$, which actually lies at the leftmost position on the root locus in Fig.~\ref{fig:rlocus-fs}. Thus, $z_{\mathrm{u},1,{\mathrm{fs}}}(t)$ must achieve a larger decay rate $(d+d_\mathrm{b}+d_\mathrm{t})/m$ than any $z_{\mathrm{u},k,{\mathrm{fs}}}(t)$ with $k\in\mathcal{N}\setminus\{1\}$, which ensures that $(d+d_\mathrm{b}+d_\mathrm{t})/m>\underline{\alpha}$. This can be easily confirmed by examining the explicit expression of $z_{\mathrm{u},1,{\mathrm{fs}}}(t)$, i.e., 
\begin{align}
z_{\mathrm{u},1,{\mathrm{fs}}}(t)\!:=& \mathscr{L}^{-1}\!\left\{\dfrac{\hat{z}_{1,{\mathrm{fs}}}(s)}{s}\right\} \!\!=\!\mathscr{L}^{-1}\!\left\{\dfrac{1}{s\left(m s + d+d_\mathrm{b} + d_{\mathrm{t}}\right)}\right\} \nonumber\\=&\mathscr{L}^{-1}\!\left\{\dfrac{1}{d+d_\mathrm{b} + d_{\mathrm{t}}}\!\left[\dfrac{1}{s}\!-\!\frac{1}{s+(d+d_\mathrm{b} + d_{\mathrm{t}})/m}\right]\right\}\nonumber\\=&\dfrac{1}{d+d_\mathrm{b} + d_{\mathrm{t}}}\left(1-e^{-\frac{d+d_\mathrm{b} + d_{\mathrm{t}}}{m}t}\right)\,,\label{eq:zu1fs}
\end{align}
which converges to the steady-state value $1/(d+d_\mathrm{b} + d_{\mathrm{t}})$ exponentially fast with a rate $(d+d_\mathrm{b}+d_\mathrm{t})/m$. According to \eqref{eq:w-bar-t}, substituting \eqref{eq:zu1fs} to 
\begin{align*}
    \bar{\omega}_{\mathrm{fs}}(t):=\dfrac{\sum_{i=1}^n u_{0,i}}{\sum_{i=1}^n r_i}z_{\mathrm{u},1,{\mathrm{fs}}}(t)
\end{align*}
yields \eqref{eq:coi-fs-inf} and \eqref{eq:coi-fs-t}, which concludes the proof of the statement on $\bar{\omega}_{\mathrm{fs}}(t)$.

Now, we are ready to characterize the convergence rate of frequency deviations $\boldsymbol{\omega}_{\mathrm{fs}}(t)$ of the system as
\begin{align}
    \|\boldsymbol{\omega}_{\mathrm{fs}}(t)-\bar{\omega}_{\mathrm{fs}}(\infty)\mathbbold{1}_n\|\!=&\ \!\|\bar{\omega}_{\mathrm{fs}}(t)\mathbbold{1}_n+\tilde {\boldsymbol{\omega}}_{\mathrm{fs}}(t)-\bar{\omega}_{\mathrm{fs}}(\infty)\mathbbold{1}_n\|\nonumber\\=&\ \!\|\left(\bar{\omega}_{\mathrm{fs}}(t)-\bar{\omega}_{\mathrm{fs}}(\infty)\right)\mathbbold{1}_n+\tilde {\boldsymbol{\omega}}_{\mathrm{fs}}(t)\|\nonumber\\
    \stackrel{\circled{1}}{=}&\ \!\|-\bar{\omega}_{\mathrm{fs}}(\infty)e^{-\frac{d+d_\mathrm{b} + d_{\mathrm{t}}}{m}t}\mathbbold{1}_n+\tilde {\boldsymbol{\omega}}_{\mathrm{fs}}(t)\|\nonumber\\
    \stackrel{\circled{2}}{\leq}&\ \!\|-\bar{\omega}_{\mathrm{fs}}(\infty)e^{-\frac{d+d_\mathrm{b} + d_{\mathrm{t}}}{m}t}\mathbbold{1}_n\|\!+\!\|\tilde {\boldsymbol{\omega}}_{\mathrm{fs}}(t)\|\nonumber
    \\
    \stackrel{\circled{3}}{\leq}&\ \!\|\bar{\omega}_{\mathrm{fs}}(\infty)\mathbbold{1}_n\|e^{-\frac{d+d_\mathrm{b} + d_{\mathrm{t}}}{m}t}+\delta_1 e^{-\underline{\alpha}t}\nonumber
    \\
    \stackrel{\circled{4}}{<}&\ \!\|\bar{\omega}_{\mathrm{fs}}(\infty)\mathbbold{1}_n\|e^{-\underline{\alpha}t}+\delta_1 e^{-\underline{\alpha}t}\nonumber
    \\
    =&\left(\|\bar{\omega}_{\mathrm{fs}}(\infty)\mathbbold{1}_n\|+\delta_1 \right)e^{-\underline{\alpha}t}\,,\nonumber
\end{align}
where \circled{1} uses \eqref{eq:coi-fs-t}, \circled{2} is due to the triangle inequality, \circled{3} uses \eqref{tilde-linear-bound}, \circled{4} is due to the fact that $(d+d_\mathrm{b}+d_\mathrm{t})/m>\underline{\alpha}$. This concludes the proof of \eqref{eq:exp-wfs} with $\delta:=\|\bar{\omega}_{\mathrm{fs}}(\infty)\mathbbold{1}_n\|+\delta_1$.

\subsection{Proof of Lemma~\ref{lem:rlocus-VI}} \label{app:lem-rlocus-VI}
Applying \eqref{eq:go} and $\hat{c}_\mathrm{o}(s)=\hat{c}_{\mathrm{vi}}(s)$ in \eqref{eq:dy-vi} to \eqref{eq:zk-s} for $k=1$ shows that \begin{align}\label{eq:z1vi}
    \hat{z}_{1,{\mathrm{vi}}}(s)=&\ \dfrac{ \tau s + 1}{(m\!+\!m_{\mathrm{v}})\tau s^2 \!+\! \left[m\!+\!m_{\mathrm{v}} \!+\! (d\!+\!d_\mathrm{b}) \tau \right] s \!+\! d\!+\!d_\mathrm{b} \!+\! d_{\mathrm{t}}}\nonumber\\=&\ \dfrac{s + \tau^{-1}}{(m+m_{\mathrm{v}})\left( s^2 + 2\xi\omega_\mathrm{n} s + \omega_\mathrm{n}^2 \right)}
\end{align}
with $\omega_\mathrm{n}$ and $\xi$ given by \eqref{eq:ex-xi-wn}, which directly gives $\hat{L}_{\mathrm{vi}}(s)$ in \eqref{eq:Loopgain-vi}. Now, the root locus of the system with the loop-gain $\hat{L}_{\mathrm{vi}}(s)$ can be sketched using the standard rules~\cite[Module 10]{Driels1996linear}, which requires us to first investigate the open-loop poles and zeros. 
Clearly, \eqref{eq:Loopgain-vi} has an open-loop zero at $-\tau^{-1}$, an open-loop pole at $s_1=0$, and two other open-loop poles coming from roots of $s^2 + 2\xi\omega_\mathrm{n} s + \omega_\mathrm{n}^2=0$. Specifically, for Nadir elimination tuning where $(d_{\mathrm{b}}, m_{\mathrm{v}})$ satisfies \eqref{eq:mvmin}, we have either $\xi=1$ or $\xi>1$ by the proof of~\cite[Theorem 4]{jiang2021tac} as well as $\xi\omega_\mathrm{n} \leq \tau^{-1}$ in both cases by the proof of~\cite[Theorem 6]{jiang2021tac}. We now discuss these two cases separately borrowing some preliminary analysis in~\cite{jiang2021tac,jiangtps2021}:
\begin{enumerate}
    \item $\xi=1$: There are two repeated negative real roots, i.e.,
\begin{align*}
    s_2=s_3=-\omega_\mathrm{n}=-\sqrt{\cfrac{d+d_{\mathrm{b}} +d_{\mathrm{t}}}{(m+m_{\mathrm{v}})\tau}}>-\tau^{-1}\,,
\end{align*}
where the inequality is due to $\omega_\mathrm{n} =\xi\omega_\mathrm{n} \leq \tau^{-1} $ for $\xi=1$ and the possibility that $\omega_\mathrm{n} = \tau^{-1}$ can be ruled out by contradiction. To see this, we note that $\omega_\mathrm{n} = \tau^{-1}$ implies that $m+m_{\mathrm{v}}=\tau(d+d_{\mathrm{b}} +d_{\mathrm{t}})<\tau\left(\!\sqrt{d_\mathrm{t}}+\sqrt{d+d_\mathrm{t}+d_{\mathrm{b}}}\right)^2$ via simple algebraic manipulation, which contradicts the Nadir elimination condition \eqref{eq:mvmin}. Then, the real-axis segments for $-\tau^{-1}\leq s\leq s_2=s_3$ and $s_2=s_3\leq s\leq s_1$ form part of the locus since they are to the left of an odd number of poles or zeros. Particularly, while one branch goes from one of the repeated open-loop poles to the zero at $-\tau^{-1}$ along a real-axis segment as $\lambda$ increases, two branches of the locus go from the other open-loop poles to infinity since $N_\mathrm{p}-N_\mathrm{z}=3-1=2$.
To determine how the poles go to infinity as $\lambda$ increases, we next compute the center and orientation of the asymptotes. Precisely, the asymptotes intersect the real-axis at
\begin{align}\label{eq:sigma-vi-cri}
    \sigma_\mathrm{a} 
    =&\ \frac{\text{Re}(s_1)+\text{Re}(s_2)+\text{Re}(s_3)-\text{Re}(-\tau^{-1})}{N_\mathrm{p}-N_\mathrm{z}}\nonumber\\
    =&\  \frac{0+(-\omega_\mathrm{n})+(-\omega_\mathrm{n})-(-\tau^{-1})}{2}
    \nonumber\\=&\ \dfrac{\tau^{-1}}{2}-\omega_\mathrm{n}>-\omega_\mathrm{n}
\end{align}
and incline to the positive real-axis at
\begin{align}\label{eq:phase-sigma}
\dfrac{180\degree(1+2l)}{N_\mathrm{p}-N_\mathrm{z}}\quad\text{with }l=0,1,\ldots, (N_\mathrm{p}-N_\mathrm{z}-1) \,, 
\end{align}
which yields $90\degree$ and $270\degree$ here for $N_\mathrm{p}-N_\mathrm{z}=2$. To better understand $\sigma_\mathrm{a}$, we can further express it in terms of parameters by applying relations in \eqref{eq:ex-xi-wn} to \eqref{eq:sigma-vi-cri}, i.e.,
\begin{align}\label{eq:sigma-vi-cri0}
 \sigma_\mathrm{a} =&\ \dfrac{\tau^{-1}}{2}\!-\dfrac{\tau^{-1}\!+(d+d_{\mathrm{b}})/(m+m_{\mathrm{v}})}{2\xi}\nonumber\\=&\ \dfrac{\tau^{-1}}{2}\!-\dfrac{\tau^{-1}\!+(d+d_{\mathrm{b}})/(m+m_{\mathrm{v}})}{2}\nonumber\\=&-\dfrac{d+d_{\mathrm{b}}}{2(m+m_{\mathrm{v}})}\!<0\,,
\end{align}
where the second equality uses $\xi=1$. Combining \eqref{eq:sigma-vi-cri} and \eqref{eq:sigma-vi-cri0}, we know that $-\omega_\mathrm{n}<\sigma_\mathrm{a}<0$. Now, we are able to sketch real-axis segments and asymptotes of locus as in Fig.~\ref{fig:rlocus-vi-cri}. Based on our analysis before, one branch goes from a pole at $-\omega_\mathrm{n}$ to the zero at $-\tau^{-1}$ along the real-axis, while two branches originate from the other two poles, approach each other along the real-axis until a break point, and branch off to infinity along asymptotes. Although it is difficult to analytically solve the break point in this scenario due to the order of $\hat{L}_{\mathrm{vi}}(s)$, we still know the branches must approach the asymptotes from the left and thus the break point is naturally to the left of $\sigma_\mathrm{a}$ as sketched in Fig.~\ref{fig:rlocus-vi-cri}. We now illustrate why the branches approach the asymptotes from the left. By~\cite[Module 10]{Driels1996linear}, the sum of the real parts of the closed-loop poles on locus is constant when $N_\mathrm{p}-N_\mathrm{z}=2$, independent of $\lambda$. Thus, given that a closed-loop pole moves left to the zero, the other two poles after breaking away from the real-axis must move right toward asymptotes to keep the sum of real parts of closed-loop poles constant.

\item $\xi>1$: There are two distinct negative real roots, i.e.,
\begin{align*}
    s_2 = -\omega_\mathrm{n}\!\left(\xi- \!\sqrt{\xi^2-\!1}\right)>s_3 =-\omega_\mathrm{n}\!\left(\xi+ \!\sqrt{\xi^2-\!1}\right)\,.
\end{align*}
To determine the relative position of the zero $-\tau^{-1}$ with respect to the poles on the real-axis, we need to analyze the sign of the following difference 
\begin{align}\label{eq:diff-zero-pole}
   \!\! \tau^{-1}\!-\omega_\mathrm{n}\!\left(\xi\!+ \!\sqrt{\xi^2\!-\!1}\right)\!=\!\left(\tau^{-1}\!-\!\xi\omega_\mathrm{n}\right)\!-\!\omega_\mathrm{n}\sqrt{\xi^2\!-\!1}\,.
\end{align}
We next show that \eqref{eq:diff-zero-pole} is positive and thus $-\tau^{-1}$ lies to the left of $s_3$ by examining the value of $\omega_\mathrm{n}\sqrt{\xi^2-\!1}$. First, since $\omega_\mathrm{n}>0$ and $\xi>1$, it must hold that $\omega_\mathrm{n}\sqrt{\xi^2-\!1}>0$.
Moreover, we can use \eqref{eq:ex-xi-wn} to rewrite $\omega_\mathrm{n}\sqrt{\xi^2-\!1}$ as 
\begin{align}\label{eq:bound-secondpart}
    &\omega_\mathrm{n}\sqrt{\xi^2-\!1}\nonumber\\&=\sqrt{(\xi\omega_\mathrm{n})^2-\!\omega_\mathrm{n}^2}\nonumber\\&=\sqrt{\left(\dfrac{\tau^{-1}+(d+d_{\mathrm{b}})/(m+m_{\mathrm{v}})}{2}\right)^2\!-\cfrac{d+d_{\mathrm{b}} +d_{\mathrm{t}}}{(m+m_{\mathrm{v}})\tau}}\nonumber
    \\&=\sqrt{\left(\dfrac{\tau^{-1}-(d+d_{\mathrm{b}})/(m+m_{\mathrm{v}})}{2}\right)^2\!-\!\cfrac{d_{\mathrm{t}}}{(m+m_{\mathrm{v}})\tau}}\nonumber\\&=\sqrt{\left(\tau^{-1}\!-\!\dfrac{\tau^{-1}+(d+d_{\mathrm{b}})/(m+m_{\mathrm{v}})}{2}\right)^2\!\!-\!\cfrac{d_{\mathrm{t}}}{(m+m_{\mathrm{v}})\tau}}
    \nonumber\\&=\sqrt{\left(\tau^{-1}-\xi\omega_\mathrm{n}\right)^2-\!\cfrac{d_{\mathrm{t}}}{(m+m_{\mathrm{v}})\tau}}<\tau^{-1}-\xi\omega_\mathrm{n}\,,
\end{align}
which implies that \eqref{eq:diff-zero-pole} is positive, i.e., $\tau^{-1}+s_3>0$. This shows that the zero is to the left of $s_3$ since $-\tau^{-1}<s_3$.
Then, the real-axis segments for $-\tau^{-1}\leq s\leq s_3$ and $s_2\leq s\leq s_1$ form part of the locus since they are to the left of an odd number of poles or zeros. Particularly, while one branch goes from the open-loop pole $s_3$ to the zero at $-\tau^{-1}$ along a real-axis segment as $\lambda$ increases, two branches of the locus go from the other open-loop poles to infinity since $N_\mathrm{p}-N_\mathrm{z}=3-1=2$. Similarly, the asymptotes intersect the real-axis at
\begin{align}\label{eq:sigma-vi-over}
    \sigma_\mathrm{a} 
    =&\ \frac{\text{Re}(s_1)+\text{Re}(s_2)+\text{Re}(s_3)-\text{Re}(-\tau^{-1})}{N_\mathrm{p}-N_\mathrm{z}}\nonumber\\
    =&\  \frac{0\!-\omega_\mathrm{n}\!\left(\xi- \!\sqrt{\xi^2-\!1}\right)\!-\omega_\mathrm{n}\!\left(\xi\!+ \!\sqrt{\xi^2-\!1}\right)\!-\!(-\tau^{-1})}{2}
    \nonumber\\=&\ \dfrac{\tau^{-1}}{2}-\xi\omega_\mathrm{n}
\end{align}
and incline to the positive real-axis at $90\degree$ and $270\degree$ due to \eqref{eq:phase-sigma}. Again, we can express $\sigma_\mathrm{a}$ in terms of parameters by applying \eqref{eq:ex-xi-wn} to \eqref{eq:sigma-vi-over}, i.e.,
\begin{align}\label{eq:sigma-vi-over0}
 \sigma_\mathrm{a} =&\ \dfrac{\tau^{-1}}{2}\!-\dfrac{\tau^{-1}\!+(d+d_{\mathrm{b}})/(m+m_{\mathrm{v}})}{2}\nonumber\\=&-\dfrac{d+d_{\mathrm{b}}}{2(m+m_{\mathrm{v}})}\!<0\,.
\end{align}
To determine the relative position of $\sigma_\mathrm{a}$ with respect to the pole $s_2$, we analyze the sign of the following difference
\begin{align}\label{eq:sigma-vi-over-s2}
    \sigma_\mathrm{a}-s_2\stackrel{\circled{1}}{=}\ &\dfrac{\tau^{-1}}{2}-\xi\omega_\mathrm{n}-\left[-\omega_\mathrm{n}\!\left(\xi- \!\sqrt{\xi^2-\!1}\right)\right]\nonumber\\=\ &\dfrac{\tau^{-1}}{2}-\omega_\mathrm{n}\sqrt{\xi^2-\!1}\nonumber\\\stackrel{\circled{2}}{>}\ &\dfrac{\tau^{-1}}{2}-\left(\tau^{-1}-\xi\omega_\mathrm{n}\right)\nonumber\\=\ &\xi\omega_\mathrm{n}-\dfrac{\tau^{-1}}{2}\stackrel{\circled{3}}{=}-\sigma_\mathrm{a}>0\,,
\end{align}
where \circled{1} uses \eqref{eq:sigma-vi-over}, \circled{2} is due to the inequality in \eqref{eq:bound-secondpart}, and \circled{3} uses \eqref{eq:sigma-vi-over} again. From \eqref{eq:sigma-vi-over0} and \eqref{eq:sigma-vi-over-s2}, we know that
$s_2<s_2/2<\sigma_\mathrm{a}<0$. Now, we are able to sketch real-axis segments and asymptotes of locus as in Fig.~\ref{fig:rlocus-vi-over}, where the remaining locus are also sketched following a similar argument as in the $\xi=1$ case.
\end{enumerate}

\subsection{Proof of Proposition~\ref{pro:sys-rate-vi}}\label{app:sys-rate-vi-pf}
Due to \eqref{eq:omega-t-decomple}, we discuss frequency oscillations $\tilde {\boldsymbol{\omega}}_{\mathrm{vi}}(t)$ and COI frequency deviation $\bar{\omega}_{\mathrm{vi}}(t)$ separately.    

To bound decay rate of $\tilde {\boldsymbol{\omega}}_{\mathrm{vi}}(t)$, we investigate $\hat{z}_{k,{\mathrm{vi}}}(s)$ for $k\in\mathcal{N}\setminus\{1\}$. Applying \eqref{eq:go} and $\hat{c}_\mathrm{o}(s)=\hat{c}_{\mathrm{vi}}(s)$ in \eqref{eq:dy-vi} to \eqref{eq:zk-s} shows that 
\begin{align}\label{eq:zkvi}
\hat{z}_{k,{\mathrm{vi}}}(s)=\dfrac{s(s + \tau^{-1})}{(m\!+\!m_{\mathrm{v}})s\left( s^2 \!+\! 2\xi\omega_\mathrm{n} s \!+\! \omega_\mathrm{n}^2 \right)\!+\!\lambda_k(s \!+\! \tau^{-1})}   
\end{align}
with $\omega_\mathrm{n}$ and $\xi$ given by \eqref{eq:ex-xi-wn}. Applying the final-value theorem to the unit-step response of \eqref{eq:zkvi} yields
\begin{align*}
    z_{\mathrm{u},k,{\mathrm{vi}}}(\infty)\!:= \lim_{s\to 0}s\dfrac{\hat{z}_{k,{\mathrm{vi}}}(s)}{s}=\hat{z}_{k,{\mathrm{vi}}}(0)=0\,,
\end{align*}
which shows that each $z_{\mathrm{u},k,{\mathrm{vi}}}(t)$ eventually decays to $0$. Although, due to the high-order of each $\hat{z}_{k,{\mathrm{vi}}}(s)$, it is too cumbersome to provide an explicit expression for $z_{\mathrm{u},k,{\mathrm{vi}}}(t)$, the fact that the exponential envelope of $z_{\mathrm{u},k,{\mathrm{vi}}}(t)$ decays with a rate determined by pole locations of $\hat{z}_{k,{\mathrm{vi}}}(s)$ remains true. Thus, we now study the decay rate through Fig.~\ref{fig:rlocus-vi}. Observe from Fig.~\ref{fig:rlocus-vi-cri} and Fig.~\ref{fig:rlocus-vi-over} that, $\forall k\in\mathcal{N}\setminus\{1\}$, each $\hat{z}_{k,{\mathrm{vi}}}(s)$ must have two poles located within the vertical strip $-\omega_\mathrm{n}<\text{Re}(s)<0$ if $\xi=1$ and $-\omega_\mathrm{n}(\xi- \!\sqrt{\xi^2-\!1})<\text{Re}(s)<0$ if $\xi>1$. We claim that
    \begin{align}\label{eq:compare-wn-mpole}
     \omega_\mathrm{n}>\omega_\mathrm{n}\left(\xi\!-\!\sqrt{\xi^2\!-\!1}\right)\,,\qquad\forall \xi>1\,, 
    \end{align}
    Thus, $\forall k\in\mathcal{N}\setminus\{1\}$, each $\hat{z}_{k,{\mathrm{vi}}}(s)$ has two poles located within the vertical strip $-\omega_\mathrm{n}<\text{Re}(s)<0$ for sure, regardless of whether $\xi=1$ or $\xi>1$. To see why \eqref{eq:compare-wn-mpole} is the case, we first note that $\xi\!+ \!\sqrt{\xi^2\!-\!1}>\xi>1$
    holds trivially, which implies that 
    \begin{align}
        1>&\ \dfrac{1}{\xi\!+\!\sqrt{\xi^2\!-\!1}}=\dfrac{\xi\!-\!\sqrt{\xi^2\!-\!1}}{\left(\xi\!+\!\sqrt{\xi^2\!-\!1}\right)\left(\xi\!-\!\sqrt{\xi^2\!-\!1}\right)}\nonumber\\=&\ \xi\!-\!\sqrt{\xi^2\!-\!1}\,.\label{eq:xi-nega-1}
    \end{align}
    Since $\omega_\mathrm{n}>0$, \eqref{eq:compare-wn-mpole} follows from \eqref{eq:xi-nega-1} directly. Therefore, $\forall k\in\mathcal{N}\setminus\{1\}$, each $z_{\mathrm{u},k,{\mathrm{vi}}}(t)$ must decay exponentially to $0$ with a rate smaller than $\omega_\mathrm{n}$, i.e., 
\begin{align*}
    |z_{\mathrm{u},k,{\mathrm{vi}}}(t)|\leq \gamma_k e^{-\rho_1 t}\,,\qquad \forall t \geq 0,k\in\mathcal{N}\setminus\{1\}\,,
\end{align*}
for some constant $\gamma_k>0$ and $0<\rho_1<\omega_\mathrm{n}$, while
\begin{align}
    \forall \kappa>\omega_\mathrm{n}\,,\quad\limsup_{t\to\infty} |z_{\mathrm{u},k,{\mathrm{vi}}}(t)|e^{\kappa t}=\infty\,.
\end{align}
Then, it follows from a similar argument as \eqref{tilde-linear-bound} in the proof of Proposition~\ref{pro:sys-rate} that $\tilde {\boldsymbol{\omega}}_{\mathrm{vi}}(t)=\sum_{k=2}^n z_{\mathrm{u},k,{\mathrm{vi}}}(t)\boldsymbol{\mu}_k$ decays exponentially to zero with a rate smaller than $\omega_\mathrm{n}$, i.e.,
\begin{align}\label{eq:tilde-linear-bound-vi}
    \|\tilde {\boldsymbol{\omega}}_{\mathrm{vi}}(t)\|\leq\sigma_1 e^{-\rho_1 t}\,,\qquad\exists \rho_1\in(0,\omega_\mathrm{n})\,,
\end{align}
with $\sigma_1:=\sum_{k=2}^n \gamma_k \|\boldsymbol{\mu}_k\|$. 
 
As for the main mode, note that $\hat{z}_{1,{\mathrm{vi}}}(s)$ in \eqref{eq:z1vi} has either two repeated poles at $-\omega_\mathrm{n}$ if $\xi=1$ or two distinct poles at $-\omega_\mathrm{n}(\xi\pm \!\sqrt{\xi^2-\!1})$ if $\xi>1$, which, combined with \eqref{eq:compare-wn-mpole}, indicates that at least one pole of $\hat{z}_{1,{\mathrm{vi}}}(s)$ is within the vertical strip $-\omega_\mathrm{n}\leq\text{Re}(s)<0$ in Fig.~\ref{fig:rlocus-vi}. Thus, $z_{\mathrm{u},1,{\mathrm{vi}}}(t)$ must exhibit a decay rate not exceeding $\omega_\mathrm{n}$. This can be easily confirmed by examining the explicit expression of $z_{\mathrm{u},1,{\mathrm{vi}}}(t):=\mathscr{L}^{-1}\{\hat{z}_{1,{\mathrm{vi}}}(s)/s\}$ which can be directly inferred from the unit-step response of a
second-order system with a zero provided in the proof of~\cite[Theorem 4]{jiang2021tac}. According to the step response therein, $z_{\mathrm{u},1,{\mathrm{vi}}}(t)$ converges to the steady-state value
\begin{align*}
    z_{\mathrm{u},1,{\mathrm{vi}}}(\infty)=\dfrac{\tau^{-1}}{(m\!+\!m_{\mathrm{v}})\omega_\mathrm{n}^2}
    =\dfrac{1}{d+d_{\mathrm{b}} +d_{\mathrm{t}}}
\end{align*}
exponentially fast with a rate not exceeding $\omega_\mathrm{n}$, where the second equality uses \eqref{eq:wn}. Thus, from \eqref{eq:w-bar-t}, we know that the COI frequency deviation
\begin{align*}
    \bar{\omega}_{\mathrm{vi}}(t):=\dfrac{\sum_{i=1}^n u_{0,i}}{\sum_{i=1}^n r_i}z_{\mathrm{u},1,{\mathrm{vi}}}(t)
\end{align*}
converges to the steady-state value $\bar{\omega}_{\mathrm{vi}}(\infty)$ in \eqref{eq:coi-vi-inf} exponentially fast with a rate not exceeding $\omega_\mathrm{n}$ as well, i.e.,
\begin{align}\label{eq:osci-vi-bound}
    | \bar{\omega}_{\mathrm{vi}}(t)-\bar{\omega}_{\mathrm{vi}}(\infty)|\leq\sigma_2 e^{-\rho_2 t}
\end{align}
for some constant $\sigma_2>0$ and $0<\rho_2\leq\omega_\mathrm{n}$.

Now, we are ready to characterize the convergence rate of frequency deviations $\boldsymbol{\omega}_{\mathrm{vi}}(t)$ of the system as
\begin{align}
    \|\boldsymbol{\omega}_{\mathrm{vi}}(t)-\bar{\omega}_{\mathrm{vi}}(\infty)\mathbbold{1}_n\|\!=&\ \!\|\bar{\omega}_{\mathrm{vi}}(t)\mathbbold{1}_n+\tilde {\boldsymbol{\omega}}_{\mathrm{vi}}(t)-\bar{\omega}_{\mathrm{vi}}(\infty)\mathbbold{1}_n\|\nonumber\\=&\ \!\|\left(\bar{\omega}_{\mathrm{vi}}(t)-\bar{\omega}_{\mathrm{vi}}(\infty)\right)\mathbbold{1}_n+\tilde {\boldsymbol{\omega}}_{\mathrm{vi}}(t)\|\nonumber\\
    \stackrel{\circled{1}}{\leq}&\ \!| \bar{\omega}_{\mathrm{vi}}(t)-\bar{\omega}_{\mathrm{vi}}(\infty)|\|\mathbbold{1}_n\|\!+\!\|\tilde {\boldsymbol{\omega}}_{\mathrm{vi}}(t)\|\nonumber
    \\
    \stackrel{\circled{2}}{\leq}&\ \!\sigma_2 e^{-\rho_2 t}\|\mathbbold{1}_n\| +\sigma_1 e^{-\rho_1 t}\nonumber
    \\
    \leq&\ \!\left(\sigma_2 \|\mathbbold{1}_n\|+\sigma_1 \right)e^{-\min(\rho_1,\rho_2) t}\,,\nonumber
\end{align}
where \circled{1} is due to the triangle inequality and \circled{2} uses \eqref{eq:osci-vi-bound} and \eqref{eq:tilde-linear-bound-vi}. This concludes the proof of \eqref{eq:exp-wvi} with $\sigma:=\sigma_2 \|\mathbbold{1}_n\|+\sigma_1$ and $\rho:=\min(\rho_1,\rho_2)$.


\section*{Acknowledgment}
The authors would like to acknowledge and thank Steven Low, Fernando Paganini, and Enrique Mallada for their insightful comments that helped improve this manuscript.


\bibliographystyle{IEEEtran}
\bibliography{main}

\end{document}